\renewenvironment{leftbar}[1][\hsize]
{%
    \MakeFramed{\hsize#1\advance\hsize-\width\FrameRestore}%
}
{\endMakeFramed}
\declaretheorem[numberwithin=section]{theorem}
\declaretheorem[sibling=theorem]{lemma}
\declaretheorem[sibling=theorem]{claim}
\newtheorem{definition}[theorem]{Definition}
\newtheorem{observation}[theorem]{Observation}
\crefname{claim}{claim}{claims}
\renewenvironment{leftbar}[1][\hsize]
{%
    \MakeFramed{\hsize#1\advance\hsize-\width\FrameRestore}%
}
{\endMakeFramed}
\newcommand{\agnote}[1]{\todo[color=blue!25!white]{AG: #1}\xspace}
\newcommand{\elnote}[1]{\todo[color=green!25!white]{EL: #1}\xspace}
\newcommand{\alert}[1]{{\color{red}#1}\xspace}
\algrenewcommand\algorithmiccomment[2][\normalsize]{{#1\hfill\(\triangleright\) \emph{#2}}}
\newcommand{\ignore}[1]{}
\newcommand{\eps}{\varepsilon}
\newcommand{\poly}{\mathrm{poly}}
\newcommand{\calP}{{\cal P}}
\newcommand{\calS}{{\cal S}}
\newcommand{\calC}{{\cal C}}
\newcommand{\br}{\boldsymbol{\mathrm{r}}}
\newcommand{\initOneLiners}{%
    \setlength{\itemsep}{0pt}
    \setlength{\parsep }{0pt}
    \setlength{\topsep }{0pt}
}
\newenvironment{OneLiners}[1][\ensuremath{\bullet}]
    {\begin{list}
        {#1}
        {\initOneLiners}}
    {\end{list}}
\newcommand{\Opt}{\ensuremath{\mathsf{Opt}}\xspace}
\newcommand{\sse}{\subseteq}
\newcommand{\kcut}{$k$\textsc{-Cut}\xspace} 
\newcommand{\mincut}{\mathsf{Mincut}} 
\newcommand{\mintwocut}{\mathsf{Mincut}} 
\newcommand{\minfourcut}{\text{\sf{Min-4-cut}}}
\newcommand{\opt}{\mathsf{Opt}} 
\newcommand{\Time}{\mathsf{Time}} 
\newcommand{\Laminarcut}[2]{\textsc{Laminar }#1\textsc{-cut}(#2)} 
\newcommand{\Laminarkcut}[1]{\textsc{Laminar }$#1$\textsc{-cut}} 
\newcommand{\PartialVC}[1]{\textsc{Partial }#1\textsc{-VC}} 
\newcommand{\Mincut}{\text{Mincut}} 
\newcommand{\Minfourcut}{\text{Min-4-cut}}
\newcommand{\Main}{\text{Main}} 
\newcommand{\Complete}{\text{Complete}} 
\newcommand{\Guess}{\text{Guess}}
\newcommand{\Laminar}{\text{Laminar}} 
\newcommand{\Record}{\text{Record}}
\newcommand{\kl}{\mathfrak{a}}
\newcommand{\kr}{\mathfrak{b}}
\newcommand{\Sr}{S^*_{\geq \mathfrak{b}}}
\newcommand{\nf}[2]{\nicefrac{#1}{#2}}
\newcommand{\As}[1]{\textbf{(A#1)}}
\newcommand{\pvc}{\textsc{Partial VC}\xspace}
\newcommand{\pvclong}{\textsc{Minimum Partial Vertex Cover}\xspace}
\newcommand{\inducedG}[1]{G[#1]}
\newcommand{\e}{\varepsilon}
\newcommand{\mT}{\mathcal T}
\newcommand{\mS}{\mathcal S}
\newcommand{\mB}{\mathcal B}
\newcommand{\solns}{\mathcal O}
\newcommand{\argmin}{\operatornamewithlimits{argmin}}
\newcommand{\desc}{\ensuremath{\mathrm{desc}}}
\newcommand{\anc}{\ensuremath{\mathrm{anc}}}
\newcommand{\children}{\ensuremath{\mathrm{children}}}
\newcommand{\subtree}{\ensuremath{\mathrm{subtree}}}
\newcommand{\saved}{\mathsf{Saved}} 
\newcommand{\minkut}{{\mu}}
\algnewcommand{\IIf}[1]{\State\algorithmicif\ #1\ \algorithmicthen}
\algnewcommand{\EndIIf}{\unskip\ \algorithmicend\ \algorithmicif}
\begin{document}

\title{{\bf An FPT Algorithm Beating 2-Approximation for $k$-Cut}}

\author{ Anupam Gupta\thanks{Supported in part by NSF awards
    CCF-1536002, CCF-1540541, and CCF-1617790.  This work was done in
    part when visiting the Simons Institute for the Theory of
    Computing. } \and Euiwoong Lee\thanks{Supported by NSF award CCF-1115525, Samsung scholarship, and Simons award for graduate students in TCS.}
  \and Jason Li\thanks{{\tt jmli@andrew.cmu.edu} }}

\date{Computer Science Department \\ Carnegie Mellon University \\ Pittsburgh, PA 15213.}

\thispagestyle{empty}
\maketitle
\begin{abstract}
  In the \kcut problem, we are given an edge-weighted graph $G$ and an
  integer $k$, and have to remove a set of edges with minimum total
  weight so that $G$ has at least $k$ connected components. Prior work
  on this problem gives, for all $h \in [2,k]$, a $(2-h/k)$-approximation
  algorithm for $k$-cut that runs in time $n^{O(h)}$. Hence to get a
  $(2 - \eps)$-approximation algorithm for some absolute constant
  $\eps$, the best runtime using prior techniques is
  $n^{O(k\eps)}$. Moreover, it was recently shown that getting a
  $(2 - \eps)$-approximation for general $k$ is NP-hard, assuming the
  Small Set Expansion Hypothesis.

  If we use the size of the cut as the parameter, an FPT algorithm to
  find the exact \kcut is known, but solving the \kcut problem exactly
  is $W[1]$-hard if we parameterize only by the natural parameter of
  $k$. An immediate question is: \emph{can we approximate \kcut better
    in FPT-time, using $k$ as the parameter?}

  We answer this question positively. We show that for some absolute
  constant $\eps > 0$, there exists a $(2 - \eps)$-approximation
  algorithm that runs in time $2^{O(k^6)} \cdot \widetilde{O} (n^4) $.  This is the first FPT
  algorithm that is parameterized only by $k$ and strictly improves the
  $2$-approximation.
\end{abstract}

\newpage

\setcounter{page}{1}

\section{Introduction}
\label{sec:introduction}

We consider the \kcut problem: given an edge-weighted graph $G =
(V,E,w)$ and an integer $k$, delete a minimum-weight set of edges so
that $G$ has at least $k$ connected components. This problem is a
natural generalization of the global min-cut problem, where the goal is
to break the graph into $k=2$ pieces.  
Somewhat surprisingly, the problem has poly-time algorithms for any
constant $k$: the current best result gives an $\tilde{O}(n^{2k})$-time
deterministic algorithm~\cite{Thorup08}. On the approximation algorithms
front, several $2$-approximation algorithms are
known~\cite{SV95, NR01, RS02}. Even a trade-off result is known: for any
$h \in [1,k]$, we can essentially get a $(2-\frac{h}{k})$-approximation in
$n^{O(h)}$ time~\cite{XCY11}. Note that to get $(2-\eps)$ for some
absolute constant $\eps > 0$, this algorithm takes time $n^{O(\eps k)}$,
which may be undesirable for large $k$. On the other hand, achieving a
$(2-\eps)$-approximation is NP-hard for general $k$, assuming the
Small Set Expansion Hypothesis (SSEH)~\cite{Manurangsi17}.

What about a better \emph{fine-grained result} when $k$ is small?
Ideally we would like a runtime of $f(k) \poly(n)$ so it scales better
as $k$ grows --- i.e., an FPT algorithm with parameter $k$.  Sadly, the
problem is $W[1]$-hard with this parameterization~\cite{DEFPR03}. (As an
aside, we know how to compute the optimal \kcut in time $f(|\Opt|) \cdot
n^{2}$~\cite{KT11, Chitnis}, where $|\Opt|$ denotes the cardinality of
the optimal \kcut.)
The natural question suggests itself: can we give a better approximation
algorithm that is FPT in the parameter $k$?

Concretely, the question we consider in this paper is: \emph{If we
  parameterize \kcut by $k$, can we get a $(2-\eps)$-approximation for
  some absolute constant $\eps > 0$ in FPT time---i.e., in time
  $f(k) \poly(n)$?}  (The hard instances which show $(2-\eps)$-hardness
assuming SSEH~\cite{Manurangsi17} have $k = \Omega(n)$, so such an FPT
result is not ruled out.) We answer the question positively.

\begin{theorem}[Main Theorem]
  \label{thm:kcut-main}
  There is an absolute constant $\eps > 0$ and an a
  $(2-\eps)$-approximation algorithm for the \kcut problem on general
  weighted graphs that runs in time $2^{O(k^6)} \cdot \tilde{O}(n^4)$.
\end{theorem}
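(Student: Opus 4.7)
My plan is to bootstrap the standard $2$-approximation for \kcut by a branching procedure that handles the case in which the standard algorithm is essentially tight. Let $V_1,\ldots,V_k$ denote an optimal partition, ordered so that the boundary weights $w(\delta(V_i))$ are non-decreasing. Since $\Opt=\tfrac12\sum_i w(\delta(V_i))$, the folklore algorithm that outputs the $k-1$ cheapest isolating cuts (for instance, by taking the lightest $k-1$ edges of a Gomory--Hu tree) returns a $k$-cut of weight at most $\sum_{i<k} w(\delta(V_i)) \le \tfrac{2(k-1)}{k}\,\Opt$, so the task is to shave an additional absolute constant $\eps$ off this ratio.

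I would proceed by a structural dichotomy on \Opt. Fix a small constant $\eps' = \Theta(\eps)$. In the \emph{imbalanced} case some $V_i$ satisfies $w(\delta(V_i))\ge (1+\eps')\cdot\tfrac{2\Opt}{k}$; then discarding this most-expensive part still leaves an isolating-cut solution of weight at most $(2-\Omega(\eps'))\,\Opt$, so the baseline already wins. In the \emph{balanced} case every $V_i$ has $w(\delta(V_i))=(1\pm\eps')\cdot\tfrac{2\Opt}{k}$, so each $V_i$ is a near-minimum cut (within a factor $1+\eps'$ of the min cut of a suitable contraction) and is therefore severely constrained by extremal-cut structure. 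I would detect which case we are in by comparing the baseline's value against the threshold $(2-\eps)\,\Opt$ and only invoke the branching algorithm when the baseline fails.

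In the balanced case the plan is to peel off components one at a time. I would exploit the fact that a graph has only $n^{O(1)}$ cuts within a constant factor of its minimum cut and that these fit into a cactus/laminar-style structure; after successive contractions this yields, at each recursive level, a candidate family $\mathcal F$ of size $2^{\poly(k)}$ that must contain the next $V_i$ up to small modifications. Guessing an anchor vertex for the current smallest part and enumerating the sets of $\mathcal F$ containing it, I add the corresponding boundary to the solution, contract, decrement $k$, and recurse. Over $k$ levels the total number of leaves is $2^{O(k^6)}$, where one factor of $k$ is the depth and the remaining $k^5$-type blowup per level accounts for the combinatorial complexity of near-minimum cuts together with the $\eps'$-slack bookkeeping. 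Each leaf finishes with an $\tilde O(n^4)$ max-flow computation on the contracted graph.

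The main obstacle I foresee is the structural claim in the balanced case: quantifying how balancedness forces each $V_i$ to lie in an explicitly enumerable $2^{\poly(k)}$-size family. I would attack this by induction on $k$, combining extreme-set arguments with uncrossing of near-minimum cuts to certify that each $V_i$ is determined by $\poly(k)$ bits relative to the current contracted graph. The delicate point is tuning $\eps'$ so that the slack is simultaneously (a) large enough to make the candidate family of polynomial size and (b) small enough that the final approximation guarantee is a genuine absolute constant below $2$.
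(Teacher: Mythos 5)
Your plan shares the broad two-case philosophy of the paper (greedy baseline wins unless the optimal partition has very regular, near-min-cut structure), but as written it has three genuine gaps, each of which breaks the argument.

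First, the imbalanced/balanced dichotomy does not have the property you claim. If only one part satisfies $w(\delta(V_i)) \ge (1+\eps')\cdot \tfrac{2\Opt}{k}$, then $\sum_{i<k} w(\delta(V_i)) \le 2\Opt\bigl(1 - \tfrac{1+\eps'}{k}\bigr)$, which improves the $(2-2/k)$ baseline only by an $O(\eps'/k)$ term that vanishes as $k$ grows --- it is not $(2-\Omega(\eps'))\Opt$. To make the baseline genuinely win you need either a \emph{constant fraction} of the parts to have large boundary, or equivalently $\Opt$ to exceed $\tfrac{k}{2}w(\delta(V_1))$ by a multiplicative constant; and you must simultaneously control the greedy algorithm's per-step costs, which could be much smaller than $w(\delta(V_1))$ at early iterations. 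The paper handles this with two separate indices $\kl$ (defined by the \emph{algorithm's} cut costs) and $\kr$ (defined by \emph{OPT's} boundary sizes), and it is not enough to argue about OPT alone.

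Second, your balanced-case enumeration bound is unsupported. Karger's bound gives $n^{O(1)}$ cuts within a constant factor of the min-cut, not $2^{\poly(k)}$; fixing a single anchor vertex does not shrink this to an FPT-size family, because when the near-min-cut structure is a star, there are $\Theta(n)$ incomparable candidate parts containing a given vertex. Nothing in your argument collapses this $n^{O(1)}$ per-level branching to a function of $k$, so the claimed $2^{O(k^6)}$ leaf count does not follow; a naive version gives $n^{O(k)}$, which is no better than known results. Relatedly, it is not automatic that near-min-cuts form a laminar (tree) family --- in general they form a cactus. The paper forces laminarity by an extra observation: if two near-min-cuts crossed, they would witness a cheap $4$-cut, contradicting that the algorithm could not find one; you would need an analogous argument.

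Third, and most fundamentally, even granting a laminar near-min-cut tree, choosing the best $k-1$ incomparable subtrees is precisely the $W[1]$-hard core (it specializes to \textsc{Partial Vertex Cover} when the tree is a star), so no amount of clean branching or ``anchor-guessing'' can solve it exactly in FPT time. The paper escapes this only by giving an FPT \emph{approximation} scheme for \pvc (a color-coding extension of Marx) and combining it with the shifting/anchor machinery on the mincut tree; this approximation-of-a-subroutine idea is the missing ingredient in your plan, and without it the balanced case has no efficient algorithm.
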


Our current $\eps$ satisfies $\eps \geq 0.0003$ (see the calculations in
\S\ref{sec:conclusion}).  We hope that our result will serve as a
proof-of-concept that we can do better than the factor of~2 in FPT$(k)$
time, and eventually lead to a deeper understanding of the trade-offs
between approximation ratios and fixed-parameter tractability for the
\kcut problem. Indeed, our result combines ideas from approximation
algorithms and FPT, and shows that considering both settings
simultaneously can help bypass lower bounds in each individual setting,
namely the $W[1]$-hardness of an exact FPT algorithm and the
SSE-hardness of a polynomial-time $(2-\e)$-approximation.

To prove the theorem, we introduce two variants of \kcut.
\Laminarkcut{k} is a special case of \kcut where both the graph and the
optimal solution are promised to have special properties, and \pvclong
(\pvc) is a variant of \kcut where $k - 1$ components are required to be
singletons, which served as a hard instance for both the exact
$W[1]$-hardness and the $(2 - \eps)$-approximation SSE-hardness.  Our
algorithm consists of three main steps where each step is modular,
depends on the previous one: an FPT-AS for \pvc, an algorithm for
\Laminarkcut{k}, and a reduction from \kcut to \Laminarkcut{k}.  In the
following section, we give more intuition for our three steps.

\subsection{Our Techniques}
\label{sec:techniques}

For this section, fix an optimal $k$-cut
$\calS^* = \{ S^*_1, \dots, S^*_k\}$, such that
$w(\partial{S^*_1}) \leq \dots \leq w(\partial{S^*_k})$.  Let the
optimal cut value be
$\Opt := w(E(S^*_1, \dots, S^*_k)) = \sum_{i=1}^k w(\partial{S^*_i}) /
2$; here $E(A_1,\cdots, A_k)$ denotes the edges that go between
different sets in this partition.  The $(2 - 2/k)$-approximation iterative
greedy algorithm by Saran and Vazirani~\cite{SV95} repeatedly computes
the minimum cut in each connected component and takes the cheapest one
to increase the number of connected components by $1$. Its
generalization by Xiao et al.~\cite{XCY11} takes the minimum $h$-cut
instead of the minimum $2$-cut to achieve a $(2 - h/k)$-approximation in
time $n^{O(h)}$.

\subsubsection{Step I: \pvclong}  
\label{sec:overview-pvc}

The starting point for our algorithm is the $W[1]$-hardness result of
Downey et al.~\cite{DEFPR03}: the reduction from $k$-clique results in a
\kcut instance where the optimal solution consists of $k-1$ singletons
separated from the rest of the graph. Can we approximate such instances
well? Formally, the \pvc problem asks: given a edge-weighted graph, find
a set of $k-1$ vertices such that the total weight of edges hitting
these vertices is as small as possible?  Extending the result of
Marx~\cite{Marx07} for the maximization version, our first conceptual
step is an FPT-AS for this problem, i.e., an algorithm that given a
$\delta >0$, runs in time $f(k,\delta)\cdot \poly(n)$ and gives a
$(1+\delta)$-approximation to this problem.

\subsubsection{Step II: Laminar $k$-cut}  
\label{sec:overview-lam}

The instances which inspire our second idea are closely related to the
hard instances above. One instance on which the greedy algorithm of
Saran and Vazirani gives a approximation no better than $2$ for large
$k$ is this: take two cliques, one with $k$ vertices and unit edge
weights, the other with $k^2$ vertices and edge weights $1/(k+1)$, so
that the weighted degree of all vertices is the same. (Pick one vertex
from each clique and identify them to get a connected graph.) The
optimal solution is to delete all edges of the small clique, at cost
$\binom{k}{2}$. But if the greedy algorithm breaks ties poorly, it will
cut out $k-1$ vertices one-by-one from the larger clique, thereby
getting a cut cost of $\approx k^2$, which is twice as large. Again we
could use \pvc to approximate this instance well. But if we replace each
vertex of the above instance itself by a clique of high weight edges,
then picking out single vertices obviously does not work.  Moreover, one
can construct recursive and ``robust'' versions of such instances where
we need to search for the ``right'' (near-)$k$-clique to break
up. Indeed, these instances suggest the use of dynamic programming (DP),
but what structure should we use DP on?

One feature of such ``hard'' instances is that the optimal \kcut
$\calS^* = \{S_1^*, \ldots, S_k^*\}$ is composed of near-min-cuts in the
graph. Moreover, no two of these near-min-cuts cross each other. We now
define the \Laminarkcut{k} problem: find a \kcut on an instance where
none of the $(1+\eps)$-min-cuts of the graph cross each other, and where
each of the cut values $w(\partial{S_i^*})$ for $i = 1,\ldots, k-1$ are
at most $(1+\eps)$ times the min-cut. Because of this laminarity (i.e.,
non-crossing nature) of the near-min-cuts, we can represent the
near-min-cuts of the graph using a tree $\mT$, where the nodes of $G$
sit on nodes of the tree, and edges of $\mT$ represent the near-min-cuts
of $G$.  Rooting the tree appropriately, the problem reduces to
``marking'' $k-1$ incomparable tree nodes and take the near-min-cuts
given by their parent edges, so that the fewest edges in $G$ are
cut. Since all the cuts represented by $\mT$ are near-min-cuts and
almost of the same size, it suffices to mark $k-1$ incomparable nodes to
maximize the number of edges in $G$ both of whose endpoints lie below a
marked node. We call such edges \emph{saved} edges. \agnote{Any
  figures?} In order to get a $(2-\eps)$-approximation for \Laminarkcut{k}, 
it suffices to save $\approx \eps k \mincut$ weight of edges.

Note that if $\mT$ is a star with $n$ leaves and each vertex in $G$ maps
to a distinct leaf, this is precisely the \pvc problem, so we do not
hope to find the optimal solution (using dynamic programming, say).
Moreover, extending the FPT-AS for \pvc to this more general setting
does not seem directly possible, so we take a different approach. We
call a node an \emph{anchor} if has some $s$ children which when marked
would save $\approx \eps s \mincut$ weight. We take the following
``win-win'' approach: if there were $\Omega(k)$ anchors that were
incomparable, we could choose a suitable subset of $k$ of their children
to save $\approx \eps k \mincut$ weight. And if there were not, then all
these anchors must lie within a subtree of $\mT$ with at most $k$
leaves. We can then break this subtree into $2k$ paths and guess which
paths contain anchors which are parents of the optimal solution. For
each such guess we show how to use \pvc to solve the problem and save a
large weight of edges.  Finally how to identify these anchors? Indeed,
since all the mincuts are almost the same, finding an anchor again
involves solving the \pvc problem!

\subsubsection{Step III: Reducing \kcut to \Laminarkcut{k}}
\label{sec:overview-redn}


\begin{wrapfigure}{L}{0.38\textwidth}
  \centering
  \includegraphics[width=0.35\textwidth]{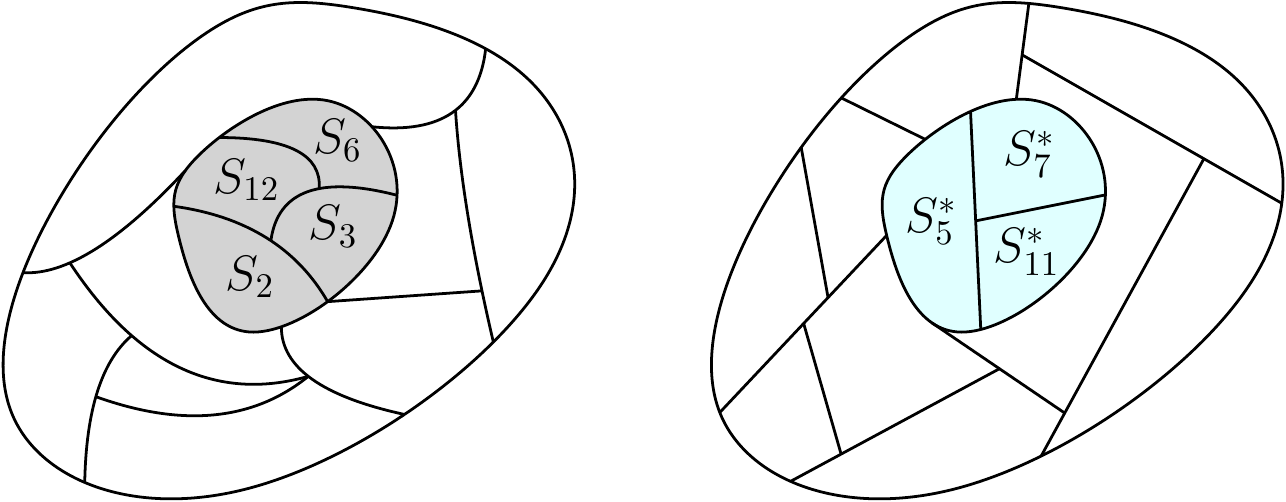}
  \caption{\label{fig:conform} The blue set on the right,
    formed by $S_5^* \cup S_7^* \cup S_{11}^*$, conforms to the
    algorithm's partition $\calS$ on the left.}
\end{wrapfigure}

We now reduce the general \kcut problem to \Laminarkcut{k}. This
reduction is again based on observations about the
graph structure in cases where the iterative greedy algorithms do not
get a $(2 - \varepsilon$)-approximation.  Suppose $\calS = \{ S_1, \dots,
S_{k'} \}$ be the connected components of $G$ at some point of an
iterative algorithm ($k' \leq k$).  For a subset $\emptyset \neq U
\subsetneq V$, we say that $U$ {\em conforms} to partition $\calS$ if
there exists a subset $J \subsetneq [k']$ of parts such that $U =
\cup_{j \in J} S_j$.  One simple but crucial observation is the
following: if there exists a subset $\emptyset \neq I \subsetneq [k]$ of
indices such that $\cup_{i\in I} S^*_i$ conforms to $\calS$
(i.e., $\cup_{i\in I} S^*_i = \cup_{j \in J} S_j$), we can
``guess'' $J$ to partition $V$ into the two parts $\cup_{i \in I} S^*_i$
and $\cup_{i \notin I} S^*_i$. Since the edges between these two parts
belong to the optimal cut and each of them is strictly smaller than $V$,
we can recursively work on each part without any loss.
 
Moreover, the number of choices for $J$ is at most
$2^{k'}$ and each guess produces one more connected component, so the
total running time can be bounded by $f(k)$ times the running time of
the rest of the algorithm, for some function $f(\cdot)$.  Therefore, we
can focus on the case where none of $\cup_{i \in I} S^*_i$ conforms to
the algorithm's partition $\calS$ at any point during the algorithm's
execution.

\begin{wrapfigure}{R}{0.38\textwidth}
  \centering
  \includegraphics[width=0.35\textwidth]{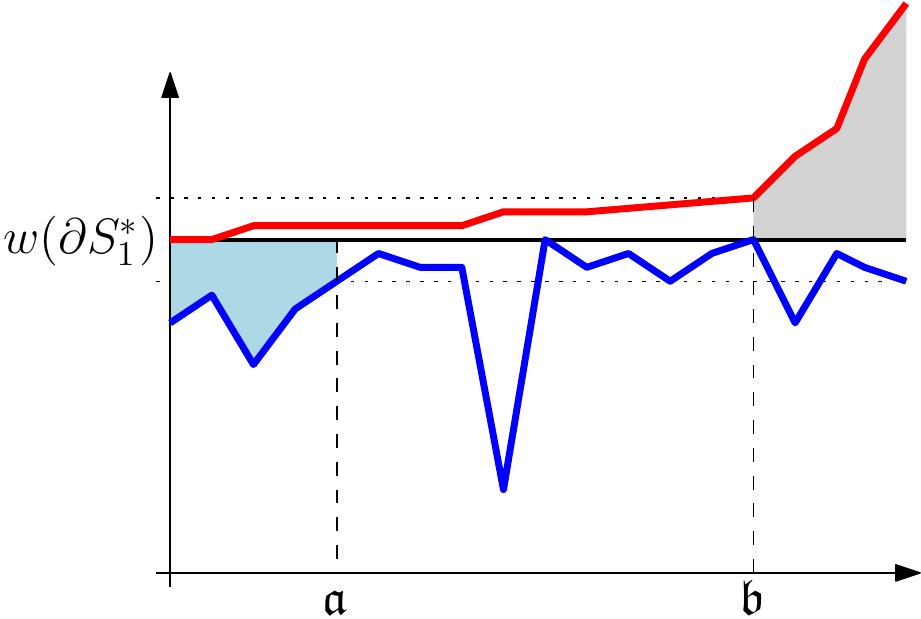}
  \caption{\label{fig:histo} The blue curve shows cut sizes
    for algorithm's cuts, red curve shows $w(\partial{S^*_i})$
    values. The blue area (and in fact all the area below
    $w(\partial{S_1^*})$ and above the algorithm's curve) makes the first
    inequality loose. The grey area (and in fact all the area above
    $w(\partial{S_1^*})$ and below OPT's curve) makes the second
    inequality loose.}
\end{wrapfigure}

Now consider the iterative min-cut algorithm of Saran and Vazirani, and
let $c_i$ be the cost of the min cut in the $i^{th}$ iteration ($1 \leq
i \leq k - 1$).  By our above assumption about non-conformity, none of
$\cup_{i \in I} S^*_i$, and in particular the subset $S^*_1$, conform to
the current components. This implies that deleting the remaining edges
in $\partial S^*_1$ is a valid cut that increases the number of
connected components by at least $1$, so $c_i \leq
w(\partial{S^*_1})$. Then we have the following chain of inequalities:
\[
\sum_{i=1}^{k-1} c_i \leq k \cdot w(\partial{S^*_1}) \leq \sum_{i=1}^k w(\partial{S^*_i}) = 2\Opt.
\]

If the iterative min-cut algorithm could not get a $(2 -
\varepsilon)$-approximation, the two inequalities above must be essentially
tight. Hence almost all our costs $c_i$ must be close to
$w(\partial{S^*_1})$ and almost all $w(\partial{S^*_i})$ must be close
to $w(\partial{S^*_1})$. 
Slightly more formally, let $\kl \in [k]$ be the smallest integer such
that 
$c_{\kl} \gtrsim w(\partial{S^*_1})$
---so that the first $\kl - 1$ cuts are ones where
we pay ``much'' less than $\partial{S^*_1}$ and make the first
inequality loose. And let $\kr \in [k]$ be the smallest number such that
$w(\partial{S^*_{\kr}}) \gtrsim w(\partial{S^*_1})$
--- so that the last
$k - \kr$ cuts in OPT are much larger than $\partial{S^*_1}$ and make
the second inequality loose. Then if the iterative min-cut algorithm is
no better than a $2$-approximation, we can imagine that $\kl = o(k)$ and
$\kr \geq k - o(k)$.
For simplicity, let us assume that $\kl = 1$ and $\kr = k$ here. 

Indeed, instead of just considering min-cuts, suppose we also consider
min-4-cuts, and take the one with better edges cut per number of new
components. The arguments of the previous paragraph still hold, so 
$\kl = 1$ implies that 
the best min-cuts and best min-4-way cuts (divided by 3) are roughly 
at least $w(\partial{S^*_1})$ in the original $G$. 
Since the min-cut is also at most $w(\partial{S^*_1})$, 
the weight of the min-cut is
roughly $w(\partial{S^*_1})$ and none of the near-min-cuts
cross (else we would get a good 4-way cut). I.e., the
near-min-cuts in the graph form a laminar family. 
Together with the fact that $\partial{S^*_1}, \dots, \partial{S^*_{k - 1}}$ are near-min-cuts (we assumed $\kr = k$),  
this is precisely an instance of \Laminarkcut{k}, which completes
the proof!

\paragraph{Roadmap.} After some related work and preliminaries, we first
present the details of the reduction from \kcut to \Laminarkcut{k} in
Section~\ref{sec:reduction}. Then in Section~\ref{sec:laminar} we give
the algorithm for \Laminarkcut{k} assuming an algorithm for \pvc. Finally
we give our FPT-AS for \pvc in Section~\ref{sec:partial-vc}.

\subsection{Other Related Work}
\label{sec:related}

The \kcut problem has been widely studied. Goldschmidt and Hochbaum gave
an $O(n^{(1/2- o(1))k^2})$-time algorithm~\cite{GH94}; they also showed
that the problem is NP-hard when $k$ is part of the input.  Karger and
Stein improved this to an $O(n^{(2-o(1))k})$-time randomized Monte-Carlo
algorithm using the idea of random edge-contractions~\cite{KS96}.
After Kamidoi et al.~\cite{KYN06} gave an $O(n^{4k + o(1)})$-time
deterministic algorithm based on divide-and-conquer, 
Thorup gave an $\tilde{O}(n^{2k})$-time deterministic algorithm based on
tree packings~\cite{Thorup08}. 
Small values of $k \in [2, 6]$ also have been separately studied~\cite{NI92, HO92, BG97, Karger00, NI00, NKI00, Levine00}.

On the approximation algorithms front, a $2(1-1/k)$-approximation was
given by Saran and Vazirani~\cite{SV95}.  Naor and Rabani~\cite{NR01},
and Ravi and Sinha~\cite{RS02} later gave $2$-approximation algorithms
using tree packing and network strength respectively.  Xiao et
al.~\cite{XCY11} completed the work of Kapoor~\cite{Kapoor96} and Zhao
et al.~\cite{ZNI01} to generalize Saran and Vazirani to essentially give
an $(2 - h/k)$-approximation in time $n^{O(h)}$.  Very recently,
Manurangsi~\cite{Manurangsi17} showed that for any $\eps > 0$, it is
NP-hard to achieve a $(2 - \eps)$-approximation algorithm in time
$\poly(n,k)$ assuming the Small Set Expansion Hypothesis.

\emph{FPT algorithms:} Kawarabayashi and Thorup give an
$f(\Opt) \cdot n^{2}$-time algorithm~\cite{KT11} for unweighted
graphs. Chitnis et al.~\cite{Chitnis} used a randomized color-coding
idea to give a better runtime, and to extend the algorithm to weighted
graphs. In both cases, the FPT algorithm is parameterized by the
cardinality of edges in the optimal \kcut, not by $k$. For a
comprehensive treatment of FPT algorithms, see the excellent
book~\cite{FPT-book}, and for a survey on approximation and FPT
algorithms, see~\cite{Marx07}.

\emph{Multiway Cut:} A problem very similar to \kcut is the
\textsc{Multiway Cut} problem, where we are given $k$ terminals and want
to disconnect the graph into at least $k$ pieces such that all terminals
lie in distinct components. However, this problem behaves quite
differently: it is NP-hard even for $k=3$ (and hence an $n^{f(k)}$
algorithm is ruled out); on the other hand several algorithms are
known to approximate it to factors much smaller than~$2$ (see,
e.g.,~\cite{BuchbinderSW17} and references therein). FPT algorithms
parameterized by the size of $\Opt$ are also known; see~\cite{CaoCF14}
for the best result currently known.

\section{Notation and Preliminaries}
\label{sec:prelims}

For a graph $G = (V,E)$, and a subset $S \sse V$, we use $G[S]$ to
denote the subgraph induced by the vertex set $S$. For a collection of
disjoint sets $S_1, S_2, \ldots, S_t$, let $E(S_1, \ldots, S_t)$ be the
set of edges with endpoints in some $S_i, S_j$ for $i \neq j$. Let
$\partial S = E(S, V \setminus S)$. We say two cuts $(A, V\setminus A)$
and $(B,V\setminus B)$ \emph{cross} if none of the four sets $A
\setminus B, B \setminus A, A \cap B$, and $V \setminus (A \cup B)$ is
empty.
$\mincut$ and $\minfourcut$ denote the weight of the min-2-cut
and the min-4-cut respectively. 
A cut $(A, V \setminus A)$ is called $(1 + \varepsilon)$-mincut if 
$w(A, V \setminus A) \leq (1 + \varepsilon) \mincut$. 
\begin{restatable}[\textsc{Laminar $k$-Cut}$(\eps_1)$]{definition}{LamDef}
  \label{def:laminarcut}
  The input is a graph $G = (V,E)$ with edge weights, and two parameters
  $k$ and $\e_1$, satisfying two promises: (i)~no two $(1+\e_1)$-mincuts
  cross each other, and (ii)~there exists a $k$-cut $\calS' = \{S_1',
  \ldots, S_k'\}$ in $G$ with $w(\partial(S_i')) \le (1+\e_1)\mincut(G)$
  for all $i \in [1,k-1]$. Find a $k$-cut with the total weight.
  The approximation ratio is defined as the ratio of 
  the weight of the returned cut to the weight of the \kcut $\calS'$
  (which can be possibly less than $1$).
\end{restatable}

\begin{definition}[\pvclong]
  \label{def:pvc}
  Given a graph $G = (V,E)$ with edge and vertex weights, and an integer
  $k$, find a vertex set $S \sse V$ with $|S| = k$ nodes, minimizing the
  weight of the edges hitting the set $S$ plus the weight of all
  vertices in $S$.
\end{definition}


\section{Reduction to $\Laminarcut{k}{\eps_1}$}
\label{sec:reduction}

In this section we give our reduction from \kcut to
$\Laminarcut{k}{\eps_1}$, showing that if we can get a better-than-2
approximation for the latter, we can beat the factor of two for the
general \kcut problem too. We assume the reader is familiar with the
overview in Section~\ref{sec:overview-redn}. Formally, the main theorem
is the following.

\begin{theorem}
  \label{thm:reduction1}
  Suppose there exists a $(2 - \eps_2)$-approximation algorithm for
  $\Laminarcut{k}{\eps_1}$ for some $\eps_1 \in (0, 1/4)$ and
  $\eps_2 \in (0, 1)$ that runs in time $f(k) \cdot g(n)$.  Then
  there exists a $(2 - \eps_3)$-approximation algorithm for \kcut
  that runs in time $2^{O(k^2  \log k)} \cdot f(k) \cdot (n^4 \log^3 n + g(n))$ for some constant
  $\eps_3 > 0$. 
\end{theorem}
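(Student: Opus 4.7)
My plan is to build a recursive algorithm with three ingredients: (a)~a ``conformity guessing'' step that, given the current partition $\calS=\{S_1,\dots,S_{k'}\}$, branches over all subsets $J\subsetneq[k']$, hoping that $\bigcup_{j\in J}S_j = \bigcup_{i\in I}S^*_i$ for some optimal subset $I$, and if so splits the problem along that cut and recurses on each side; (b)~an iterative greedy step that at every iteration picks whichever has the smaller cost per new component created, the cheapest min-cut or the cheapest min-4-cut across the current components; and (c)~a single call to the assumed $(2-\eps_2)$-approximation for $\Laminarcut{k'}{\eps_1}$ on a cleverly chosen induced subgraph. Every node of the recursion either performs one conformity guess (branching factor $2^{k'}\le 2^k$) or progresses via (b) and eventually triggers (c); since each branch proceeds to at most $k$ depth, the total number of leaves is $2^{O(k^2)}$, with another $\log k$ factor in the exponent absorbing guesses of integer thresholds $\kl,\kr$ and of a distinguished component $S_1$.

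Along any root-to-leaf path with no successful conformity guess, I would argue as follows. Non-conformity at iteration $t$ implies that the remaining edges of $\partial S_1^*$ form a valid cut increasing the component count by at least one, so $c_t \le w(\partial S_1^*)$, and the inclusion of min-4-cuts analogously forces $\Minfourcut(H)/3 \lesssim w(\partial S_1^*)$ for the active subgraph $H$. The latter inequality forces any two $(1+\eps_1)$-mincuts of $H$ to be laminar---two crossing near-min-cuts would yield a cheaper 4-way cut. Summing the bounds recovers the usual chain $\sum_t c_t \le k\cdot w(\partial S_1^*) \le \sum_i w(\partial S_i^*) = 2\,\Opt$. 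If the purely greedy output beats $(2-\eps_3)\Opt$ we return it; otherwise both inequalities must be near-tight, so for thresholds $\alpha,\beta$ to be fixed later, the first index $\kl$ with $c_{\kl} \ge (1-\alpha)w(\partial S_1^*)$ and the first $\kr$ with $w(\partial S^*_{\kr}) \ge (1+\beta)w(\partial S^*_1)$ satisfy $\kl=O(\eps_3 k)$ and $k-\kr=O(\eps_3 k)$.

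With $\kl$ and $k-\kr$ small, applying the conformity argument to every subset $I \subseteq [\kr-1]$ shows that after the first $\kl$ greedy iterations there is exactly one current component, say $S_1$, that straddles all of $S^*_1,\ldots,S^*_{\kr-1}$, while every other current component is contained inside a single $S^*_i$. The induced graph $G[S_1]$ then satisfies the two promises of $\Laminarcut{\kr}{\eps_1}$: (i)~its near-min-cuts are laminar (from the min-4-cut bound transported to $G[S_1]$), and (ii)~the optimal $\kr$-cut $\{S^*_1,\ldots,S^*_{\kr-1},\bigcup_{i\ge \kr}S^*_i\}$ inside $G[S_1]$ has $\kr-1$ near-min-cut boundaries. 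I would invoke the subroutine on $G[S_1]$ and then use $k-\kl-\kr+1$ more greedy min-cuts (each again of cost at most $w(\partial S_1^*)$ by non-conformity) to complete the $k$-cut. The main obstacle is the quantitative bookkeeping: choosing $\alpha,\beta,\eps_3$ so that the prefix cost $O(\eps_3 k)\cdot w(\partial S_1^*)$, the middle-phase cost $(2-\eps_2)(1+\eps_1)(\kr-1)\cdot w(\partial S_1^*)$ returned by the laminar call, and the suffix cost $O(\eps_3 k)\cdot w(\partial S_1^*)$ sum to at most $(2-\eps_3)\Opt$ for some $\eps_3=\Omega(\eps_1\eps_2)>0$; the running time then comes to $2^{O(k^2\log k)}\cdot f(k)\cdot (n^4\log^3 n + g(n))$, with the $n^4\log^3 n$ factor absorbing the $O(k)$ min-cut and min-4-cut computations performed at each leaf of the recursion.
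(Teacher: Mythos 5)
Your high-level framework matches the paper closely: conformity guessing over subsets $J \subsetneq [k']$, a greedy phase that takes the cheaper of the min-cut and min-$4$-cut per new component, the chain $\sum_t c_t \le k\cdot w(\partial S_1^*) \le 2\,\Opt$, the two thresholds $\kl$ and $\kr$ quantifying slack in each inequality, and the observation that a bounded min-$4$-cut forces laminarity of near-min-cuts. All of this is how the paper proceeds.

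However, there is a genuine gap in the final reduction. You assert that, after the first $\kl$ greedy iterations, non-conformity implies there is \emph{exactly one} current component $S_1$ straddling all of $S^*_1,\dots,S^*_{\kr-1}$, with every other current component contained inside a single $S^*_i$. This does not follow from \As{1}. Non-conformity only forces the bipartite \emph{incidence} graph between $\{S_1,\dots,S_{\kl}\}$ and $\{S^*_1,\dots,S^*_{\kr-1},S^*_{\ge\kr}\}$ (edge iff they intersect) to be connected and have no isolated vertices; it does not force it to be a star. For instance, with $\kl=2$ and $k=3$, if $S_1$ meets only $S^*_1,S^*_2$ and $S_2$ meets only $S^*_2,S^*_3$, the incidence graph is a connected path, no union of $S^*_i$'s conforms, and yet no component straddles all three optimal pieces. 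Consequently you cannot in general invoke the Laminar subroutine once on $G[S_1]$ with parameter $\kr$; the optimal $\kr$-cut need not live inside a single current part. (This very claim appears in a commented-out draft passage in the source, with a coauthor's note flagging that it needs care when two $S_j$'s cross one $S^*_i$; it was dropped from the final proof.)

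The paper's fix is to let $r_i$ be the number of optimal pieces $S^*_1,\dots,S^*_{\kr-1},S^*_{\ge\kr}$ intersecting $S_i$, run Laminar on \emph{each} $G[S_i]$ with $r_i\ge 2$ (this is exactly what the loop over $\br\in[k]^{k'}$ in the algorithm provides), and use connectedness of the incidence graph to get $\sum_i r_i \ge \kl+\kr-1$. Each $G[S_i]$ with $r_i \ge 2$ satisfies the two Laminar promises (Claim~\ref{clm:promises}), and the $r_i$-cuts induced by the optimum inside the $G[S_i]$'s have costs summing to exactly $\Opt$, so the total Laminar cost is $(2-\eps_2)\Opt$; the prefix cost is $(\kl-1)w(\partial S^*_1)$ and the suffix (Complete) cost is at most $(k-|\br|)w(\partial S^*_1)\le(k-\kl-\kr+1)w(\partial S^*_1)$, and these combine to $(2-\eps_2+2\eps_5)\Opt$ under \As{4}. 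Your bookkeeping would need to be rebuilt along these lines; as stated, the middle-phase bound $(2-\eps_2)(1+\eps_1)(\kr-1)w(\partial S^*_1)$ rests on the unsupported single-straddler claim.
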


\begin{algorithm}
        \caption{$\Main(G  = (V, E, w), k)$}
        \label{alg:main}
        \begin{algorithmic}[1]
                \State $k' = 1$, $S_1 \gets V$
                \While {$k' < k$ } 
                \For {$\br \in [k]^{k'}$ } \label{line:start-of-check} \Comment {Further partition each $S_i$ into $r_i$ components by Laminar}
                \State $|\br| \gets \sum_{j=1}^{k'} r_j$; $\{ C_1, \dots,
                C_{|\br|} \} \gets \cup_{i \in [k']}
                \Laminar(\inducedG{S_i}, r_i)$. 
		      \If{$|\br| \geq k$}  $C_k \gets C_k \cup \dots \cup C_{|\br|}$ 
                \Else \State $\{C_1, \dots, C_{k}\} \gets  \Complete(G, k, C_1, \dots, C_{|\br|})$    \label{line:complete}
		      \EndIf
                \State \Record($\Guess(\{C_1, \dots, C_k \})$) 
                \EndFor \label{line:end-of-check}
                \State {} \Comment {Split some $S_i$ by a mincut or a min-4-cut}
                \If{$k' > k - 3$ or $\min_{i \in [k']} \mintwocut (\inducedG{S_{i}}) \leq
                \min_{i \in [k']} \minfourcut (\inducedG{S_{i}}) / 3$} \label{line:start-extend}
                \State $i \gets \min_{i} \mintwocut (\inducedG{S_{i}})$; $\{ T_1, T_2 \} \gets \Mincut(\inducedG{S_{i}})$
                \State $S_i \gets T_1$; $S_{k' + 1} \gets T_2$; $c_{k'} \gets \mintwocut (\inducedG{S_{i}})$; $k' \gets k' + 1$
                \Else
                \State $i \gets \arg\min_{i} \minfourcut (\inducedG{S_{i}})$; $\{ T_1, \dots, T_4 \} \gets \Minfourcut(\inducedG{S_{i}})$;
                $S_{i} \gets T_1$
                \State $S_{k' + 1}, S_{k' + 2}, S_{k' + 3} \gets T_2, T_3, T_4$; 
                $c_{k'}, c_{k' + 1}, c_{k' + 2} \gets \minfourcut (\inducedG{S_{i}}) / 3$;
                $k' \gets k' + 3$
                \EndIf \label{line:end-extend}
                \EndWhile 
                \State let $\calS = \{S_1, \ldots, S_k\}$ be the final reference $k$-partition.
                \State \Record($\Guess(G, k, \calS)$) \label{line:lastupdate}

                \State Return the best recorded $k$-partition. 
        \end{algorithmic}
\end{algorithm}

\begin{algorithm}
\caption{$\Complete(G = (V, E, w), k, \calC = \{C_1, \ldots, C_\ell\})$}
\label{alg:complete}
\begin{algorithmic}[1]
\While {$\ell < k$}
\State $i \gets \min_{i \in [\ell]} \mincut(\inducedG{C_i})$; $T_1, T_2 \gets \Mincut(\inducedG{C_i})$
\State $C_i \gets T_1$; $C_{\ell + 1} \gets T_2$; $\ell \gets \ell + 1$
\EndWhile
\State Return $\calC := \{C_1, \dots, C_k\}$.
\end{algorithmic}
\end{algorithm}

\begin{algorithm}
        \caption{$\Guess(G  = (V, E, w), k, \calC = \{C_1, \dots, C_k\})$}
        \label{alg:guess}
        \begin{algorithmic}[1]
                \State \Record($C_1, \dots, C_k$)  \Comment{Returned
                  partition no worse than starting partition}
                \For {$\emptyset \neq J \subsetneq [k]$ } 
                \For {$k' = 1, 2, \dots, k - 1$ }
                \State $L \gets \cup_{j \in J} C_j$; $R \gets V
                \setminus L$ \Comment {Divide $S_i$ into two groups,
                  take union of each group}
                \State $D_1, \dots, D_{k'} \gets  \Main(\inducedG{L},
                k')$ \Comment{and recurse}
                \State $D_{k'+1}, \dots, D_k \gets  \Main(\inducedG{R}, k - k')$
                \State \Record($D_1, \dots, D_k$)
                \EndFor
                \EndFor
                \State Return the best recorded $k$-partition among all these guesses. 
        \end{algorithmic}
\end{algorithm}

The main algorithm is shown in Algorithm~\ref{alg:main} (``\Main''). 
It maintains a ``reference'' partition $\calS$, which is initially the
trivial partition where all vertices are in the same part. At each
point, it guesses how many pieces each part $S_i$ of this reference partition
$\calS$ should be split into using the ``Laminar'' procedure, and then
extends this to a $k$-cut using greedy cuts if necessary
(Lines~\ref{line:start-of-check}--\ref{line:end-of-check}).
It then
extends the reference partition by either taking the best min-cut or the
best min-4-cut among all the parts
(Lines~\ref{line:start-extend}--\ref{line:end-extend}). 

Every time it
has a $k$-partition, it guesses (using ``Guess'') if the union of some
of the parts equals some part of the optimal partition, and uses that to
try get a better partition. 
If one of the guesses is right, we strictly increase the number of
connected components by deleting edges in the optimal $k$-cut, so we can
recursively solve the two smaller parts.  If none of our guesses was
right during the algorithm, our analysis in
Section~\ref{subsec:approx_factor} shows that there exist values of $k',
\br$ such that $\calC = \{C_1, \dots, C_k\}$ in
Line~\ref{line:complete}, obtained from the reference partition $\calS =
\{ S_1, \dots, S_{k'} \}$ by running Laminar($G[S_{i}], r_i$) for each $i \in [k']$ 
and using Complete if necessary to get $k$ components, beats the $2$-approximation. Finally, a couple words about
each of the subroutines.
\begin{itemize}
\item Mincut$(G = (V, E, w))$ (resp.\ Min-4-cut$(G)$) 
returns the minimum $2$-cut (resp.\ $4$-cut) as a partition of $V$
into $2$ (resp. $4$) subsets. 
\item The subroutine ``Laminar'' returns a $(2-\eps_2)$-approximation
  for \Laminarcut{$k$}{$\eps_1$}, using the algorithm from
  Theorem~\ref{thm:laminar}. Recall the definition of the problem in Definition~\ref{def:laminarcut}. 

\item The operation ``\Record($\calP$)'' in \Guess\ and \Main\ takes a
  $k$-partition $\calP$ and compares the weight of edges crossing this
  partition to the least-weight $k$-partition recorded thus far (within
  the current recursive call). If the current partition has less weight,
  it updates the best partition accordingly. 

\item Algorithm~\ref{alg:complete}(``\Complete'') is a simple algorithm
  that given an $\ell$-partition $\calP$ for some $\ell \leq k$, outputs a
  $k$-partition by iteratively taking the mincut in the current graph.

\item Algorithm~\ref{alg:guess}(``\Guess''), when given an
  $\ell$-partition $\calP$ ``guesses'' if the vertices belonging to some
  parts of this partition $\{ S_j \}_{j \in J}$ coincide with the union
  of some $k'$ parts of the optimal partition.  If so, we have made
  tangible progress: it recursively finds a small $k'$-cut in the graph
  induced by $\cup_{j \in J} S_j$, and a small $k-k'$ cut in the
  remaining graph. It returns the best of all these guesses.
\end{itemize}

\subsection{The Approximation Factor}
\label{subsec:approx_factor}


\begin{lemma}[Approximation Factor]
  \label{lem:apx-main}
  $\Main(G, k)$ achieves a $(2 - \eps_3)$ approximation for some
  $\eps_3 > 0$ that depends on $\eps_1, \eps_2$ in
  Theorem~\ref{thm:reduction1}.
\end{lemma}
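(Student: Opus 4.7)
The plan is to induct on $|V|$. The base case ($|V|=k$) is trivial. For the inductive step, I would split into two cases based on whether any \Guess\ call made during the execution ``succeeds,'' meaning the guessed union $\cup_{j\in J}C_j$ coincides with $\cup_{i\in I}S^*_i$ for some nonempty proper $I\subsetneq[k]$. If such a guess succeeds, then the two recursive calls $\Main(G[L],k')$ and $\Main(G[R],k-k')$ are made on strict subgraphs whose joint optimum is at most $\Opt$ (since the edges between $L$ and $R$ are all in the optimal cut), so the inductive hypothesis immediately yields a $(2-\eps_3)$-approximation.

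The meat of the argument is the opposing case: no \Guess\ ever succeeds during the execution of \Main. Here, I would fix the reference partition $\calS=\{S_1,\dots,S_{k'}\}$ at each iteration and invoke the conformity observation: since no union $\cup_{i\in I}S^*_i$ conforms to $\calS$, in particular $S^*_1$ does not, so cutting the remaining edges of $\partial S^*_1$ is a legal way to add at least one component, giving the pointwise bound $c_i\le w(\partial S^*_1)$ on every cut-cost (both when the extension uses a mincut and, via a similar argument averaging over three cuts, when it uses a min-4-cut). Summing gives $\sum_{i=1}^{k-1}c_i\le k\cdot w(\partial S^*_1)\le 2\Opt$, recovering the classical $2$-bound.

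To beat $2$, I would introduce the thresholds $\kl:=\min\{i:c_i\gtrsim w(\partial S^*_1)\}$ and $\kr:=\min\{i:w(\partial S^*_i)\gtrsim w(\partial S^*_1)\}$ (with explicit slack parameters chosen in terms of $\eps_1,\eps_2$). A straightforward slack computation shows that if either $\kl$ is bounded away from $1$ or $\kr$ is bounded away from $k$ (that is, $\kl\ge \Omega(\eps_3 k)$ or $k-\kr\ge\Omega(\eps_3 k)$), then one of the two inequalities above loses a constant fraction of $\Opt$, so the extension phase alone already yields a $(2-\eps_3)$-cut and we are done. The main work is the remaining regime where both $\kl$ is tiny and $\kr$ is close to $k$. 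Here I would argue that at iteration $\kl$ the reference partition has exactly one ``mixed'' part (call it $S_1$) that meets several optimal pieces while all other $S_i$'s are strictly contained in some $S^*_j$ --- again via a conformity/crossing argument --- so that $G[S_1]$ inherits the Laminar promise: its mincut is within $(1+\eps_1)$ of $w(\partial S^*_1)$, no two $(1+\eps_1)$-mincuts cross (otherwise the min-4-cut branch at Line~\ref{line:start-extend} would have fired, contradicting that $\kl$ iterations were pure mincut steps), and the restriction of the optimal partition to $G[S_1]$ is a $(\kr-\kl+1)$-cut whose parts, except possibly one, have boundary weight $\le(1+\eps_1)\mincut(G[S_1])$.

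The final step is to observe that the correct vector $\br$ in the loop of Lines~\ref{line:start-of-check}--\ref{line:end-of-check} --- namely $r_1=\kr-\kl+1$ on the mixed part $S_1$ and $r_i=1$ elsewhere, followed by \Complete\ to fill in the last $k-\kr$ components --- is enumerated by the algorithm. For this $\br$, \Laminar\ produces a $(2-\eps_2)$-approximation of the optimal $\kr-\kl+1$ cut in $G[S_1]$, the mincut steps outside $S_1$ cost at most $w(\partial S^*_1)$ each by non-conformity, and the $k-\kr$ extra \Complete\ cuts cost at most $w(\partial S^*_1)$ each. Combining the slacks from the two weak inequalities with the $\eps_2$ savings on the dominant Laminar piece, and choosing $\eps_3$ as an explicit (small) function of $\eps_1,\eps_2$, yields the desired $(2-\eps_3)$ bound. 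The main obstacle I anticipate is correctly setting up and bookkeeping the three sources of slack (in $\kl$, in $k-\kr$, and in the $\eps_2$ improvement on $G[S_1]$) so that they compose into a single uniform $\eps_3>0$, and handling the min-4-cut branch cleanly so that the ``no crossings'' part of the Laminar promise is actually enforced on $G[S_1]$.
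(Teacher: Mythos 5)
Your overall skeleton (induction, the conformity/Guess case split, the bound $c_i \le w(\partial S_1^*)$, the thresholds $\kl,\kr$ with the ``win-win'' slack argument, and finally reducing the remaining regime to \Laminarkcut{k}) matches the paper's proof closely. The genuine gap is in the construction of the Laminar instance: you assert that at iteration $\kl$ the reference partition has \emph{exactly one} mixed part $S_1$ that meets several optimal pieces, with every other $S_i$ strictly contained in a single $S_j^*$, and you accordingly pick $\br=(\kr-\kl+1,1,\dots,1)$. That claim is false in general, and \As1 alone does not give it to you. For instance, take $\kl=3$, $\kr=k$, and parts $S_1=S_1^*\setminus X$, $S_2=X\cup(S_2^*\setminus Y)$, $S_3=Y\cup S_3^*\cup\cdots\cup S_k^*$ for nonempty proper $X\subsetneq S_1^*$, $Y\subsetneq S_2^*$: no union of $S_i$'s conforms to any union of $S_j^*$'s, yet both $S_2$ and $S_3$ intersect at least two optimal parts. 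In that case running Laminar only on one piece with target $\kr-\kl+1$ is wrong --- the restriction of $\calS^*$ to that single $S_i$ is \emph{not} a $(\kr-\kl+1)$-cut with near-mincut boundaries, so the Laminar promise fails and the $(2-\eps_2)$ guarantee doesn't apply.

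The paper avoids this by defining, for each $i\in[\kl]$, $r_i$ to be the number of $S_1^*,\dots,S_{\kr-1}^*,\Sr$ meeting $S_i$ (so several $r_i$'s can exceed $1$), proving $|\br|\ge\kl+\kr-1$ via connectivity of the bipartite intersection graph (which is what \As1 actually buys you), and then showing (Claim~\ref{clm:promises}) that \emph{each} $G[S_i]$ with $r_i\ge2$ individually satisfies the Laminar promises. The Laminar costs are then summed across all such $i$, compared against the pieces of $\Opt$ internal to each $S_i$, and the leftover $k-|\br|\le k-\kr-\kl+1$ Complete steps are charged at $w(\partial S_1^*)$ each. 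You would need to replace your ``single mixed part'' step with this per-part argument and the associated bookkeeping; the rest of your plan then goes through essentially as you describe.
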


\begin{proof}
  We prove the lemma by induction on $k$.  The value of $\eps_3$ will be
  determined later.  The base case $k = 1$ is trivial.  Fix some value
  of $k$, and a graph $G$.  Let $\calS = \{S_1, \dots, S_k\}$ be the
  final reference partition generated by the execution of $\Main(G, k)$,
  and let $c_1, \dots, c_{k - 1}$ be the values associated with it. From the
  definition of the $c_i$'s in Procedure~\Main, $\sum_{i=1}^{k - 1} c_i =
  w(E(S_1, \dots, S_k))$. The $k$-partition returned by $\Main(G, k)$ is
  no worse than this partition $\calS$ (because of the update on
  line~\ref{line:lastupdate}), and hence has cost at most $\sum_{i=1}^{k-1}
  c_i = w(E(S_1, \dots, S_k))$.  Let us fix an optimal $k$-cut $\calS^*
  = \{ S^*_1, \dots, S^*_k\}$, and let $w(\partial{S^*_1}) \leq \dots
  \leq w(\partial{S^*_k})$.  Let $\Opt := w(E(S^*_1, \dots, S^*_k)) =
  \sum_{i=1}^k w(\partial{S^*_i}) / 2$.

  \begin{definition}[Conformity]
    \label{def:conform}
    For a subset $\emptyset \neq U \subsetneq V$, we say that $U$ {\em
      conforms} to partition $\calS$ if there exists a subset $J
    \subsetneq [k]$ of parts such that $U = \cup_{j \in J} S_j$. (See Figure~\ref{fig:conform}.)
  \end{definition}
  The following claim shows that if there exists a subset $\emptyset
  \neq I \subsetneq [k]$ of indices such that $\cup_{i\in I} S^*_i$
  conforms to $\calS$, the induction hypothesis guarantees a $(2 -
  \eps_3)$-approximation.
  \begin{claim}
    Suppose there exists a subset $\emptyset \neq I \subsetneq [k]$ such
    that $\cup_{i \in I} S_i^*$ conforms to $\calS$. Then $\Main(G, k)$
    achieves a $(2 - \eps_3)$-approximation.
    \label{claim:good}
  \end{claim}
  \begin{proof}
    Since $S^*_I := \cup_{i \in I} S^*_i$ conforms to $\calS$, during
    the run of $\Guess(G, k, \calS)$ it will record the $k$-partition
    $(\Main(\inducedG{S^*_I}, |I|), \Main(\inducedG{V \setminus S^*_I},
    k - |I|) )$, and hence finally output a $k$-partition which cuts no
    more edges than this starting partition.  By the induction
    hypothesis, $\Main(\inducedG{S^*_I}, |I|)$ gives a $|I|$-cut of
    $\inducedG{S^*_I}$ whose cost is at most $(2 - \eps_3)$ times
    $w(E(S^*_i)_{i \in I})$, and $\Main(\inducedG{V \setminus S^*_I}, k
    - |I|)$ outputs a $(k - |I|)$-cut of $\inducedG{V\setminus S^*_I}$
    of cost at most $(2 - \eps_3)$ times $w(E(S^*_i)_{i \notin
      I})$. Thus, the value of the best $k$-partition returned by
    $\Main(G, k)$ is at most
    \begin{align*}
      & w(E(S^*_I, V \setminus S^*_I)) + (2 - \eps_3) \left(
        w(E(S^*_i)_{i \in I}) +  w(E(S^*_i)_{i \notin I}) \right) \\
      \leq  & \ (2 - \eps_3) w(E(S^*_1, \dots, S^*_k)) = (2 -
      \eps_3) \opt. \qedhere 
    \end{align*}
  \end{proof}
  Therefore, to prove Lemma~\ref{lem:apx-main}, it suffices to assume
  that no collection of parts in \Opt conforms to our partition at any
  point in the algorithm. I.e.,
  \begin{leftbar}
    \As{1}: for every subset $\emptyset \neq I \subsetneq [k]$, $\cup_{i
      \in I} S^*_i$ does not conform to $\calS = \{ S_1, \dots, S_k\}$.
  \end{leftbar}

  Next, we study how $\Opt$ is related to $w(\partial{S_1^*})$.  Note
  that $\Opt \geq (k/2) \cdot w(\partial{S_1^*})$.  The next claim shows
  that we can strictly improve the $2$-approximation if $\Opt$ is even
  slightly bigger than that.

  \begin{claim}
    For every $i = 1, \dots, k-1$, $c_i \leq w(\partial{S^*_1})$. Moreover,
    if $\Opt \geq (k - 1)w(\partial{S_1^*}) / (2 - \eps_3)$, $\Main(G,
    k)$ achieves a $(2 - \eps_3)$-approximation.
    \label{claim:notgood}
  \end{claim}
  \begin{proof}
    Consider the beginning of an arbitrary iteration of the while loop
    of $\Main(G, k)$. Let $k'$ and $\calS' = \{ S_1, \dots, S_{k'} \}$
    be the values at that iteration. By \As{1}, set $S_1^*$ does not
    conform to $\calS'$ (because $\calS'$ only gets subdivided as the
    algorithm proceeds, and $S_1^*$ does not conform to the final partition
    $\calS$). So there exists some $i \in [k']$ such that $S_i$
    intersects both $S^*_1$ and $V \setminus S^*_1$. If we consider
    $\inducedG{S_i}$ and its mincut,
    \[
    \mincut(\inducedG{S_i}) \leq 
    w(E(S_i \cap S^*_1, S_i \setminus S^*_1))
    \leq w(\partial{S^*_1}).
    \]
    Now the new $c_j$ values created in this iteration of the while loop
    are at most the smallest mincut value, so we have that each $c_j
    \leq w(\partial{S_1^*})$.  Therefore,
    \[
    w(E(S_1, \dots, S_k)) = \sum_{i=1}^{k - 1} c_i \leq (k - 1)\cdot
    w(\partial{S_1^*}), 
    \]
    and $\Main(G, k)$ achieves a $(2 - \eps_3)$-approximation if $(k
    - 1) w(\partial{S^*_1}) \leq (2 - \eps_3) \Opt$.
  \end{proof}
  Consequently, it suffices to additionally assume that $\Opt$ is
  close to $(\nicefrac{k}{2}) \, w(\partial{S^*_1})$. Formally,
  \begin{leftbar}
    \As{2}: $ \Opt < w(\partial{S^*_1}) \cdot \frac{k - 1}{2 - \eps_3} $.
  \end{leftbar}
  Recall that $\eps_1, \eps_2 > 0$ are the parameters such that
  there is a $(2 - \eps_2)$-approximation algorithm for
  $\Laminarcut{k}{\eps_1}$.  Let $\kl \in [k]$ be the smallest
  integer such that $c_{\kl} > w(\partial{S^*_1}) (1 -
  \nicefrac{\eps_1}{3})$ (set $\kl = k$ if there is no such integer).
  (See Figure~\ref{fig:histo}.)
  In other words, $\kl$ is the value of $k'$ in the while loop of
  $\Main(G, k)$ when both $\min_i \mincut(\inducedG{S_i})$ and $\min_i
  \minfourcut(\inducedG{S_i}) / 3$ are bigger than $w(\partial{S^*_1}) (1 -
  \nicefrac{\eps_1}{3})$ for the first time.  Let $\eps_4 > 0$ be a constant
   satisfying
   \begin{equation}
     \label{eq:para_1}
     (2/3) \cdot \eps_1 \eps_4 \geq \eps_3.
   \end{equation}
   The next claim shows that we are done if $\kl$ is large. 
   \begin{claim}
     If $\kl \geq \eps_4 k$, $\Main(G, k)$ achieves a $(2 -
     \eps_3)$-approximation.
    \label{clm:left_tail}
   \end{claim}
   \begin{proof}
     If $\kl \geq \eps_4 k$, we have
     \begin{align*}
       \sum_{i=1}^{k-1} c_i &\leq 
       (\kl - 1) (1 - \nicefrac{\eps_1}{3})\cdot w(\partial{S^*_1}) + 
       (k - \kl)\cdot w(\partial{S^*_1})  \\
       & \leq k \cdot w(\partial{S^*_1})\cdot (1 - \nicefrac{\eps_1 \eps_4}{3})
       \leq (2 - (\nicefrac23) \eps_1 \eps_4) \Opt \leq (2 - \eps_3)
       \Opt.  \qedhere 
     \end{align*}
   \end{proof}
   Thus, we can assume that our algorithm finds very few cuts appreciably
   smaller than $w(\partial{S^*_1})$.
   \begin{leftbar}
     \As{3}: $\kl < \eps_4 k$.
   \end{leftbar}
   Let $\kr \in [k]$ be the smallest number such that
   $w(\partial{S^*_{\kr}}) > w(\partial{S^*_1}) (1 +
   \nicefrac{\eps_1}{3})$; let it be $k$ if there is no such
   number. (Again, see Figure~\ref{fig:histo}.)  Observe that $\kl$ is
   defined based on our algorithm, whereas $\kr$ is defined based on the
   optimal solution.  Let $\eps_5 > 0$ be a constant satisfying:
   \begin{equation}
     \label{eq:para_2}
     \frac{1}{2 - \eps_3} \leq \frac{1 + \nf{\eps_1 \eps_5}{3}}{2}
     ~~~\Leftrightarrow~~~ (1 + \nf{\eps_1 \eps_5}{3})(2 -
     \eps_3) \geq 2.  
   \end{equation}
   The next claim shows that $\kr$ should be close to $k$. 
   \begin{claim}
     $\kr \geq (1 - \eps_5)k$. 
   \end{claim}
   \begin{proof}
     Suppose that $\kr <(1 - \eps_5) k$. We have
     \begin{align*}
       & \ \frac{ k \cdot w(\partial{S^*_1}) }{2 - \eps_3} \stackrel{\As{2}}{>} \Opt = \frac{1}{2} \sum_{i = 1}^k w(\partial{S^*_i}) \\
       \geq& \ \frac{w(\partial{S^*_1})}{2} \left( (1 - \eps_5)k +
         \eps_5 k (1 + \eps_1 / 3) \right)
       = 
       \frac{k\cdot w( \partial{S^*_1} )}{2} \left( 1 + \nf{\eps_1 \eps_5}{3} \right),
     \end{align*}
     which contradicts~\eqref{eq:para_2}. 
   \end{proof}
   Therefore, we can also assume that very few cuts in \Opt 
   are appreciably larger than $w( \partial{S^*_1})$.
   \begin{leftbar}
     \As{4}: $\kr \geq (1 - \eps_5) k$. 
   \end{leftbar}

   \medskip\textbf{Constructing an Instance of Laminar Cut:} In order to
   construct the instance for the problem, let $\Sr = \cup_{i = \kr}^{k}
   S^*_i$ be the union of these last few components from $\calS^*$ which
   have ``large'' boundary.  Consider the iteration of the while loop
   when $k' = \kl$ and consider $S_1, \dots, S_{\kl}$ in that
   iteration. By its definition, $c_{\kl} > w(\partial{S_1^*}) (1 -
   \nicefrac{\eps_1}{3})$. Hence
   \begin{gather}
     \min_i \mincut(\inducedG{S_i}) > w(\partial{S_1^*}) (1 -
     \nicefrac{\eps_1}{3}),  
     \label{eq:stop1} \\
     \min_i \minfourcut(G[S_i]) > 3 w(\partial{S_1^*}) (1 -
     \nicefrac{\eps_1}{3}). 
     \label{eq:stop2}
   \end{gather}
   In particular, \eqref{eq:stop2} implies that no two near-min-cuts
   cross, since two crossing near-min-cuts will result in a $4$-cut of
   weight roughly at most $2 w(\partial{S_1^*})$.
   However, we are not yet done, since we need to factor out the effects
   of the $\kl - 1$ ``small'' cuts found by our algorithm. For this, we need
   one further idea.

   Let $\br = (r_1, r_2, \ldots, r_{\kl}) \in [k]^{\kl}$ be such that
   $r_i$ is the number of sets
   $S^*_1, \dots, S^*_{\kr - 1}, S^*_{\geq \kr}$ that intersect with
   $S_i$, and let $|\br| := \sum_{i = 1}^{\kl} r_i$.  If we consider the
   bipartite graph where the left vertices are the algorithm's
   components $S_1, \dots, S_{\kl}$, the right vertices are
   $S^*_1, \dots, S^*_{\kr - 1}, S^*_{\geq \kr}$, and two sets have an
   edge if they intersect, then $|\br|$ is the number of edges.  Since
   there is no isolated vertex and the graph is connected (otherwise
   there would exist $\emptyset \neq I \subsetneq [k']$ and
   $\emptyset \neq J \subsetneq [k]$ with
   $\cup_{i \in I}S_i = \cup_{j \in J} S^*_j$ contradicting~\As{1}),
   the number of edges is $|\br| \geq \kl + \kr - 1$.

   \begin{claim}
     \label{clm:promises}
     For each $i$ with $r_i \geq 2$, 
     the graph $\inducedG{S_i}$ satisfies the two promises of the problem
     $\Laminarcut{r_i}{\eps_1}$.
   \end{claim}
   \begin{proof}
     Fix $i$ with $r_i \geq 2$. Let
     $J := \{ j \in [\kr - 1] \mid S_i \cap S^*_j \neq \emptyset \}$ be
     the sets $S_j^*$ among the first $\kr - 1$ sets in the optimal
     partition that intersect $S_i$.  Since $|J| \geq r_i - 1$ and $r_i \geq 2$, $|J| \geq 1$.
     Note that
     $(1 - \nf{\eps_1}3)\cdot w(\partial{S^*_1}) <
     \mincut(\inducedG{S_i})$ by~\eqref{eq:stop1}.
     For every $j \in J$,
     \[
     \mincut(\inducedG{S_i}) \leq 
     w(E(S_i \cap S^*_j, S_i \setminus S^*_j)) \leq w(\partial{S^*_j})
     \leq (1 + \eps_1 / 3)\; w(\partial{S^*_1}) \leq (1 + \eps_1)
     \;\mincut(\inducedG{S_i}).
     \]
     The first and second inequality hold since both parts
     $S_i \cap S^*_j$ and $S_i \setminus S^*_j$ are nonempty, and hence
     deleting all the edges in $\partial{S_j^*}$ would separate
     $G[S_i]$.
     The third inequality is by the choice of $\kr$, and the last
     inequality uses~(\ref{eq:stop1}) and the fact that
     $(1 + \nicefrac{\eps_1}{3}) \leq (1 + \eps_1)(1 -
     \nicefrac{\eps_1}{3})$ when $\eps_1 < 1/4$.  

     This implies that in $\inducedG{S_i}$, for every $j \in J$,
     $(S_i \cap S^*_j, S_i \setminus S^*_j)$ is a $(1 + \eps_1)$-mincut.
     Furthermore, in $\inducedG{S_i}$, no two $(1+\eps_1)$-mincuts cross
     because it will result a 4-cut of cost at most
     \[
     2(1+\eps_1)\; \mincut(\inducedG{S_i}) \leq 2(1+\eps_1) (1 + \eps_1 / 3)
     \; w(\partial{S_1^*}),
     \]
     contradicting~\eqref{eq:stop2}. (Note that $2(1 +
     \eps_1) (1 + \nicefrac{\eps_1}{3}) \leq 3(1 - \nicefrac{\eps_1}{3})$ when
     $\eps_1 < 1/4$.) Hence, in $\inducedG{S_i}$, the two promises for
     $\Laminarcut{r_i}{\eps_1}$ are satisfied.
   \end{proof}
   Our algorithm $\Main(G, k$) runs $\Laminar(\inducedG{S_i}, r_i)$ for
   each $i \in [\kl]$ when it sets $k' = \kl$ and the vector $\br$ as
   defined above.  As in the algorithm, let
   $\calC = \{C_1, \dots, C_{k}\}$ be the partition obtained in
   Line~\ref{line:complete}.  In other words, to obtain the $k$ sets
   $C_1, \dots, C_k$ from the set $V$, we take the reference partition
   $S_1, \dots, S_{\kl}$ and further partition these sets using Laminar
   to get $|\br|$ parts $C_1, \dots, C_{|\br|}$. If $|\br| \geq k$, we
   can merge the last $|\br| - k + 1$ parts to get exactly $k$ parts if
   we want (but we will not take any edge savings into account in this
   calculation). If $|\br| < k$, we get $k - |\br|$ more parts using
   the Complete procedure.  

   The total cost of this solution $\calC$ is $w(E(C_1, \dots, C_k))$,
   which is $\sum_{j=1}^{\kl - 1} c_j \leq (\kl - 1) w(\partial{S^*_1})$
   plus the cost of $\Laminar(\inducedG{S_i}, r_i)$ for all
   $i \in [\kl]$ and the cost of $\Complete$.  Since
   Claim~\ref{clm:promises} considers the partition of each
   $\inducedG{S_i}$ obtained by cutting edges belonging to the optimal
   $k$-partition, 
   the sum of the cost of the $r_i$-partition we compare to in 
   each \Laminarkcut{r_i} is exactly $\Opt$.
   Hence the cost of the
   solution given by $\Laminar(\inducedG{S_i}, r_i)$ summed over
   $i \in [\kl]$ is bounded by $(2 - \eps_2) \Opt$, by the approximation
   assumption in Theorem~\ref{thm:reduction1}. 

   If $\cup_{i \in I} S^*_i$ for some $\emptyset \neq I \subsetneq [k]$
   conforms to $\calC$, then since \Main\ also records $\Guess(\calC)$,
   the proof of Claim~\ref{claim:good} guarantees that $\Main(G, k)$
   gives a $(2 - \eps_3)$ approximation using the induction hypothesis.
   Otherwise, $S^*_1$ does not conform to $\calC$, so the arguments used
   in the proof of Claim~\ref{claim:notgood} show that the cost of
   $\Complete$ is at most $(k - |\br|)\, w(\partial{S^*_1})$ if
   $|\br| \leq k$, and $0$ otherwise.  Since $|\br| \geq \kl + \kr - 1$,
   the total cost $w(E(C_1, \dots, C_k))$ is then bounded
   by
   \begin{align*}
     & (\kl - 1 )  w(\partial{S^*_1}) + (2 - \eps_2) \Opt + 
     (k - \kl - \kr + 1)  w(\partial{S^*_1}) \\
     = & \ (2 - \eps_2) \Opt + (k - \kr)  w(\partial{S^*_1}) \\
     \leq & \ (2 - \eps_2) \Opt + \eps_5  k \cdot
     w(\partial{S^*_1})  \tag{by \As{4}}\\ 
     \leq & \ (2 - \eps_2 + 2 \eps_5) \Opt. 
   \end{align*}
   Therefore, if
   \begin{equation}
     \eps_3 \leq \eps_2 -2 \eps_5,
     \label{eq:para_4}
   \end{equation}
   then $\Main(G, k)$ gives a $(2 - \eps_3)$ approximation in every
   possible case.  We set $\eps_3, \eps_4, \eps_5 > 0$ so that they
   satisfy the three conditions~\eqref{eq:para_1}, \eqref{eq:para_2},
   and~\eqref{eq:para_4}, namely,
   \[
   (2/3) \cdot \eps_1 \eps_4 \geq \eps_3, \quad (1 +
   \eps_1 \eps_5 / 3)(2 - \eps_3) \geq 2, 
   \quad \eps_3 \leq \eps_2 - 2 \eps_5.
   \]
   (For instance, setting $\eps_4 = \eps_5 = \min(\eps_1,
   \eps_2) / 3$ and $\eps_3 = \eps_4^2$ works.)
\end{proof}

\subsection{Running Time}

We prove that this algorithm also runs in FPT time, finishing the proof
of Theorem~\ref{thm:reduction1}.
\begin{lemma}
Suppose that $\Laminar(G, k)$ runs in time $f(k) \cdot g(n)$. 
Then Main$(G, k)$ runs in time $2^{O(k^2 \log k)} \cdot f(k) \cdot (g(n) + n^4 \log^3 n)$.
\end{lemma}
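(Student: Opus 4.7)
The plan is to decouple the analysis into (i) bounding $\Phi(k)$, the total number of recursive $\Main$ invocations in the call tree rooted at $\Main(G,k)$, and (ii) bounding the non-recursive work per invocation. Fix one execution of $\Main(G,k)$. The outer while loop runs at most $k-1$ times (each iteration raises $k'$ by $1$ or $3$), and the inner for loop on line~\ref{line:start-of-check} ranges over $\br \in [k]^{k'}$ with $k' \le k$, yielding at most $\sum_{k'=1}^{k-1} k^{k'} \le k^{k+1}$ inner iterations overall; each such iteration, plus the final line~\ref{line:lastupdate}, triggers one call to $\Guess$. Each $\Guess$ invocation iterates over the $2^{k}-2$ nonempty proper subsets of $[k]$ and the $k-1$ choices of split size, and for each it issues two recursive $\Main$ calls on induced subgraphs with parameters $k_1, k_2 \ge 1$ satisfying $k_1 + k_2 = k$. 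Hence $\Main(G,k)$ spawns at most
\[
N(k) \;:=\; (k^{k+1}+1)\,(2^{k}-2)(k-1) \;=\; 2^{O(k\log k)}
\]
pairs of recursive $\Main$ calls, each on a strictly smaller parameter.

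With base case $\Phi(1)=1$, the recursion satisfies
\[
\Phi(k) \;\le\; 1 \,+\, N(k)\cdot \max_{\substack{k_1+k_2=k\\ k_1,k_2\ge 1}}\bigl(\Phi(k_1)+\Phi(k_2)\bigr).
\]
I would close this by the ansatz $\Phi(j)\le (2N(k)+1)^{j}$ for $j<k$: since $k_1, k_2 \le k-1$, we get $\Phi(k_1)+\Phi(k_2)\le 2(2N(k)+1)^{k-1}$, and plugging in yields $\Phi(k)\le (2N(k)+1)^k$. Since $N(k)=2^{O(k\log k)}$, this gives $\Phi(k)\le 2^{O(k^2\log k)}$.

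For the non-recursive work inside one $\Main$ call, each of the $O(k^{k+1})$ inner iterations runs $\Laminar(\inducedG{S_i}, r_i)$ for at most $k$ parts at total cost $k\cdot f(k)\, g(n)$, runs $\Complete$ (which is $O(k)$ min-cut computations, each $\tilde O(n^3)$), and performs an $O(n^2)$-time $\Record$; the $O(k)$ outer iterations additionally each compute $\mintwocut$ and $\minfourcut$ over all current components, absorbable into $\tilde O(n^4)$ per outer iteration. Summing, one $\Main$ invocation (excluding recursive calls) costs $2^{O(k\log k)}\cdot f(k)\cdot g(n)\,+\,\tilde O(n^{4})$. Multiplying by $\Phi(k)$ and using $f(k)\ge 1$ gives
\[
T_{\Main}(k,n) \;\le\; 2^{O(k^2\log k)} \cdot f(k)\cdot \bigl(g(n)+n^4\log^3 n\bigr),
\]
as required. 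The main obstacle is keeping the exponent at $O(k^2\log k)$ rather than $O(k^3\log k)$: the ansatz $(2N(k)+1)^{k}$ is tight only because $\Guess$ recurses with $k_1+k_2=k$ and $k_1,k_2\ge 1$, so the recursion depth is at most $k$; a naive tree-size bound that did not exploit $k_1+k_2=k$ would give a far worse exponent.
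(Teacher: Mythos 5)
Your proof is correct and takes essentially the same route as the paper: both rely on (a) branching factor $2^{O(k\log k)}$ per $\Main$ invocation coming from the $k^{O(k)}$ choices of $\br$ times the $2^{O(k)}$ guesses in $\Guess$, (b) the crucial observation that each recursive call strictly decreases the parameter (since $k_1,k_2\le k-1$ when $k_1+k_2=k$), giving recursion depth at most $k$ and hence the $2^{O(k^2\log k)}$ blowup, and (c) per-invocation non-recursive work dominated by $2^{O(k\log k)}\cdot f(k)\cdot(g(n)+n^4\log^3 n)$. The paper expresses this as a single recurrence $\Time(\Main(k))\le 2^{O(k\log k)}\Time(\Main(k-1))+2^{O(k\log k)}f(k)(g(n)+n^4\log^3 n)$ and unrolls it, whereas you decouple into an invocation count $\Phi(k)$ times per-invocation work; the two bookkeeping styles are interchangeable and arrive at the same bound.
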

\begin{proof}
Let $\Time(\text{P})$ denote the running time of a procedure \text{P}. 
Here each procedure is only parameterized by the number of sets it outputs (e.g., $\Main(k), \Guess(k), \Complete(k), \Laminar(k)$). 
We use the fact that the global min-cut can be computed in time $O(n^2 \log^3 n)$~\cite{KS96} and the min-$4$-cut can be computed in $O(n^4 \log^3 n)$~\cite{Levine00}.
First, $\Time(\Complete(k)) = O(kn^2 \log^3 n)$. For $\Guess$ and $\Main$, 
\[
\Time(\Guess(k)) \leq k \cdot 2^{k + 1} \cdot ( \Time(\Main(k - 1)) + O(n) ),
\]
and 
\begin{align*}
\Time(\Main(k)) & \leq k^k \cdot (\Time(\Laminar(k)) + 
 \Time(\Guess(k)) + \Time(\Complete(k))) + O(k n^4 \log^3 n) \\
& \leq 2^{O(k \log k)} \cdot f(k) \cdot (g(n) + O(n^4 \log^3 n)) + 
2^{O(k \log k)} \cdot \Time(\Main(k - 1)). 
\end{align*}
We can conclude $\Time(\Main(k)) \leq 2^{O(k^2 \log k)} \cdot  f(k) \cdot (g(n) + n^4 \log^3 n)$.
\end{proof}


\section{An Algorithm for \Laminarkcut{k}}
\label{sec:laminar}

Recall the definition of the \Laminarkcut{k} problem:
\LamDef*

Let $\solns_{\e_1}$ contain all partitions $S_1,\ldots,S_k$ of $V$ with
the restriction that the boundaries of the first $k-1$ parts is
small---i.e., $w(\partial{S_i}) \le (1+\e_1)\mincut(G)$ for all $i \in
[k-1]$. We emphasize that the weight of the last cut, i.e., 
$w(\partial{S_k})$, is unconstrained. In this section, we give an
algorithm to find a $k$-partition (possibly not in $\solns_{\e_1}$) with
total weight
\[ w(E(S_1,\ldots,S_k)) \le (2 - \eps_2) \min\limits_{\{S_i'\}\in
  \solns_{\e_1}} w(E(S_1',\ldots,S_k')). \]

Formally, the main theorem of this section is the following:
\begin{theorem}[Laminar Cut Algorithm]
  \label{thm:laminar}
  Suppose there exists a $(1+\delta)$-approximation algorithm for
  $\PartialVC{k}$ for some $\delta \in (0, 1/24)$ that runs in time
  $f(k) \cdot g(n)$.  Then, for any $\e_1\in(0,1/6-4\delta)$, there
  exists a $(2 - \eps_2)$-approximation algorithm for
	$\Laminarcut{k}{\eps_1}$ that runs in time $2^{O(k)}f(k)(\tilde O(n^4) + g(n))$
  for some constant $\eps_2 > 0$.
\end{theorem}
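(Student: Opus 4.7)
The plan is to encode the near-min-cuts of $G$ by a tree and then reduce \Laminarkcut{k} to a marking problem on that tree that can be attacked via the \pvc oracle. First, using promise~(i), every two $(1+\eps_1)$-mincuts are nested or disjoint, so the collection of near-min-cuts forms a laminar family. I would build the associated rooted tree $\mT$ (choosing any side of any near-min-cut as the root), map each vertex of $G$ to the unique minimal tree-node whose subtree contains it, and identify every tree edge $e$ with the near-min-cut obtained by cutting the subtree below $e$ off from the rest. Promise~(ii) then implies that there is a $k$-partition $\calS'=(S_1',\ldots,S_k')$ realised by choosing $k-1$ pairwise incomparable nodes $v_1,\ldots,v_{k-1}$ of $\mT$ and setting $S_i'$ to be the vertices of $G$ lying below $v_i$ (and $S_k'$ to be the rest). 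Letting $\mu := \mincut(G)$ and defining the \emph{saved} weight of a set of marked nodes to be $\saved(v_1,\ldots,v_{k-1}) := \sum_i w(E(\text{below } v_i)) - \text{(ordinary near-min-cut overhead)}$, a short calculation shows that if we can mark $k-1$ incomparable nodes whose saved weight is at least $\Omega(\eps_1 k \mu)$, then the resulting partition beats $2$ by a constant factor $\eps_2$ on $\opt$. So the algorithmic task is: mark $k-1$ incomparable nodes of $\mT$ to (approximately) maximize $\saved$.

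The next step is the win--win dichotomy that drives the algorithm. Call a node $x$ of $\mT$ an \emph{$s$-anchor} if it has at least $s$ children whose subtrees together yield saved weight $\Omega(\eps_1 s \mu)$. Here is where \pvc enters: testing whether a node $x$ is an anchor reduces to a \pvc instance on the ``quotient'' graph in which each child subtree of $x$ is contracted to a single vertex (with appropriate edge/vertex weights recording the saved potential), so the $(1+\delta)$-approximate \pvc oracle from Step~I lets us identify anchors up to a $(1+\delta)$ factor in $f(k)g(n)$ time. The dichotomy is: either (a)~there exist $\Omega(k)$ pairwise incomparable anchors, or (b)~all anchors lie inside some subtree of $\mT$ containing at most $k$ leaves. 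In case~(a), greedily pick an anchor, charge its children to the budget, descend, and in $k$ steps we accumulate $\Omega(\eps_1 k\mu)$ of saved weight by a single additional call to \pvc on a carefully aggregated instance. In case~(b), decompose the anchor-containing subtree into at most $2k$ root-to-leaf paths (using a standard heavy-path-type decomposition of a tree with $k$ leaves) and guess which $O(k)$ of these paths contain the actual parents of the $k-1$ optimal marks; this guessing costs $\binom{2k}{O(k)}=2^{O(k)}$. For each guess, the optimization on the guessed paths is again a \pvc instance over the children hanging off those paths, solved by one further call to the oracle.

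The final step is to stitch everything together. I would run all three branches (case~(a), case~(b) with every guess, plus the trivial partition from promise~(ii) restricted to $\mT$ nodes) and output the best $k$-partition found. The approximation analysis fixes the optimal marks $v_1^*,\ldots,v_{k-1}^*$ of the promised laminar solution; whichever case is realized by the $v_i^*$'s, the corresponding branch of the algorithm outputs marks whose saved weight is within a $(1+\delta)$ factor of optimal (because the \pvc oracle loses only that factor). Choosing $\eps_2$ small relative to $\eps_1$ and $\delta$, and using $\eps_1<1/6-4\delta$ so that the cross-promise and the near-mincut slack compose consistently, gives the stated $(2-\eps_2)$ guarantee. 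The running-time accounting is routine: the cactus-like tree of near-min-cuts is built in $\tilde O(n^4)$ time, case~(a) costs one oracle call plus $\poly(n)$, and case~(b) contributes the $2^{O(k)}$ factor from enumerating path-assignments, yielding $2^{O(k)}f(k)(\tilde O(n^4)+g(n))$ overall.

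\textbf{Main obstacle.} The step I expect to be delicate is the win--win dichotomy: I need to argue that \emph{some} branch of the algorithm, for the right guess, actually certifies enough saved weight to beat $2$. This requires a structural claim that the $k-1$ optimal marks $v_i^*$ must either spread out across many incomparable anchors or concentrate inside a small subtree, together with a charging argument showing that in either regime a \pvc-based selection of $k-1$ marks recovers a $(1-O(\delta))$ fraction of the optimal saved weight. Making this charging tight---in particular handling ``mixed'' configurations where $v_i^*$'s are partly incomparable and partly clustered---is where the constraints $\eps_1<1/6-4\delta$ and $\delta<1/24$ are used, and is the technical heart of the argument.
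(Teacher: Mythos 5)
Your overall architecture matches the paper's: build a tree from the laminar family of $(1+\e_1)$-mincuts, reduce to marking $k-1$ incomparable tree nodes to maximize ``saved'' cross-edges, introduce anchors, and set up a win--win dichotomy between having many incomparable anchors and having all anchors confined to $O(k)$ root-to-leaf paths. The many-anchors case is also essentially right (the paper formalizes your ``charge children to the budget'' via a knapsack reduction over the $(s_i, s_i-1)$ items).

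There is, however, a genuine gap in your few-anchors case that your ``main obstacle'' paragraph flags but does not resolve. After you guess the set $\mB^*$ of $O(k)$ branches containing the minimal anchors, the children hanging off those branches do \emph{not} form a single incomparable set on which you can call \pvc once: within a single branch $B$, the candidate marks are $\children((\{a^*\}\cup\anc(a^*))\cap B)$ for an unknown minimal anchor $a^*\in B$, and different choices of $a^*$ yield different (and nested) candidate sets. Guessing $a^*$ simultaneously in every branch costs $n^{\Theta(k)}$, which destroys FPT. The paper's resolution is to split the saved weight $\ell'(r)$ into \emph{internal} edges (both endpoints' parents lie in the same branch of $\mB^*$) and \emph{external} edges, and to observe that at least half of $\ell'(r)$ falls into one of the two groups. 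Internal edges are handled branch-by-branch: for each branch independently you \emph{can} afford to try all $O(n)$ positions of $a^*$ and all $s\le k-1$ via \pvc, and then a knapsack-style DP combines the per-branch profiles across the $2^{O(k)}$ guesses of $\mB^*$. External edges are handled by replacing all marks in a branch $B$ by children of $a_B$ (the top node of $B$), which can only help external savings and makes the resulting candidate set one incomparable family, so a single \pvc call suffices. Without this internal/external split and the accompanying factor-$\nicefrac12$ loss, your per-guess ``one \pvc call'' does not account for configurations where most saved weight is concentrated inside one branch at a level strictly below $a_B$; that is exactly the ``clustered'' regime you worried about, and it cannot be charged away by tuning $\e_1$ and $\delta$.

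A smaller point: your shifting step (``for each node $a_i^*$ its parent is either an anchor or an ancestor of a mark whose parent is an anchor'') needs to be stated and proved as a lemma, since it is what licenses restricting attention to the anchor-carrying branches in the first place. The paper does this via a potential-function argument (the Shifting Lemma), trading away $\e_3(k-s)\minkut$ of saved weight for the structural guarantee; without that quantitative loss bound the dichotomy is not rigorous.
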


In the rest of this section we present the algorithm and the analysis. 
For a formal description, see the pseudocode in
Appendix~\ref{sec:pseudocode-laminar}. 

\subsection{Mincut Tree}

The first idea in the algorithm is to consider the structure of a laminar family of cuts. Below, we introduce the concept of a \textit{mincut tree}. The vertices of the mincut tree are called \textit{nodes}, to distinguish them from the vertices of the original graph.

\begin{definition}[Mincut Tree]
A tree $\mT = (V_{\mT}, E_{\mT}, w_{\mT})$ is a \textbf{$(1+\e_1)$-mincut tree} on a graph $G=(V,E,w)$ with mapping $\phi : V \to V_{\mT}$ if the following two sets are equivalent:
\begin{enumerate}
\item The set of all $(1+\e_1)$-mincuts of $G$.
\item Cut a single edge $e \in E_{\mT}$ of the tree, and let $A_e
  \subset V_{\mT}$ be the nodes on one side of the cut. Define $S_e :=
  \phi^{-1}(A_e) = \{v \mid \phi(v) \in A_e\}$ for each $e\in E_{\mT}$,
  and take the set of cuts $ \{ (S_e, V \setminus S_e) : e \in E_{\mT}
  \}$.
\end{enumerate}
Moreover, for every pair of corresponding $(1+\e_1)$-mincut $(S_e, V
        \setminus S_e)$ and edge $e \in E_{\mT}$, we have $w_{\mT}(e) = w(E(S_e,
        V\setminus S_e))$.
\end{definition}

We use the term \textit{mincut tree} without the $(1+\e_1)$ when the
value of $\e_1$ is either implicit or irrelevant. 

For the rest of this section, let 
\[ \minkut := \mincut(G) \] for brevity. Observe that the last condition
implies that $\minkut\le w_{\mT}(e)\le(1+\e_1)\minkut$ for all $e\in
E_{\mT}$. The existence of a mincut tree (and the algorithm for it)
assuming laminarity, is standard, going back at least to Edmonds and
Giles~\cite{EG75}.

\begin{theorem}[Mincut Tree Existence/Construction]
\label{thm:mincutTreeExistence}
  If the set of $(1+\e_1)$-mincuts of a graph is laminar, then an
  $O(n)$-sized $(1+\e_1)$-mincut tree always exists, and can
  be found in $O(n^3)$ time.
\end{theorem}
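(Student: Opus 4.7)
The plan is to build the tree $\mT$ directly from the hypothesized laminar family $\cF$ of $(1+\e_1)$-mincuts of $G$, establishing existence via the standard laminar-family-to-tree correspondence and then addressing the construction time separately. First I would fix an arbitrary vertex $v_0 \in V$ and, for each cut $(A, V \setminus A) \in \cF$, pick the side $S_A$ not containing $v_0$. The non-crossing hypothesis on $\cF$ translates into ordinary laminarity of the family $\cL := \{S_A : (A, V\setminus A) \in \cF\}$ on $V \setminus \{v_0\}$, since two such oriented sides $S, S'$ must satisfy one of $S \subseteq S'$, $S' \subseteq S$, or $S \cap S' = \emptyset$ (the fourth non-crossing alternative $S \cup S' = V$ is ruled out because both sets exclude $v_0$). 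I would then build the usual Hasse tree: create one tree node for every element of $\cL \cup \{V\}$, take $V$ as the root, and draw a tree edge from each $S \in \cL$ to the unique smallest member of $\cL \cup \{V\}$ strictly containing it. Define $\phi(v)$ to be the node of the smallest set in $\cL \cup \{V\}$ containing $v$; a direct check shows that cutting any tree edge $e$ between some $S$ and its parent splits off exactly those nodes sitting inside $S$, so $\phi^{-1}(A_e) = S$ and the cut produced equals $(S, V \setminus S) \in \cF$. Setting $w_\mT(e) := w(\partial S_e)$ realizes the required weight equality by definition.

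The $O(n)$ size bound is immediate from the standard fact that a laminar family on $n-1$ elements has at most $2(n-1)-1$ members, so $|V_\mT|, |E_\mT| = O(n)$. For the $O(n^3)$-time construction, the main task is to enumerate $\cF$, which I would handle by invoking Karger's theorem that all $\alpha$-approximate minimum cuts can be enumerated in $\tilde O(n^{2\alpha})$ time via randomized edge contractions; since $\e_1 < 1/6$ in all our applications, we have $\alpha = 1 + \e_1 < 3/2$, so $\cF$ is found in $\tilde O(n^3)$ time (and contains only $O(n)$ members by laminarity). Given $\cF$, the oriented family $\cL$ is produced in $O(n^2)$ time by checking containment of each pair of cut-sides, and the Hasse tree is assembled by sorting $\cL$ in increasing order of cardinality and linking each set to its smallest already-seen strict superset, again in $O(n^2)$ time.

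The main conceptual obstacle is verifying the bijection in the existence argument: I must check that $\cL$ contains exactly one representative per cut in $\cF$ and that cutting any tree edge recovers the corresponding cut in $G$. Once orientation by $v_0$ is fixed the first part is automatic, and the second part follows from the Hasse-tree structure combined with laminarity: for $v \in S$, the smallest set in $\cL \cup \{V\}$ containing $v$ must itself lie in $S$ (either $S$ or one of its descendants in the tree), and conversely, for $v \notin S$, any set in $\cL$ containing $v$ is either a strict superset of $S$ or disjoint from $S$ and therefore cannot sit below $S$ in the tree. The remaining ingredients---size bound and near-mincut enumeration---are then standard.
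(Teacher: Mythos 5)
Your proposal follows essentially the same route as the paper: orient the laminar family of near-mincuts with respect to a fixed vertex, build the Hasse tree of the resulting nested set system, read off the size bound from the standard $2n-1$ bound on laminar families, and construct the tree by precomputing pairwise containments. (The paper orients each cut toward a fixed vertex $v$ where you orient away from $v_0$; the two conventions are complementary and equivalent.) The one place you go beyond the paper's sketch is in filling in how the family $\cF$ of $(1+\e_1)$-mincuts is actually obtained: the paper cites~\cite{KV12} for the laminar-to-tree construction but does not say how to enumerate the near-mincuts, whereas you invoke Karger--Stein contraction to list all $\alpha$-approximate mincuts in $\tilde O(n^{2\alpha})$ time. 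This is a legitimate way to fill the gap, though it makes the construction randomized and introduces polylog factors, so your bound is $\tilde O(n^3)$ rather than the $O(n^3)$ the theorem states (harmless for the paper's overall $\tilde O(n^4)$ runtime, which is dominated by the min-4-cut step, but worth flagging since the theorem as stated is deterministic).

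One small slip in your running-time accounting: you claim the oriented family $\cL$ and the Hasse tree are built in $O(n^2)$ time ``by checking containment of each pair of cut-sides,'' but a single containment test on two subsets of $V$ costs $O(n)$, and with $O(n)$ sets there are $O(n^2)$ pairs, so the containment precomputation alone is $O(n^3)$. That matches the paper's stated bound rather than beating it; the $O(n^2)$ claim only holds \emph{after} the containment table is in hand.
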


\begin{proof}
  We refer the reader to~\cite[Section~2.2]{KV12}. Fix a vertex $v \in
  V$, and for each $(1+\e_1)$-mincut $(S,V\setminus S)$, pick the side
  that contains $v$; this family of subsets of $V$ satisfies the laminar
  condition in Proposition~2.12 of that book. Corollary~2.15 proves that
  this family has size $O(n)$, and the construction of $T$ in
  Proposition~2.14 gives the desired mincut tree. Furthermore, we can
  compute the mincut tree in $O(n^3)$ time as follows: first precompute
  whether $X \subset Y$ for every two sets $X$ and $Y$ in the family,
  and then compute $T$ following the construction in the proof of
  Proposition~2.14.
\end{proof}

\begin{definition}[Mincut Tree Terminology] Let $\mT$ be a rooted mincut
  tree. For $a \in V_{\mT}$, define the following terms:
\begin{OneLiners}
\item[1.] $\children(a)$: the set of children of node $a$ in the rooted tree.
\item[2.] $\desc(a)$: the set of descendants of $a$, i.e., nodes $b \in V_{\mT} \setminus a$ whose path to the root includes $a$.
\item[3.] $\anc(a)$: the set of ancestors of $a$, i.e., nodes $b \in V_{\mT} \setminus a$ on the path from $a$ to the root.
\item[4.] $\subtree(a)$: vertices in the subtree rooted at $a$, i.e., $\{a\} \cup \desc(a)$.
\end{OneLiners}
\end{definition}

For the set of partitions $\solns_{\e_1}$ (as defined at the beginning
of this section), we observe the following.

\begin{claim}[Representing Laminar Cuts in $\mT$]
  Let $\mT = (V_{\mT}, E_{\mT}, w_{\mT})$ be a $(1+\e_1)$-mincut tree of
  $G=(V,E,w)$, and consider a partition $\{S_1, \ldots, S_k\} \in
  \solns_{\e_1}$. Then, there exists a root $r \in V_{\mT}$ and nodes
  $a_1, \ldots, a_{k-1} \in V_{\mT} \setminus r$ such that if we root
  the tree $\mT$ at $r$,
  \begin{enumerate}
  \item For any two nodes in $\{a_1, \ldots, a_{k-1}\}$, neither is an
    ancestor of the other. (We call two such nodes
    \textbf{incomparable}).
  \item For each $v_i$, let $A_i := \subtree(a_i)$, and let $A_k =
    V_{\mT} \setminus \bigcup_{i=1}^{k-1} A_i$ (so that $r \in A_k$). We
    have the two equivalences $\{\phi^{-1}(A_i) \mid i \in [k-1]\} = \{S_1,
    \ldots, S_{k-1}\}$ and $\phi^{-1}(A_k) = S_k$. In other words, the
    components $A_i \subset V_{\mT}$, when mapped back by $\phi^{-1}$,
    correspond exactly to the sets $S_i \subset V$, with the additional
    guarantee that $A_k$ and $S_k$ match. 
\end{enumerate}
\end{claim}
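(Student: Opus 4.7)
The plan is to use the defining property of the mincut tree to convert each of the first $k-1$ parts into a tree edge, and then choose a root in $S_k$ so that those edges naturally become "subtree-cuts." Concretely, the first step is to note that by the promise $\{S_1,\ldots,S_k\}\in\solns_{\e_1}$, each cut $(S_i,V\setminus S_i)$ for $i\in[k-1]$ has weight at most $(1+\e_1)\minkut$ and is therefore a $(1+\e_1)$-mincut of $G$. By the bijection guaranteed in the definition of a $(1+\e_1)$-mincut tree, there is a unique edge $e_i\in E_{\mT}$ such that removing $e_i$ produces (up to side-swapping) exactly the cut $(S_i,V\setminus S_i)$.

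Next I would choose the root: pick any $v\in S_k$ and let $r:=\phi(v)$, rooting $\mT$ at $r$. For each $i\in[k-1]$, let $a_i$ be the endpoint of $e_i$ farther from $r$, so that $\subtree(a_i)$ is precisely the side of the cut $e_i$ not containing $r$. Since $v\in S_k\subseteq V\setminus S_i$, we have $r\in\phi(V\setminus S_i)$, and therefore the side $\subtree(a_i)$ is the one whose $\phi$-preimage equals $S_i$; that is, $\phi^{-1}(\subtree(a_i))=S_i$. Note $a_i\ne r$ because $r$ lies on the opposite side of $e_i$ from $a_i$.

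Then I would verify incomparability of $a_1,\ldots,a_{k-1}$. If some $a_i$ were an ancestor of $a_j$, then $\subtree(a_j)\subseteq\subtree(a_i)$, which after applying $\phi^{-1}$ yields $S_j\subseteq S_i$; but $S_i$ and $S_j$ are distinct nonempty parts of a partition and hence disjoint, a contradiction. Incomparability also implies that the sets $A_i:=\subtree(a_i)$ for $i\in[k-1]$ are pairwise disjoint subsets of $V_{\mT}$, so $A_k:=V_{\mT}\setminus\bigcup_{i=1}^{k-1}A_i$ is well-defined and $\{A_1,\ldots,A_k\}$ partitions $V_{\mT}$. Taking $\phi^{-1}$ of this partition and using that $\{S_1,\ldots,S_k\}$ partitions $V$ with $\phi^{-1}(A_i)=S_i$ for $i\in[k-1]$, we conclude $\phi^{-1}(A_k)=S_k$ by complementation, giving both claimed equivalences. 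Finally, $r\in A_k$ since $r=\phi(v)$ with $v\in S_k=\phi^{-1}(A_k)$.

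The main potential obstacle is keeping straight which side of each tree edge corresponds to $S_i$ versus $V\setminus S_i$; once the root is fixed inside $S_k$, the "lower endpoint" convention resolves this uniformly for all $i$ and drives the rest of the argument. No sophisticated machinery beyond the mincut-tree bijection and disjointness of the optimal parts is needed.
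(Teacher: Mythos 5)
Your proof is correct and takes essentially the same route as the paper: identify for each $S_i$ ($i\le k-1$) a tree edge $e_i$ realizing the near-mincut, choose a root whose $\phi$-preimage lies in $S_k$ so that it avoids every $\phi^{-1}$-preimage of $S_i$, take $a_i$ to be the lower endpoint of $e_i$, and verify disjointness/incomparability via disjointness of the partition parts. The only differences are cosmetic: you make explicit the choice $r=\phi(v)$ for $v\in S_k$ where the paper simply asserts that some node of $V_{\mT}$ lies outside all the $A_i'$, and you spell out the complementation argument for $\phi^{-1}(A_k)=S_k$; these fill in details the paper leaves implicit but do not change the argument.

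One tiny stylistic note: the intermediate step ``$r\in\phi(V\setminus S_i)$'' is a slight detour; the direct statement is that $v\in V\setminus S_i=\phi^{-1}(V_{\mT}\setminus A_i)$ implies $r=\phi(v)\in V_{\mT}\setminus A_i$, which is exactly the side containing the root. Also, ``unique edge $e_i$'' is slightly stronger than the definition guarantees, but you never use uniqueness, only existence, so no harm is done.
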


\begin{proof}
  Since $S_i$ is a $(1+\e_1)$-mincut for each $i \in [k-1]$, there
  exists an edge $e_i \in E_{\mT}$ such that the set $A_i'$ of nodes on
  one side of $e_i$ satisfies $\phi^{-1}(A_i') = S_i $. The sets $A_i'$
  for $i\in[k-1]$ are necessarily disjoint, and they cannot span all
  nodes in $V_{\mT}$, since $S_k$ is still unaccounted for. If we root
  $\mT$ at a node $r$ not in any $A_i'$, then each $A_i'$ is a subtree
  of the rooted $\mT$. Altogether, the roots of the subtrees $A_i'$
  satisfy condition~(1) of the lemma, and the $A_i'$ themselves satisfy
  condition~(2).
\end{proof}

For a graph $G=(V,E,w)$ and mincut tree $\mT=(V_{\mT},E_{\mT},w_{\mT})$
with mapping $\phi:V\to V_{\mT}$, define $E_G(A,B)$ for $A,B \subset
V_{\mT}$ as $E\left(\phi^{-1}(A), \phi^{-1}(B)\right)$, i.e., the total
weight of edges crossing the sets corresponding to $A$ and $B$ in $V$.

\begin{observation}
  Given a root $r\in V_{\mT}$ and incomparable nodes $a_1, \ldots,
  a_{k-1} \in V_{\mT} \setminus r$, we can bound the corresponding
  partition $S_1,\ldots,S_k$ as follows:
  \begin{align*}
    w(E(S_1, \ldots, S_k)) &= \textstyle \sum_{i=1}^{k-1}
    w(\partial(S_i)) - \sum_{i<j \le k-1} w(E(S_i, S_j)) \\ &=
    \textstyle \sum_{i=1}^{k-1} w_{\mT}(e_i) - \sum_{i<j\le k-1}
    w(E_G(\subtree(a_i),\subtree(a_j))) ,
  \end{align*}
  where $e_i$ is the parent edge
  of $v_i$ in the rooted tree.
\end{observation}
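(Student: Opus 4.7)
The observation has two equalities; I would prove each by a separate elementary step.

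For the first equality, I would argue by a counting/inclusion--exclusion argument on which edges are counted on each side. Every edge of $E(S_1,\ldots,S_k)$ has endpoints in two distinct parts. The sum $\sum_{i=1}^{k-1} w(\partial S_i)$ counts each edge between $S_i$ and $S_k$ (for $i\le k-1$) exactly once, and each edge between two parts $S_i,S_j$ with $i<j\le k-1$ exactly twice. The correction term $\sum_{i<j\le k-1} w(E(S_i,S_j))$ subtracts off precisely one copy of each such ``interior'' edge, leaving each crossing edge counted once. That yields $w(E(S_1,\ldots,S_k))$, as claimed. This step uses only that $S_1,\ldots,S_k$ is a partition, no property of the tree.

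For the second equality I would substitute using the definition of a $(1+\e_1)$-mincut tree. Since the tree is rooted at $r$ and the $a_i$ are pairwise incomparable non-root nodes, cutting the parent edge $e_i$ of $a_i$ splits $V_{\mT}$ into the two sides $\subtree(a_i)$ and $V_{\mT}\setminus\subtree(a_i)$. By the definition of $S_i=\phi^{-1}(\subtree(a_i))$ and the last clause of the mincut tree definition,
\[
w_{\mT}(e_i)=w\bigl(E(\phi^{-1}(\subtree(a_i)),\,\phi^{-1}(V_{\mT}\setminus\subtree(a_i)))\bigr)=w(\partial S_i),
\]
which handles the first sum. For the correction term, the definition $E_G(A,B):=E(\phi^{-1}(A),\phi^{-1}(B))$ directly gives $w(E(S_i,S_j))=w(E_G(\subtree(a_i),\subtree(a_j)))$ for $i<j\le k-1$, since $S_i=\phi^{-1}(\subtree(a_i))$ and $S_j=\phi^{-1}(\subtree(a_j))$.

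Combining these two steps proves the observation. I do not expect any serious obstacle: the first equality is standard inclusion--exclusion, and the second is a direct application of the definitions of the mincut tree and of $E_G$. The only small thing to be careful about is that $S_k$ corresponds to the complement of $\bigcup_{i\le k-1}\subtree(a_i)$ (which holds because the $a_i$ are incomparable and $r\notin\{a_i\}$), so the edges incident to $S_k$ are exactly those counted once by $\sum_i w(\partial S_i)$ and not subtracted.
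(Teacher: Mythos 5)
Your proof is correct; the paper states this as an unproved observation, and the argument you give — inclusion--exclusion for the first equality (each $\partial S_i$ sum double-counts edges internal to $\{S_1,\ldots,S_{k-1}\}$ and single-counts edges to $S_k$), plus a direct unwinding of the mincut-tree definition and the definition of $E_G$ for the second — is precisely the justification the paper implicitly relies on. Your closing remark about $S_k$ being the complement (requiring incomparability of the $a_i$ and $r\notin\{a_i\}$) correctly identifies the one small thing that needs checking.
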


Note that $\minkut \le w_{\mT}(e) \le (1+\e_1)\minkut$ for all $e\in
E_{\mT}$, so to approximately minimize the above expression for a fixed
root $r$, it suffices to approximately maximize 
\begin{align*}
 \textstyle \saved(a_1,\ldots,a_{k-1}) := \sum\limits_{i<j\le k-1}
  w(E_G(\subtree(a_i),\subtree(a_j))), \label{eq:saved}
\end{align*}
which we think of as the edges \textit{saved} in the double counting of
$\sum_{i=1}^{k-1}w_{\mT}(e_i)$.  The actual approximation factor is made
precise in the proof of Theorem~\ref{thm:laminar}.

To maximize the number of saved edges over all partitions in
$\solns_{\e_1}$, it suffices to try all possible roots $r$ and take the
best partition.  Therefore, for the rest of this section, we focus on
maximizing $\saved(a_1,\ldots,a_{k-1})$ for a fixed root
$r$. 
Let $\ell^*(r)$ be that maximum value for root $r$, and let $\opt(r) =
\{a_1^*, \ldots, a_{k-1}^*\} \subset V_{\mT}$ be the solution that
attains it.

\subsection{Anchors}

Root the mincut tree $\mT$ at $r$, and let $a_1^*,\ldots,a_{k-1}^*$ be incomparable nodes in the solution $\Opt(r)$.
First, observe that we can assume w.l.o.g.\ that for each node $a_i^*$,
its parent node is an ancestor of some $a_j^* \neq a_i^*$: if not, we can replace $a_i^*$ with its parent, which can only increase $\saved(a_1^*,\ldots,a_{k-1}^*)$.

\begin{observation}
Consider nodes $a_1^*,\ldots,a_s^* \in \opt(r)$ which share the same parent $a \notin \opt(r)$, and assume that $a$ has no other descendants. If we replace $a_1^*,\ldots,a_s^*$ in $\opt(r)$ with $a$, then we lose at most $\saved(a_1^*,\ldots,a_s^*)$ in our solution.\footnote{The new solution may no longer have $k-1$ nodes, but we will fix this problem in the proof of Theorem~\ref{thm:laminar}. For now, assume that we are allowed to choose any number up to $k-1$ nodes.}
\end{observation}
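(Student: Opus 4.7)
The plan is to expand both $\saved$ quantities and compare them term by term. Split the double sum defining $\saved(a_1^*, \ldots, a_{k-1}^*)$ into $\Sigma_{\mathrm{in}} + \Sigma_{\mathrm{cr}} + \Sigma_{\mathrm{out}}$, where $\Sigma_{\mathrm{in}}$ collects the terms with both indices in $[s]$, $\Sigma_{\mathrm{cr}}$ collects those with one index in $[s]$ and the other in $\{s+1, \ldots, k-1\}$, and $\Sigma_{\mathrm{out}}$ collects those with both indices in $\{s+1, \ldots, k-1\}$. By definition, $\Sigma_{\mathrm{in}} = \saved(a_1^*, \ldots, a_s^*)$, and $\Sigma_{\mathrm{out}}$ reappears unchanged as the ``tail'' of $\saved(a, a_{s+1}^*, \ldots, a_{k-1}^*)$, so the comparison reduces to understanding how $\Sigma_{\mathrm{cr}}$ relates to the cross terms involving $a$ in the new solution.

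The key structural observation is that since $a_1^*, \ldots, a_s^*$ are exactly the children of $a$ and $a$ has no other descendants, $\subtree(a) = \{a\} \sqcup \bigsqcup_{i=1}^{s} \subtree(a_i^*)$. Since $\phi^{-1}$ preserves disjoint unions, this lifts to a disjoint partition of $\phi^{-1}(\subtree(a))$. Moreover, each $a_j^*$ with $j > s$ is incomparable with all $a_i^*$, $i \le s$, so it cannot be a descendant of $a$, ensuring $\subtree(a_j^*)$ is disjoint from $\subtree(a)$. Hence for each $j > s$,
\[
w(E_G(\subtree(a), \subtree(a_j^*))) = w(E_G(\{a\}, \subtree(a_j^*))) + \sum_{i=1}^{s} w(E_G(\subtree(a_i^*), \subtree(a_j^*))).
\]
Summing over $j > s$, the right-hand side is exactly $\Sigma_{\mathrm{cr}}$ plus the nonnegative correction $\sum_{j > s} w(E_G(\{a\}, \subtree(a_j^*)))$.

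Combining the pieces yields
\[
\saved(a, a_{s+1}^*, \ldots, a_{k-1}^*) = \saved(a_1^*, \ldots, a_{k-1}^*) - \saved(a_1^*, \ldots, a_s^*) + \sum_{j > s} w(E_G(\{a\}, \subtree(a_j^*))),
\]
so the drop in the saved quantity is at most $\saved(a_1^*, \ldots, a_s^*)$, as claimed. The argument is essentially direct bookkeeping; the only substantive ingredient is the disjoint decomposition of $\subtree(a)$, which crucially uses the hypothesis that $a$ has no descendants outside $\bigcup_i \subtree(a_i^*)$. I do not foresee any real obstacle beyond splitting the double sum carefully and confirming the disjointness needed to add edge weights across the decomposition.
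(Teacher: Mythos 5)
Your proof is correct, and since the paper states this as an Observation without providing a proof, you are in fact filling in what the paper leaves implicit. The split of the double sum into $\Sigma_{\mathrm{in}}$, $\Sigma_{\mathrm{cr}}$, $\Sigma_{\mathrm{out}}$ is exactly the natural bookkeeping, and the decomposition of $\subtree(a)$ is the one load-bearing fact.

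One small caution on interpretation. You read ``$a$ has no other descendants'' as a statement about the tree itself, which lets you write the decomposition as an exact disjoint union $\subtree(a) = \{a\} \sqcup \bigsqcup_{i=1}^{s} \subtree(a_i^*)$ and get an equality. Looking at how this observation is actually invoked in the proof of the Shifting Lemma (Lemma 4.7), the intended reading is almost certainly ``$a$ has no other descendants \emph{in $\opt(r)$}'': there the node $b'$ is a non-anchor parent whose only descendants \emph{in the current solution} are the siblings being collapsed, but $b'$ may well have additional children outside the solution. Under that reading your equality weakens to an inequality
\[
w\bigl(E_G(\subtree(a), \subtree(a_j^*))\bigr) \;\ge\; \sum_{i=1}^{s} w\bigl(E_G(\subtree(a_i^*), \subtree(a_j^*))\bigr),
\]
since $\subtree(a)$ contains additional disjoint pieces (the node $a$ itself, plus the subtrees of children of $a$ not in the solution) whose contributions to the left side are nonnegative. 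The conclusion $\saved(a, a_{s+1}^*,\ldots,a_{k-1}^*) \ge \saved(a_1^*,\ldots,a_{k-1}^*) - \saved(a_1^*,\ldots,a_s^*)$ survives unchanged, so your argument is robust to either reading; you may simply want to phrase the decomposition step as $\supseteq$ rather than $=$ to cover the weaker (and likely intended) hypothesis.
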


If $\saved(a_1^*,\ldots,a_s^*)$ is small, i.e., compared to $(s-1)\minkut$, then we do not lose too much. This idea motivates the idea of anchors.

\begin{definition}[Anchors]
Let $\mT=(V_{\mT},E_{\mT},w_{\mT})$ be a rooted tree.
For a fixed constant $\e_3 > 0$, define an \textbf{$\e_3$-anchor} to be a node $a\in V_{\mT}$ such that there exists $s \in [2,k-1]$ and $s$ children $a_1,\ldots,a_s$ such that $\saved(a_1,\ldots,a_s) \ge \e_3(s-1)\minkut$. When the value of $\e_3$ is implicit, we use the term anchor, without the $\e_3$.
\end{definition}

We now claim that we can transform any solution to another
well-structured solution, with only a minimal loss.

\begin{lemma}[Shifting Lemma]
  \label{lem:anchor}
  Let $a_1,\ldots,a_{k-1}$ be a set of incomparable nodes of a
  $(1+\e_1)$-mincut tree $\mT$.  Then, there exists a set
  $b_1,\ldots,b_s$ of incomparable nodes, for $1\le s\le k-1$, such that
\begin{enumerate}
\item The parent of every node $b_i$ is either an $\e_3$-anchor, or is
  an ancestor of some node $b_j \neq b_i$ whose parent is an anchor.
\item $\saved(b_1,\ldots,b_s) \ge \saved(a_1,\ldots,a_{k-1}) -
  \e_3(k-s)\minkut$. 
\end{enumerate}
In particular, if $\{a_1, \ldots, a_{k-1}\} = \opt(r)$, condition~(2)
implies $\saved(b_1,\ldots,b_s) \ge \ell^*(r) - \e_3(k-1)\minkut$.
\end{lemma}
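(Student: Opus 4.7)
The plan is to prove the Shifting Lemma by an iterative ``lifting'' procedure that replaces selected nodes by their common parent whenever that parent fails to be an $\e_3$-anchor, and pays a controlled amount of $\saved$ per replacement.

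I would initialize $B := \{a_1, \ldots, a_{k-1}\}$ and repeatedly apply the following step: find a non-root node $p$ that is \emph{not} an $\e_3$-anchor and for which $B \cap \subtree(p) = C$ is a nonempty subset of $\children(p)$; then replace $C$ with $\{p\}$ in $B$. The set $B$ remains incomparable, and directly from the observation preceding the lemma, $\saved(\cdot)$ changes by exactly $-\saved(C)$: the internal cross-terms among $\{\subtree(b)\}_{b\in C}$ are lost, while cross-terms involving any $b'' \in B \setminus C$ can only grow since $\subtree(p) \supseteq \bigcup_{b\in C} \subtree(b)$. Because $p$ is not an anchor, $\saved(C) < \e_3(|C|-1)\minkut$ when $|C|\ge 2$, and $\saved(C) = 0$ when $|C|=1$. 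Each step either strictly decreases $|B|$ (when $|C|\ge 2$) or strictly decreases the total depth of $B$ (when $|C|=1$), so the procedure terminates. Summing losses over all steps gives $\saved(B_{\text{final}}) \ge \saved(a_1,\ldots,a_{k-1}) - \e_3(k-1-s)\minkut \ge \saved(a_1,\ldots,a_{k-1}) - \e_3(k-s)\minkut$ with $s = |B_{\text{final}}|$, proving condition~(2).

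For condition~(1), consider any $b_i$ in the final $B$ whose parent $p := \mathrm{parent}(b_i)$ is not an anchor, and first assume $p \neq r$. Since the lifting step does not apply at $p$, we must have $B \cap \subtree(p) \nsubseteq \children(p)$, so some $b' \in B$ is a strict descendant of a child of $p$ (necessarily $b' \neq b_i$, by incomparability). I then use a descent argument: starting from $b'$, at each step the parent $q$ of the current node is a proper descendant of $p$ (hence $q \neq r$), and is either an anchor (we stop) or, since the algorithm halted, admits a strictly deeper node in $B \cap \subtree(q) \setminus \children(q)$ to which we descend. The descent strictly decreases depth so it terminates at some $b^*$ whose parent $q^*$ satisfies $B \cap \subtree(q^*) \subseteq \children(q^*)$; termination of the algorithm together with $q^* \neq r$ forces $q^*$ to be an anchor. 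Thus $b^* \in \subtree(p)$, $b^* \neq b_i$, and $\mathrm{parent}(b^*) = q^*$ is an anchor, verifying condition~(1) for $b_i$.

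The main obstacle is the leftover edge case $p = r$, which arises only because lifting to the root is forbidden. If any $b' \in B$ is not a child of $r$, the same descent argument applied to $b'$ produces an anchor-parent $b^*$ and condition~(1) follows. Otherwise every $b' \in B$ is a child of the non-anchor root, so $B \subseteq \children(r)$ and the non-anchor property of $r$ yields $\saved(B) < \e_3(|B|-1)\minkut$. This degenerate regime corresponds to the ``few-anchor'' scenario anticipated in the overview: either the tree contains some anchor elsewhere, in which case one can replace $B$ by the single child of that anchor without violating condition~(2) (since $\saved(B)$ is already small), or the tree has no anchor at all and the lemma's hypothesis is vacuous for the Laminar-cut algorithm's purposes (a separate branch handles this regime). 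Making the handling of this root-corner case fully watertight is the delicate bookkeeping part of the proof.
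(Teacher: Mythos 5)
Your proposal follows the same overall strategy as the paper's proof: iteratively lift solution nodes to a common non-anchor parent, bound the per-step loss in $\saved$ by the non-anchor property, and charge it against the decrease in solution size via a potential/accounting argument. The accounting is identical in substance (the loss is at most $\saved(C) < \e_3(|C|-1)\minkut$, and cross-terms with $B \setminus C$ only grow).

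There are two places where your proposal diverges in form from the paper's, and one place where it has a real gap. First, the paper's procedure is subtly different from yours in a way that simplifies the verification of condition~(1): the paper always picks the \emph{deepest} $b_i$ violating condition~(1) and lifts it together with its siblings. With that choice, one can argue directly (using maximality of depth) that the only elements of $B$ under $\mathrm{parent}(b_i)$ are siblings of $b_i$, so the lift is legal, and when no violating $b_i$ remains, condition~(1) holds by definition of ``violating.'' Your version instead lifts at an arbitrary eligible non-root non-anchor $p$ and then, after the fact, runs a descent argument to certify condition~(1). That descent argument is correct, but it is extra machinery; the paper sidesteps it entirely by its choice of which node to lift.

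Second, and more importantly, the final paragraph of your proposal — the $p=r$ corner case — is not actually resolved. You correctly identify that if the procedure leaves $B \sse \children(r)$ with $r$ a non-anchor, condition~(1) cannot be met by any nonempty subset of $\children(r)$, and your two escape hatches (``replace $B$ by a single child of some anchor elsewhere'' or ``the lemma's hypothesis is vacuous for the algorithm's purposes'') are sketches, not proofs: the first requires you to argue that such an anchor's child can be chosen without violating incomparability or exceeding the size bound, and the second appeals to how the lemma is \emph{used}, not to the lemma as stated. The paper's own proof is also silent on this corner; the cleanest reading of the paper's argument is that when everything collapses onto $r$, the potential invariant already gives $\saved(a_1,\dots,a_{k-1}) \le \e_3(k-1)\minkut$, so condition~(2) holds with room to spare for \emph{any} singleton $\{b_1\}$, and one should then pick $b_1$ to be a child of an arbitrary anchor (which exists whenever the right-hand side of condition~(2) is positive and the claim is non-vacuous). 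If you want your proof to be airtight, you should make exactly that argument explicit rather than gesturing at a ``separate branch.''
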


\begin{proof}
  We begin with the solution $b_i=a_i$ for all $i$, and iteratively
  shift non-anchors in the solution while maintaining the potential
  function $\Phi:=\saved(b_1,\ldots,b_s) - \saved(a_1,\ldots,a_{k-1}) +
  \e_3(k-s)\minkut$ nonnegative.  
  At the beginning, $\Phi=0$. Suppose there is a
  node $b_i$ not satisfying condition (1). Choose one such $b_i$ of
  maximum depth in the tree, and let $b'$ be its non-anchor parent. Then
  the only descendants of $b'$ in the current solution are siblings of
  $b_i$. Replace $b_i$ and its $s'$ siblings in the solution by
  $b'$. Since $b'$ is not an anchor, $\saved(b_1,\ldots,b_s)$ drops by
  at most $\e_3(s'-1)\minkut$. This drop is compensated by the decrease
  of the solution size from $s$ to $s-(s'-1)$.
\end{proof}

Hence, at a loss of $\e_3(k-1)\minkut$, it suffices to focus on a
solution $\opt'(r)$ which fulfills condition (1) of
Lemma~\ref{lem:anchor} and has $\saved$ value $\ell'(r) \geq \ell^*(r) -
\e_3(k-1)\minkut$.

The rest of the algorithm splits into two cases. At a high level, if
there are enough anchors in a mincut tree $\mT$ that are incomparable
with each other, then we can take such a set and be done. Otherwise, the
set of anchors can be grouped into a small number of paths in $\mT$, and
we can afford to try all possible arrangements of anchors. But first we
show how to find all the anchors in $\mT$.

\subsection{Finding Near-Anchors}

\newcommand{\anchors}{\mathcal{A}}

\begin{lemma}[Finding (Near-)Anchors]
  \label{lem:compute-anchors}
  Assume access to a $(1+\delta)$-approximation algorithm for $\PartialVC k$
  running in time $f(k) \cdot g(n)$. Then, there is an algorithm running
  in time $O(n \cdot (n^2 + k \cdot f(k) \cdot g(n)))$ that computes a
  set $\anchors$ of ``near''-anchors in $\mT$, i.e., vertices $a \in
  V_{\mT}$ for which there exists an integer $s \in [2,k-1]$ and $s$
  children $b_1, \ldots, b_s$ such that $\saved(b_1,\ldots,b_s) \geq
  \e_3(s-1)\minkut - \delta(1+\e_1)s\minkut$.
\end{lemma}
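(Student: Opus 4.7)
The plan is to test each tree node $a\in V_{\mT}$ for being an $\e_3$-anchor by reducing, for every $s\in[2,k-1]$, the problem of picking $s$ children of $a$ that maximize $\saved$ to an instance of $\PartialVC{s}$. Fix $a$ with children $c_1,\ldots,c_t$ and form the contracted graph $H_a$ whose vertices are $v_0,v_1,\ldots,v_t$: vertex $v_i$ represents $\phi^{-1}(\subtree(c_i))$ for $i\ge 1$, and $v_0$ represents everything else in $V$. The edge weight between any two contracted vertices is the total $G$-weight of edges between the corresponding vertex groups. By construction, for any $X\subseteq\{v_1,\ldots,v_t\}$, $\saved(X)$ equals the weight of edges of $H_a$ with both endpoints in $X$. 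Let $W_i$ be the degree of $v_i$ in $H_a$; since $\subtree(c_i)$ is one side of the tree edge $(a,c_i)$, we have $W_i=w_{\mT}(a,c_i)\in[\minkut,(1+\e_1)\minkut]$. Setting $M:=(1+\e_1)\minkut$ and assigning vertex weights $\tilde W_i:=M-W_i\in[0,\e_1\minkut]$ on each $v_i$ for $i\ge 1$, and a prohibitively large vertex weight on $v_0$ so that it is never selected, the $\PartialVC{s}$ objective simplifies. Using the identity $w(\text{edges hitting }X)=\sum_{v_i\in X}W_i-\saved(X)$ (since each internal edge of $X$ is counted twice in the degree sum),
\[
\text{obj}(X) \;=\; w(\text{edges hitting }X) + \sum_{v_i\in X}\tilde W_i \;=\; sM - \saved(X).
\]

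Thus minimizing $\text{obj}$ over size-$s$ subsets is the same as maximizing $\saved$. Let $X^\star$ denote the optimum and $\tilde X$ the $(1+\delta)$-approximate solution. Because $\saved(X^\star)\ge 0$, we have $\text{obj}(X^\star)\le sM=s(1+\e_1)\minkut$, and the approximation guarantee $\text{obj}(\tilde X)-\text{obj}(X^\star)\le \delta\cdot\text{obj}(X^\star)$ rearranges into
\[
\saved(\tilde X) \;\ge\; \saved(X^\star) - \delta\,(1+\e_1)\,s\,\minkut.
\]
In particular, if $a$ is a genuine $\e_3$-anchor witnessed by some $s$, then $\saved(X^\star)\ge\e_3(s-1)\minkut$, and $\tilde X$ certifies that $a$ satisfies the "near-anchor" bound in the lemma.

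The algorithm loops over every $a\in V_{\mT}$ and every $s\in[2,k-1]$, builds $H_a$ once in $O(n^2)$ time, invokes the $\PartialVC{s}$ approximation in $f(k)\cdot g(n)$ time, and inserts $a$ into $\anchors$ whenever the returned $\tilde X$ satisfies $\saved(\tilde X)\ge \e_3(s-1)\minkut-\delta(1+\e_1)s\minkut$ for some $s$. By the analysis above every true $\e_3$-anchor is collected, and by definition every collected node is a near-anchor, giving the claimed guarantee with total running time $O(n(n^2+k\cdot f(k)\cdot g(n)))$. The main subtlety of the proof is calibrating $M$ so that the multiplicative $(1+\delta)$ slack of \pvc translates into an additive error of $\delta(1+\e_1)s\minkut$ rather than something scaling with $\saved(X^\star)$ itself; this works exactly because every tree-edge weight lies between $\minkut$ and $(1+\e_1)\minkut$, so $\text{obj}(X^\star)$ is always bounded by $s(1+\e_1)\minkut$ regardless of whether $a$ is really an anchor.
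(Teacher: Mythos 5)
Your proof is correct and takes essentially the same approach as the paper's: both reduce the choice of $s$ children of $a$ to a \pvc instance on the children that is $(1+\e_1)\minkut$-regular (vertex weight plus edge degree), so that the \pvc objective equals $s(1+\e_1)\minkut - \saved(X)$ and the multiplicative $(1+\delta)$ slack of \pvc becomes the additive loss $\delta(1+\e_1)s\minkut$. The only cosmetic difference is that you introduce an explicit contracted vertex $v_0$ to carry the edges leaving each $\subtree(c_i)$, whereas the paper's construction omits $v_0$ and instead folds that ``outward'' weight directly into the vertex weight $(1+\e_1)\minkut - w(\partial x_b)$.
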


\begin{proof}
  To determine if a node $a$ is an anchor or not, for each integer $s
  \in [2, k-1]$ we wish to compute the maximum value of
  $\saved(b_1,\ldots,b_s)$ for $b_1,\ldots,b_s \in
  \children(a)$. Consider the following weighted, complete graph with
  vertex and edge weights: for each $b \in \children(a)$ create a vertex
  $x_b$, and the edge $(x_{b_1}, x_{b_2})$ has weight
  $\saved(b_1,b_2)$. Each vertex $x_b$ also has weight $(1+\e_1)\minkut
  - w(\partial x_b)$, where $w(\partial x_b)$ is the sum of the weights
  of edges incident to $x_b$. Note that this graph is
  $(1+\e_1)\minkut$-regular, if we include vertex weights in the
  definition of vertex degree.

  Observe that $w(\partial x_b)
  \le \partial\left(\phi^{-1}(\subtree(b))\right) \le (1+\e_1)\minkut$,
  since every edge in $G$ that contributes to $\saved(b,b')$ for another
  child $b'$ also contributes to the cut
  $\partial\left(\phi^{-1}(\subtree(b))\right)$, which we know is $\le
  (1+\e_1)\minkut$.
  Therefore, each vertex has a nonnegative weight.
  Also, a partial vertex cover on this graph with vertices $x_{b_1},
  \ldots, x_{b_s}$ has weight exactly $(1+\e_1)s\minkut - \saved (b_1,
  \ldots, b_s)$.

  Let $b_1^*,\ldots,b_s^* \in \children(a)$ be the solution with maximum
  $\saved(b_1^*,\ldots,b_s^*)$.  To compute this maximum, we can build
  the above graph and run the $(1+\delta)$-approximate partial vertex
  cover algorithm from Theorem~\ref{thm:pvc}. The solution
  $b_1,\ldots,b_s$ satisfies
  \[ (1+\e_1)s\minkut - \saved(b_1,\ldots,b_s) \le
  (1+\delta)\left((1+\e_1)s\minkut - \saved(b_1^*,\ldots,b_s^*)\right),
  \]
  so that
  \begin{align*}
    \saved(b_1,\ldots,b_s) & \ge (1+\delta)\,\saved(b_1^*,\ldots,b_s^*)
    - \delta(1+\e_1)s \minkut \\
    & \ge \saved(b_1^*,\ldots,b_s^*) - \delta(1+\e_1)s\minkut.
  \end{align*}

  We run this subprocedure for the vertex $a$ for each integer $2 \leq s
  \le \min\{|\children(a)|, k-1\}$, and mark vertex $a$ if there exists
  an integer $s$ such that the weight of saved edges is at least
  $\e_3(s-1)\minkut - \delta(1+\e_1)s\minkut$. The set $\anchors$ of
  near-anchors is exactly the set of marked vertices.

        As for running time, for each node $a$, it takes $O(n^2)$ time to construct the $\pvc$ graph and $O(k) \cdot f(k) \cdot g(n)$ time to solve $\PartialVC s$ for each $s \in [2,k-1]$. Repeating the above for each of the $O(n)$ nodes achieves the promised running time.
\end{proof}

\subsection{Many Incomparable Near-Anchors}

\begin{lemma}[Many Anchors]
  \label{lem:incomparable}
  Suppose we have access to a $(1+\delta)$-approximation algorithm for
  $\PartialVC k$ running in time $f(k) \cdot g(n)$. Suppose the set
  $\anchors$ of near-anchors contains $k-1$ incomparable nodes from the
  mincut tree $\mT$. Then, there is an algorithm computing a solution
  with $\saved$ value
  $\ge \frac14\e_3(k-1)\minkut - \delta(1+\e_1)(k-1)\minkut$ for any
  $\delta>0$, running in time
  $O(n \cdot (n^2 + k \cdot f(k) \cdot g(n)))$.
\end{lemma}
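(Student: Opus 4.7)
The plan is to construct a set of incomparable marked nodes by combining children from several of the given near-anchors. I would produce two candidate solutions, a ``greedy pack'' that combines many small anchors and a ``single best'' that uses only the anchor with the largest $s$-value, and argue that one of the two always meets the bound.

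Concretely, I would first invoke \cref{lem:compute-anchors} to compute the set $\anchors$ and, for each $a \in \anchors$, retain a witness integer $s(a) \in [2,k-1]$ together with a set $B(a) \subseteq \children(a)$ of size $s(a)$ for which $\saved(B(a)) \geq \e_3(s(a)-1)\minkut - \delta(1+\e_1)s(a)\minkut$. By hypothesis $\anchors$ contains $k-1$ pairwise-incomparable near-anchors; I would extract such a set $\{a_1, \ldots, a_{k-1}\}$ in polynomial time, e.g.\ via a maximum antichain computation on the tree poset restricted to $\anchors$. After relabeling so that $s(a_1) \le \cdots \le s(a_{k-1})$, I define the greedy-pack candidate $M_1 := \bigcup_{i=1}^t B(a_i)$, where $t$ is the largest index with $T := \sum_{i=1}^t s(a_i) \leq k-1$, and the single-anchor candidate $M_2 := B(a_{k-1})$. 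Since the $a_i$'s are incomparable, children drawn from distinct $B(a_i)$'s are automatically incomparable, so both $M_1$ and $M_2$ are valid solutions of size at most $k-1$. Because cross-anchor edges contribute nonnegatively to $\saved$, we have the within-anchor lower bound $\saved(M_1) \geq \sum_{i=1}^t \saved(B(a_i)) \geq \e_3(T-t)\minkut - \delta(1+\e_1)T\minkut$.

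The key analysis is a dichotomy on $s^* := s(a_{k-1})$. If $s^* \geq (k-1)/2$, candidate $M_2$ alone suffices, since $\saved(M_2) \geq \e_3(s^*-1)\minkut - \delta(1+\e_1)(k-1)\minkut \geq \tfrac14 \e_3 (k-1)\minkut - \delta(1+\e_1)(k-1)\minkut$ whenever $k \geq 5$ (the remaining small $k$ being handled by brute force). Otherwise every $s(a_i) \leq (k-1)/2$, in which case I would argue that the greedy procedure must reach $T > (k-1)/2$: otherwise the next anchor $a_{t+1}$ would satisfy $T + s(a_{t+1}) \leq (k-1)/2 + (k-1)/2 = k-1$, contradicting the maximality of $t$. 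Since each $s(a_i) \geq 2$ forces $t \leq T/2$, we conclude
\[
\saved(M_1) \geq \tfrac{1}{2}\e_3 T \minkut - \delta(1+\e_1)(k-1)\minkut \geq \tfrac14 \e_3(k-1)\minkut - \delta(1+\e_1)(k-1)\minkut.
\]
Returning whichever of $M_1, M_2$ has the larger $\saved$ completes the construction.

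The main obstacle is recognizing the right dichotomy: either a single anchor with many children already saves enough on its own, or many anchors with few children each can be greedily packed to save enough collectively. Once this observation is in hand, the $\delta$-slack inherited from the near-anchor definition cleanly bounds the loss by the total number of children used, which is at most $k-1$, producing the single uniform error term $\delta(1+\e_1)(k-1)\minkut$. The running time is dominated by the call to \cref{lem:compute-anchors}; all remaining work (antichain extraction, sorting, greedy packing, and $\saved$ evaluation) runs in $\poly(n,k)$ time, giving the claimed $O(n(n^2 + k \cdot f(k) \cdot g(n)))$ bound.
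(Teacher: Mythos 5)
Your proof is correct and follows the same overall structure as the paper's: compute the near-anchor set via Lemma~\ref{lem:compute-anchors}, extract $k-1$ incomparable near-anchors from the tree, and then select a subfamily whose $s$-values sum to at most $k-1$ while the $(s-1)$-values sum to at least $(k-1)/4$, which converts directly into the claimed $\saved$ bound after the $\delta(1+\e_1)(k-1)\minkut$ slack is absorbed. Where you diverge is in how that combinatorial selection is certified. The paper phrases the selection as a knapsack instance (item $i$ has size $s_i$, value $s_i-1$, capacity $k-1$) and invokes Lemma~\ref{lemma:knapsack}, which runs heaviest-first greedy and argues a dichotomy on whether the packed size reaches $k-1-\sqrt k$. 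You instead give a self-contained dichotomy on $s^*=\max_i s_i$: if $s^* \ge (k-1)/2$ then the single largest anchor already yields value $s^*-1 \ge (k-1)/4$ for $k\ge 5$; otherwise every $s_i < (k-1)/2$, so a lightest-first packing cannot terminate before total size exceeds $(k-1)/2$, and since each item has size at least $2$ the value $T-t \ge T/2 > (k-1)/4$. Both arguments achieve exactly the $(k-1)/4$ factor, but yours avoids the detour through a separate knapsack lemma and the $\sqrt k$ threshold, making the constant's origin more transparent; the paper's version is marginally more modular since it isolates the knapsack claim for reuse. One minor point worth being explicit about: in your second case you tacitly use that $t < k-1$ (so that a ``next anchor'' $a_{t+1}$ exists); this follows because each $s_i \ge 2$ forces $\sum_{i=1}^{k-1}s_i \ge 2(k-1) > k-1$, so the packing necessarily stops before exhausting all $k-1$ anchors. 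Your handling of $k\le 4$ by brute force matches the paper's fallback to the exact $\tilde O(n^4)$ algorithm.
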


\begin{proof}
  First, we compute the set $\anchors$ in $O(n \cdot (n^2 + k \cdot f(k)
  \cdot g(n))$ time, according to Lemma~\ref{lem:compute-anchors}. If
  $\anchors$ contains $k-1$ incomparable nodes, we can \textit{find}
  them in $O(n^2)$ time by greedily choosing nodes in a topological,
  bottom-first order (see lines 4--11 in
  Algorithm~\ref{alg:laminarRooted}). Each of these $k-1$ marked nodes
  $a_1, \ldots, a_{k-1}$ has an associated value $s_i$, indicating that
  $a_i$ has some $s_i$ children whose $\saved$ value is at least
  $\e_3(s_i-1)\minkut - \delta(1+\e_1)s_i\minkut$. If we consider a
  subset $A \subset [k-1]$ and choose the $s_i$ children for each $a_i$
  with $i\in A$, then we get a set with $\sum_{i\in A} s_i$ nodes, whose
  total $\saved$ value at least \[\e_3\left( \sum_{i\in
      A}(s_i-1)\right)\minkut - \delta(1+\e_1)\left( \sum_{i\in A}s_i
  \right)\minkut.\] Assuming that $\sum_{i\in A}s_i \le k-1$, i.e., we
  choose at most $k-1$ children, the second
  $\delta(1+\e_1)\left(\sum_{i\in A}s_i\right)\minkut$ term is at most
  $\delta(1+\e_1) (k-1)\minkut$. To optimize the $\e_3\left( \sum_{i\in
      A}(s_i-1)\right)\minkut$ term, we reduce to the following knapsack
  problem: we have $k-1$ items $i \in [k-1]$ where item $i$ has size
  $s_i \in [2,k-1]$ and value $s_i-1$, and our bag size is $k-1$. A
  knapsack solution of value $Z:=\sum_{i\in A}(s_i-1)$ translates to a
  solution with $\saved$ value $\ge \e_3\minkut \cdot Z -
  \delta(1+\e_1)(k-1)\minkut$. By Lemma~\ref{lemma:knapsack}, when $k
  \ge 5$, we can compute a solution $A \subset [k-1]$ of value $\ge
  (k-1)/4$ in $O(k)$ time. (If $k\le4$, we can use the exact $\tilde O(n^4)$
  \kcut algorithm from~\cite{Levine00}.) Selecting the children of each
  $u_i$ with $i\in A$ gives a total $\saved$ value of at least
  $\frac14\e_3(k-1)\minkut - \delta(1+\e_1)(k-1)\minkut$.
\end{proof}

\subsection{Few Incomparable Near-Anchors}

\begin{figure}
\centering
\begin{tikzpicture} [xscale=.5, yscale=.3]
\tikzstyle{every node}=[circle, fill, scale=.3];
\node (v1) at (-1,4) {};
\node (v2) at (-2,2) {};
\node (v12) at (0.5,2) {};
\node (v3) at (-3,0) {};
\node (v10) at (-1.5,0) {};
\node (v11) at (0,0) {};
\node (v13) at (1,0) {};
\node (v14) at (2,0) {};
\node (v4) at (-4,-2) {};
\node (v7) at (-2,-2) {};
\node (v5) at (-5,-4) {};
\node (v6) at (-3.5,-4) {};
\node (v8) at (-2.5,-4) {};
\node (v9) at (-1,-4) {};
\node (v15) at (1,-2) {};
\node (v16) at (3,-2) {};
\draw [line width=.5pt] (v1) edge (v2);
\draw [line width=.5pt] (v2) edge (v3);
\draw [line width=.5pt] (v3) edge (v4);
\draw [line width=.5pt] (v4) edge (v5);
\draw [line width=.5pt] (v4) edge (v6);
\draw [line width=.5pt] (v7) edge (v3);
\draw [line width=.5pt] (v8) edge (v7);
\draw [line width=.5pt] (v9) edge (v7);
\draw [line width=.5pt] (v10) edge (v2);
\draw [line width=.5pt] (v11) edge (v12);
\draw [line width=.5pt] (v12) edge (v1);
\draw [line width=.5pt] (v13) edge (v12);
\draw [line width=.5pt] (v12) edge (v14);
\draw [line width=.5pt] (v15) edge (v14);
\draw [line width=.5pt] (v14) edge (v16);
\draw (v3) circle (.3);
\draw (v14) circle (.3);
\draw (v16) circle (.3);
\draw (v5) circle (.3);
\draw (v8) circle (.3);
\draw (v9) circle (.3);
\draw [line width=.5pt] (v5) edge (-5.5,-5);
\draw [line width=.5pt] (v5) edge (-4.5,-5);
\draw [line width=.5pt] (v8) edge (-3,-5);
\draw [line width=.5pt] (v8) edge (-2.25,-5);
\draw [line width=.5pt] (v9) edge (-1.25,-5);
\draw [line width=.5pt] (v9) edge (-0.5,-5);
\end{tikzpicture}
\qquad
\begin{tikzpicture} [xscale=.5, yscale=.3]
\tikzstyle{every node}=[circle, fill, scale=.5];
\node (v1) at (-1,4) {};
\node (v3) at (-3,0) {};
\node (v7) at (-2,-2) {};
\node (v5) at (-5,-4) {};
\node (v8) at (-2.5,-4) {};
\node (v9) at (-1,-4) {};
\node (v16) at (3,-2) {};
\draw [white,line width=.5pt] (v5) edge (-5.5,-5);
\draw [white,line width=.5pt] (v5) edge (-4.5,-5);
\draw [white,line width=.5pt] (v8) edge (-3,-5);
\draw [white,line width=.5pt] (v8) edge (-2.25,-5);
\draw [white,line width=.5pt] (v9) edge (-1.25,-5);
\draw [white,line width=.5pt] (v9) edge (-0.5,-5);
\draw [line width=1pt] (v1) edge (v3);
\draw [line width=1pt] (v3) edge (v5);
\draw [line width=1pt] (v3) edge (v7);
\draw [line width=1pt] (v7) edge (v8);
\draw [line width=1pt] (v9) edge (v7);
\draw [line width=1pt] (v1) edge (v16);
\end{tikzpicture}
\qquad
\begin{tikzpicture} [xscale=.5, yscale=.3]
\tikzstyle{every node}=[circle, fill, scale=.3];
\node[fill=brown] (v1) at (-1,4) {};
\node[fill=blue] (v2) at (-2,2) {};
\node[fill=green] (v12) at (0.5,2) {};
\node[fill=blue] (v3) at (-3,0) {};
\node[fill=] (v10) at (-1.5,0) {};
\node[fill=] (v11) at (0,0) {};
\node[fill=] (v13) at (1,0) {};
\node[fill=green] (v14) at (2,0) {};
\node[fill=red] (v4) at (-4,-2) {};
\node[fill=purple] (v7) at (-2,-2) {};
\node[fill=red] (v5) at (-5,-4) {};
\node[fill=] (v6) at (-3.5,-4) {};
\node[fill=orange] (v8) at (-2.5,-4) {};
\node[fill=yellow] (v9) at (-1,-4) {};
\node[fill=] (v15) at (1,-2) {};
\node[fill=green] (v16) at (3,-2) {};
\draw [blue,line width=1pt] (v1) edge (v2);
\draw [blue,line width=1pt] (v2) edge (v3);
\draw [red,line width=1pt] (v3) edge (v4);
\draw [red,line width=1pt] (v4) edge (v5);
\draw [line width=1pt] (v4) edge (v6);
\draw [purple,line width=1pt] (v7) edge (v3);
\draw [orange,line width=1pt] (v8) edge (v7);
\draw [yellow,line width=1pt] (v9) edge (v7);
\draw [line width=1pt] (v10) edge (v2);
\draw [line width=1pt] (v11) edge (v12);
\draw [green,line width=1pt] (v12) edge (v1);
\draw [line width=1pt] (v13) edge (v12);
\draw [green,line width=1pt] (v12) edge (v14);
\draw [line width=1pt] (v15) edge (v14);
\draw [green,line width=1pt] (v14) edge (v16);
\draw [white,line width=.5pt] (v5) edge (-5.5,-5);
\draw [white,line width=.5pt] (v5) edge (-4.5,-5);
\draw [white,line width=.5pt] (v8) edge (-3,-5);
\draw [white,line width=.5pt] (v8) edge (-2.25,-5);
\draw [white,line width=.5pt] (v9) edge (-1.25,-5);
\draw [white,line width=.5pt] (v9) edge (-0.5,-5);
\end{tikzpicture}
\caption{\label{figure:branches}
Establishing the set of branches $\mB$. The circled nodes on the left are the near-anchors. The middle graph is the tree $\mT'$. On the right, each non-black color is an individual branch; actually, the branches only consist of nodes, but we connect the nodes for visibility. Also, note that the root is its own branch. The red, orange, yellow, and green branches form an incomparable set.}
\end{figure}
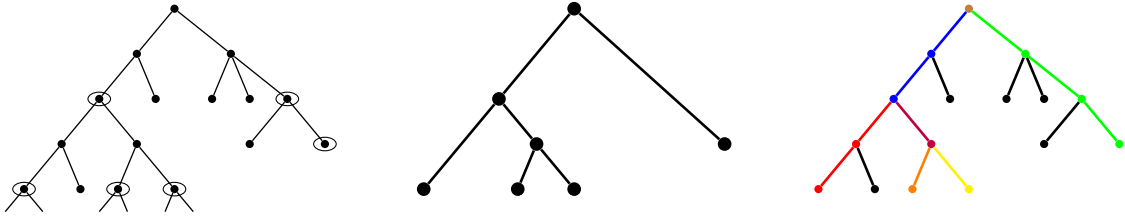

If the condition in Lemma~\ref{lem:incomparable} does not hold, then
there exist $\le k-2$ paths from the root in $\mT$ such that every node
in the near-anchor set $\anchors$ lies on one of these paths. If we view
the union of these paths as a tree $\mT'$ with $\le k-2$ leaves, then we
can partition the nodes in tree $\mT'$ into a collection $\mB$ of at
most $2k-3$ \textit{branches}. Each branch $B$ is a collection of
vertices obtained by taking either a leaf of $\mT'$ or a vertex of
degree more than two, and all its immediate degree-2 ancestors; see
Figure~\ref{figure:branches}. Note that it is possible that the root
node is its own branch. Hence, given two branches $B_1, B_2 \in \mB$,
either every node from $B_1$ is an ancestor of every node from $B_2$ (or
vice versa), or else every node from $B_1$ is incomparable with every
node from $B_2$.

Let $A' \sse \anchors$ be the set of anchors with at least one child in
$\opt'(r)=\{a_1^*,\ldots,a_s^*\}$; recall that $\opt'(r)$ was produced
by the shifting procedure in Lemma~\ref{lem:anchor}. Let $A^* \sse A'$
be
the \textit{minimal} anchors in $A'$, i.e., every anchor in $A'$ that is
not an ancestor of any other anchor in $A'$. We know that every anchor in
$A^*$ falls inside our set of branches, although the algorithm does not
know where.  Moreover, by condition~(1) of Lemma~\ref{lem:anchor}, the
parent of every $a_i^* \in \opt'(r)$ either lies in $A^*$, or is an
ancestor of an anchor in $A^*$.

As a warm-up, consider the case where all the anchors in $A'$ are
contained within a single branch.

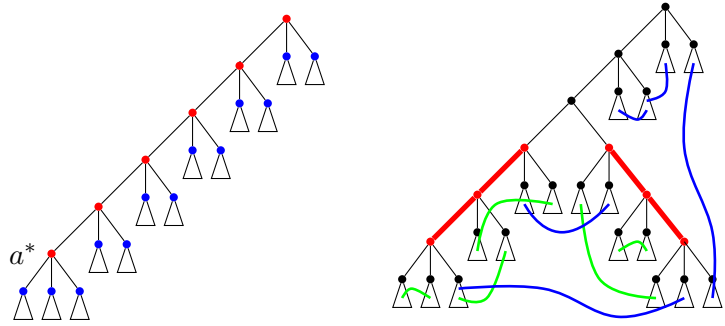
\begin{figure} 
\centering

\begin{tikzpicture}[scale=0.25]
\node [circle, fill=red, scale=.3] (v3) at (0,5) {};
\node [circle, fill=red, scale=.3] (v2) at (2.5,7.5) {};
\node [circle, fill=red, scale=.3] (v1) at (5,10) {};
\node [circle, fill=red, scale=.3] (v4) at (-2.5,2.5) {};
\node [circle, fill=red, scale=.3] (v5) at (-5,0) {};
\node [circle, fill=red, scale=.3, label=left:$a^*$] (v6) at (-7.5,-2.5) {};
\draw  (v1) edge (v2);
\draw  (v2) edge (v3);
\draw  (v3) edge (v4);
\draw  (v4) edge (v5);
\draw  (v5) edge (v6);
\node [circle, fill=blue, scale=.3] (v7) at (5,8) {};
\node [circle, fill=blue, scale=.3] (v8) at (6.5,8) {};
\node [circle, fill=blue, scale=.3] (v9) at (2.5,5.5) {};
\node [circle, fill=blue, scale=.3] (v10) at (4,5.5) {};
\node [circle, fill=blue, scale=.3] (v11) at (0,3) {};
\node [circle, fill=blue, scale=.3] (v12) at (1.5,3) {};
\node [circle, fill=blue, scale=.3] (v13) at (-2.5,0.5) {};
\node [circle, fill=blue, scale=.3] (v14) at (-1,0.5) {};
\node [circle, fill=blue, scale=.3] (v15) at (-5,-2) {};
\node [circle, fill=blue, scale=.3] (v16) at (-3.5,-2) {};
\node [circle, fill=blue, scale=.3] (v17) at (-9,-4.5) {};
\node [circle, fill=blue, scale=.3] (v18) at (-7.5,-4.5) {};
\node [circle, fill=blue, scale=.3] (v19) at (-6,-4.5) {};
\draw (v1) -- (v7) -- (4.5,6.5) -- (5.5,6.5) -- (v7) (v1) -- (v8) -- (6,6.5) -- (7,6.5) -- (v8) (v2) -- (v9) -- (2,4) -- (3,4) -- (v9) (v2) -- (v2) -- (v10) -- (3.5,4) -- (4.5,4) -- (v10) (v3) -- (v11) -- (-0.5,1.5) -- (0.5,1.5) -- (v11) (v3) -- (v12) -- (1,1.5) -- (2,1.5) -- (v12) (v4) -- (v13) -- (-3,-1) -- (-2,-1) -- (v13) (v4) -- (v14) -- (-1.5,-1) -- (-0.5,-1) -- (v14) (v5) -- (v15) -- (-5.5,-3.5) -- (-4.5,-3.5) -- (v15) (v5) -- (v16) -- (-4,-3.5) -- (-3,-3.5) -- (v16) (v6) -- (v17) -- (-9.5,-6) -- (-8.5,-6) -- (v17) (v6) -- (v18) -- (-8,-6) -- (-7,-6) -- (v18) (v6) -- (v19) -- (-6.5,-6) -- (-5.5,-6) -- (v19);
\end{tikzpicture}
\qquad
\begin{tikzpicture}[scale=0.25]
\node [circle, fill=black, scale=.3] (v3) at (0,5) {};
\node [circle, fill=black, scale=.3] (v2) at (2.5,7.5) {};
\node [circle, fill=black, scale=.3] (v1) at (5,10) {};
\node [circle, fill=red, scale=.3] (v4) at (-2.5,2.5) {};
\node [circle, fill=red, scale=.3] (v5) at (-5,0) {};
\node [circle, fill=red, scale=.3] (v6) at (-7.5,-2.5) {};
\draw  (v1) edge (v2);
\draw  (v2) edge (v3);
\draw  (v3) edge (v4);
\draw[red, line width=2pt]  (v4) edge (v5);
\draw[red, line width=2pt]  (v5) edge (v6);
\node [circle, fill=black, scale=.3] (v7) at (5,8) {};
\node [circle, fill=black, scale=.3]  (v8) at (6.5,8) {};
\node [circle, fill=black, scale=.3] (v9) at (2.5,5.5) {};
\node [circle, fill=black, scale=.3] (v10) at (4,5.5) {};

\node [circle, fill=black, scale=.3] (v13) at (-2.5,0.5) {};
\node [circle, fill=black, scale=.3](v14) at (-1,0.5) {};
\node [circle, fill=black, scale=.3](v15) at (-5,-2) {};
\node [circle, fill=black, scale=.3] (v16) at (-3.5,-2) {};
\node [circle, fill=black, scale=.3] (v17) at (-9,-4.5) {};
\node [circle, fill=black, scale=.3](v18) at (-7.5,-4.5) {};
\node [circle, fill=black, scale=.3](v19) at (-6,-4.5) {};
\draw (v1) -- (v7) -- (4.5,6.5) -- (5.5,6.5) -- (v7);
\draw (v1) -- (v8) -- (6,6.5) -- (7,6.5) -- (v8);
\draw (v2) -- (v9) -- (2,4) -- (3,4) -- (v9);
\draw (v2) -- (v10) -- (3.5,4) -- (4.5,4) -- (v10);
\draw (v4) -- (v13) -- (-3,-1) -- (-2,-1) -- (v13) (v4) -- (v14) -- (-1.5,-1) -- (-0.5,-1) -- (v14) (v5) -- (v15) -- (-5.5,-3.5) -- (-4.5,-3.5) -- (v15) (v5) -- (v16) -- (-4,-3.5) -- (-3,-3.5) -- (v16) (v6) -- (v17) -- (-9.5,-6) -- (-8.5,-6) -- (v17) (v6) -- (v18) -- (-8,-6) -- (-7,-6) -- (v18) (v6) -- (v19) -- (-6.5,-6) -- (-5.5,-6) -- (v19);
\node [circle, fill=red, scale=.3] (v11) at (2,2.5) {};
\node [circle, fill=red, scale=.3] (v21) at (4,0) {};
\node [circle, fill=red, scale=.3] (v28) at (6,-2.5) {};
\node [circle, fill=black, scale=.3](v12) at (0.5,0.5) {};
\node [circle, fill=black, scale=.3](v20) at (2,0.5) {};
\node [circle, fill=black, scale=.3](v22) at (2.5,-2) {};
\node[circle, fill=black, scale=.3] (v23) at (4,-2) {};
\node[circle, fill=black, scale=.3] (v26) at (6,-4.5) {};
\node [circle, fill=black, scale=.3](v24) at (4.5,-4.5) {};
\node [circle, fill=black, scale=.3](v27) at (7.5,-4.5) {};
\draw [red, line width=2pt] (v11) edge (v21);
\draw [red, line width=2pt] (v21) edge (v28);
\draw  (v3) edge (v11);
\draw (v11) -- (v12) -- (0,-1) -- (1,-1) -- (0.5,0.5);
\draw (v11) -- (v20) -- (1.5,-1) -- (2.5,-1) -- (v20);
\draw (v21) -- (v22) -- (2,-3.5) -- (3,-3.5) -- (v22) (v21) -- (v23) -- (3.5,-3.5) -- (4.5,-3.5) -- (v23);
\draw (v28) -- (v24) -- (4,-6) -- (5,-6) -- (v24) (v28) -- (v26) -- (5.5,-6) -- (6.5,-6) -- (6,-4.5) (6,-2.5) -- (v27) -- (7,-6) -- (8,-6) -- (v27);

\draw [green, line width=1pt] plot[smooth, tension=.7] coordinates {(-9,-5.5) (-8.5,-5) (-7.5,-5.5)};
\draw [green, line width=1pt] plot[smooth, tension=.7] coordinates {(-6,-5.5) (-4.5,-5.5) (-3.5,-3)};
\draw [green, line width=1pt] plot[smooth, tension=.7] coordinates {(-5,-3) (-4,-0.5)  (-1,-0.5)};
\draw [green, line width=1pt] plot[smooth, tension=.7] coordinates {(0.5,-0.5) (1.5,-4.5) (4.5,-5.5)};
\draw [green, line width=1pt] plot[smooth, tension=.7] coordinates {(2.5,-3) (3.5,-2.5) (4,-3)};
\draw [blue, line width=1pt] plot[smooth, tension=1] coordinates {(-2.5,-0.5) (-0.5,-2) (2,-0.5)};
\draw [blue, line width=1pt] plot[smooth, tension=.7] coordinates {(-6,-5) (-1.5,-5) (2.5,-6.5) (6,-5.5)};
\draw [blue, line width=1pt] plot[smooth, tension=.7] coordinates {(2.5,4.5) (3.5,4) (4,4.5)};
\draw [blue, line width=1pt] plot[smooth, tension=.7] coordinates {(6.5,7) (6,2.5) (7.5,-2) (7.5,-5.5)};
\draw [blue, line width=1pt] plot[smooth, tension=.7] coordinates {(4,5) (5,5.5) (5,7)};
\end{tikzpicture}
\caption{\label{figure:single-branch}Left (Claim~\ref{claim:singleBranch}): The red nodes form our branch $B$, and the blue nodes form the set $\children((\{a^*\}
  \cup \anc(a^*)) \cap B)$. The triangles are the subtrees participating in the $\pvc$ instance.
Right (Lemma~\ref{lem:chains}): The red nodes form our two incomparable branches. The green edges are internal edges, while the blue edges are external.}
\end{figure}

\begin{claim}[Warm-up]
  \label{claim:singleBranch}
  Assume there exists a $(1+\delta)$-approximation algorithm for $\PartialVC k$ running
in time $f(k) \cdot g(n)$. Suppose the set of anchors $A'$ with at least one child in
  $\opt'(r)$ is contained within a single branch $B$.  Then there is an
  algorithm computing a solution with $\saved$ value at least $\ell'(r)
  - \delta(1+\e_1)(k-1)\minkut$, running in time $O(n \cdot (n^2 + f(k) \cdot  g(n)))$.
\end{claim}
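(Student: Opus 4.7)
The plan is to reduce, for each guess of the deepest anchor $a^* \in A^*$, the search for $\opt'(r)$ to a single \pvc instance, using the same regularity trick as Lemma~\ref{lem:compute-anchors}. Since $A' \subseteq B$ is a subset of a path and hence totally ordered by ancestry, $A^*$ is a single node $a^*$, which must lie in $\anchors \cap B$; the algorithm will try every candidate in this set. For a given guess, let $P := \{a^*\} \cup \anc(a^*)$ be the root-to-$a^*$ path in $\mT$ and define the candidate set $C := \children(P) \setminus P$, i.e., the children of $P$-nodes that branch off of $P$. I claim that $\opt'(r) \subseteq C$ when the guess is correct: by condition~(1) of the shifting lemma, every parent of a node in $\opt'(r)$ lies in $\{a^*\} \cup \anc(a^*) = P$; moreover $\opt'(r)$ cannot contain any node at or above $a^*$ on the path, because $a^* \in A'$ forces $\opt'(r)$ to contain some child of $a^*$, and incomparability of $\opt'(r)$ would then rule out any ancestor of $a^*$ (or $a^*$ itself). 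Thus every node in $\opt'(r)$ is an off-path child of a $P$-node and hence belongs to $C$.

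Given $C$, I would build the \pvc graph exactly as in Lemma~\ref{lem:compute-anchors}: create a vertex $x_c$ for each $c \in C$, assign the edge $(x_{c_1}, x_{c_2})$ weight $w(E_G(\subtree(c_1), \subtree(c_2)))$, and assign $x_c$ the vertex weight $(1+\e_1)\minkut - w(\partial x_c)$. The nonnegativity of each vertex weight follows because distinct $c, c' \in C$ are tree-incomparable, so $E_G(\subtree(c), \subtree(c'))$ is part of the $(1+\e_1)$-mincut $\partial(\phi^{-1}(\subtree(c)))$, whose total weight is at most $(1+\e_1)\minkut$. The graph is therefore $(1+\e_1)\minkut$-regular in combined edge-plus-vertex weighted degree, so selecting any $s$ vertices has \pvc cost exactly $(1+\e_1)s\minkut - \saved(\text{selected})$, and minimizing \pvc cost is equivalent to maximizing $\saved$.

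For each size $s \in [2, k-1]$, I would invoke the $(1+\delta)$-approximate $\PartialVC s$ oracle on this instance. When the guess $a^*$ is correct and $s = |\opt'(r)|$, the feasibility of $\opt'(r)$ gives the approximation inequality
\[
(1+\e_1)s\minkut - \saved(\text{selected}) \;\le\; (1+\delta)\bigl((1+\e_1)s\minkut - \saved(\opt'(r))\bigr),
\]
which rearranges to $\saved(\text{selected}) \ge \saved(\opt'(r)) - \delta(1+\e_1)s\minkut \ge \ell'(r) - \delta(1+\e_1)(k-1)\minkut$ since $s \le k-1$. Returning the best $\saved$ set over all guesses and all sizes then yields the promised bound. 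For the running time, building the \pvc graph for one guess takes $O(n^2)$ (route each edge of $G$ to its pair of tree-ancestors in $C$), followed by $O(k)$ oracle calls; absorbing the $O(k)$ factor into $f(k)$ and iterating over the $O(n)$ candidate values $a^* \in \anchors \cap B$ gives the claimed running time $O(n \cdot (n^2 + f(k) \cdot g(n)))$.

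The main conceptual obstacle is the containment $\opt'(r) \subseteq C$: it relies crucially on combining the shifting lemma (which pins parents of $\opt'(r)$ onto the path $P$) with incomparability of $\opt'(r)$ plus $a^* \in A'$ (which rules out placing any $\opt'(r)$-node on $P$ itself). Once this containment is established, the \pvc approximation transfers to an additive slack of $\delta(1+\e_1)(k-1)\minkut$ on $\saved$, exactly as required.
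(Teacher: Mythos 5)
Your proof is correct and follows essentially the same approach as the paper's: guess the (unique, by the chain structure of a branch) minimal anchor $a^*$, note that condition~(1) of the shifting lemma pins the parents of $\opt'(r)$ onto the root--$a^*$ path, argue via incomparability that $\opt'(r)$ itself lies off the path, and then run a single \pvc instance over the off-path children exactly as in Lemma~\ref{lem:compute-anchors}. The case analysis you invoke to show every parent of an $\opt'(r)$-node lies in $\{a^*\}\cup\anc(a^*)$ (either the parent is itself an anchor in $A'$, or it is an ancestor of some $b_j$ whose parent is an anchor in $A'$, and in both cases it sits at or above the deepest anchor $a^*$ on the chain) is the right argument, and your explicit removal of on-path nodes via $C := \children(P)\setminus P$ makes the ``incomparable candidate set'' claim cleaner than the paper's formulation $\children((\{a^*\}\cup\anc(a^*))\cap B)$, which is slightly imprecise on that point. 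One small note: like the paper, you absorb the $O(k)$ loop over $s\in[2,k-1]$ into $f(k)$ to match the stated running time; that is a harmless convention as long as it is applied consistently downstream.
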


\begin{proof}
  If all of $A'$ lies on $B$, the minimal anchor $a^* \in A^*$ must also
  be in $B$. Moreover, for every $a_i^*\in\opt'(r)$, its parent is
  either $a^*$ or an ancestor of $a^*$, which means that $\opt'(r) \sse
  \children((\{a^*\} \cup \anc(a^*)) \cap B)$. Since the nodes in $\children((\{a^*\}
  \cup \anc(a^*)) \cap B)$ are incomparable (see Figure~\ref{figure:single-branch}), we can construct the same
  graph as the one in Lemma~\ref{lem:compute-anchors} on all these nodes
  in $\children((\{a^*\} \cup \anc(a^*)) \cap B)$ and run the \pvc-based
  algorithm to get the same $\saved$ guarantees (see
  Algorithm~\ref{alg:subtreePVC}). 

  Therefore, the algorithm guesses the location of $a^*$ inside $B$ by
  trying all possible $|B|=O(n)$ nodes, and for each choice of $a^*$,
  runs the $(1-\delta)$-approximate \pvc-based algorithm from
  Lemma~\ref{lem:compute-anchors} on the corresponding graph (see
  Algorithm~\ref{alg:singleBranch}). 
\end{proof}

Now for the general case. Consider $\opt'(r)$ and the set of all
branches $\mB$. Let $\mB^* \sse \mB$ be the incomparable branches that
contain the minimal anchors, i.e., those in $A^*$. We classify the
$\ell(r')$ saved edges in $\opt'(r)$ into two groups (see Figure~\ref{figure:single-branch}): if an edge is
saved between the subtrees below $a_i^*,a_j^* \in \opt'(r)$ whose
parent(s) belong to the same branch in $\mB^*$, then call this
\textit{an internal edge}. Otherwise, it is an \textit{external edge}:
these are saved edges in $\opt'(r)$ that either go between two subtrees
in different branches, or between subtrees in the same branch in $\mB
\setminus \mB^*$. One of the two sets has $\ge \frac12\ell'(r)$ saved
edges, and we provide two separate algorithms, one to approximate each
group.

\begin{lemma}\label{lem:chains}
  Assume there exists a $(1+\delta)$-approximation algorithm for $\PartialVC k$ running
in time $f(k) \cdot g(n)$. Suppose that all anchors of $\opt'(r)$ are contained in a set $\mB$ of
  $\le 2k-3$ branches. Then there is an algorithm that computes a
  solution with $\saved$ value $\ge \frac12\ell'(r) -
  \delta(1+\e_1)(k-1)\minkut$, running in time $ 2^{O(k)} \cdot (n^2+f(k)\cdot g(n)) $.
\end{lemma}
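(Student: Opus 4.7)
The plan is to classify the $\ell'(r)$ saved edges of $\opt'(r)$ as either \emph{internal} or \emph{external} exactly as in the discussion preceding the statement, observe that at least one of the two classes carries total weight at least $\tfrac12\ell'(r)$, give a separate algorithm for each class, and return the better of the two outputs.

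For the internal case, every internal saved edge lies between children of nodes on a single common branch $B\in\mB^*$, so each branch's contribution is structurally an instance handled by the warm-up Claim~\ref{claim:singleBranch}. I would enumerate integer vectors $(n_B)_{B\in\mB}$ with $n_B\ge 0$ and $\sum_B n_B=k-1$; since $|\mB|\le 2k-3$, there are $2^{O(k)}$ such vectors. For each vector I would run the single-branch procedure on every $B$ with $n_B\ge 1$, restricted to pick $n_B$ children of $(\{a^*\}\cup\anc(a^*))\cap B$. Taking the union across branches yields an incomparable set of $k-1$ nodes (branches being pairwise disjoint in $\mT'$), and for the vector matching the distribution of $\opt'(r)$ across branches, the per-branch PVC slacks sum to at most $\delta(1+\e_1)(k-1)\minkut$, recovering the internal contribution up to this additive error.

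For the external case, I would enumerate the subset $\mB^*\subseteq\mB$ of branches containing minimal anchors ($2^{|\mB|}=2^{O(k)}$ guesses). Given $\mB^*$, the candidate parent set is taken to consist of all nodes lying on a branch in $\mB^*$ together with all nodes on branches of $\mT'$ that are strict ancestors of some branch in $\mB^*$; by condition~(1) of Lemma~\ref{lem:anchor}, this set contains the parent of every $a_i^*\in\opt'(r)$ under the correct guess. The candidate nodes are then the children in $\mT$ of the candidate parents whose subtrees lie entirely outside $V_{\mT'}$ (equivalently, contain no near-anchor); these are pairwise incomparable because the corresponding subtrees of $\mT$ are vertex-disjoint. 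I would build a single weighted PVC instance on the candidate nodes following the construction in Lemma~\ref{lem:compute-anchors}---edges weighted by $w(E_G(\subtree(x),\subtree(y)))$ and vertices weighted by $(1+\e_1)\minkut$ minus weighted degree---and call the $(1+\delta)$-approximate PVC with target cardinality $k-1$. For the correct $\mB^*$ the set $\opt'(r)$ is contained in the candidate set, and the PVC objective there equals the external saved value, so the returned solution realises external value at least the optimum minus $\delta(1+\e_1)(k-1)\minkut$.

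The main obstacle I expect is carefully verifying the two containment claims for the external subroutine: (i) that $\opt'(r)$ is indeed a subset of the candidate node set under the correct guess of $\mB^*$, which requires combining Lemma~\ref{lem:anchor}(1) with the exact definition of $\mB$ to rule out stray parents lying outside ancestor branches of $\mB^*$; and (ii) that the external PVC objective on the candidate set matches the external saved value of $\opt'(r)$, with no spillover from or to internal contributions (which we have deliberately ignored in this subroutine). Granted these, returning the better of the internal and external outputs gives saved value at least $\tfrac12\ell'(r)-\delta(1+\e_1)(k-1)\minkut$, and the running time is dominated by $2^{O(k)}$ outer guesses, each doing $O(n^2)$ setup plus a single PVC call of cost $f(k)\cdot g(n)$, which matches the claimed $2^{O(k)}(n^2+f(k)g(n))$ bound.
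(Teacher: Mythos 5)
Your overall decomposition is the same as the paper's: classify saved edges as internal or external, handle each with a \pvc-based subroutine, and take the better output. The internal case matches the paper (the paper explicitly guesses an incomparable branch-subset first and then a size vector, speeding the latter up with dynamic programming; your direct enumeration of vectors $(n_B)$ is also $2^{O(k)}$ and works as long as you discard any resulting node set that is not incomparable, which only the wrong guesses can produce).

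Your external-case candidate set, however, has a genuine gap, and your self-identified worry (i) is the real one. You do not perform the collapsing step the paper uses: replacing every $a_i^*\in\opt'(r)$ that is a descendant of $a_B$ by $a_B$ itself, which can only increase external saved weight. Without that step, your containment claim ``$\opt'(r)$ is a subset of the candidate nodes under the correct guess'' is false for two reasons. First, your candidate-parent set is defined to include \emph{all} nodes of each branch $B\in\mB^*$, so $a_B$ itself is a candidate parent, not a candidate node, and the same can happen for other $\opt'(r)$ nodes lying on a branch below $a_B$. Second, the filter ``children whose subtrees lie entirely outside $V_{\mT'}$'' is too strict: $\subtree(a_B)$ contains the rest of branch $B$ (and any descendant branches of $\mT'$), so $a_B$ fails the filter, and any $a_i^*\in\opt'(r)$ that happens to be an ancestor of some near-anchor outside $A^*$ fails it too. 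The filter is also unnecessary for incomparability: the set $\anc\bigl(\bigcup_{B\in\mB^*}\{a_B\}\bigr)$ is closed under taking ancestors, so the collection of its children that are not themselves in that set (the ``frontier'' just below it, which is what the paper means by $\children(\anc(\cdot))$) is automatically pairwise incomparable, and it contains the \emph{collapsed} solution: each $a_B$ is a child of $\mathrm{parent}(a_B)\in\anc(a_B)$ and is not an ancestor of any $a_{B'}$, and every un-collapsed node has its parent strictly above $a_B$ for some $B$. Once you add the collapsing step and replace your candidate-node definition with this frontier, the rest of your external-case argument (one \pvc call per guess of $\mB^*$, with slack $\delta(1+\e_1)(k-1)\minkut$) goes through and matches the paper.
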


\begin{proof}
  \emph{Case I: internal edges $\ge\frac12\ell'$.} For each branch
  $B\in\mB$ and each $s\in [k-1]$, compute a solution of $s$ nodes that
  maximizes the number of internal edges \textit{within branch
    $\mB$}, in the same manner as in
  Claim~\ref{claim:singleBranch}; this takes time $O(k^2n \cdot (n^2 + f(k) \cdot  g(n)))$. Finally, guess all possible $\le
  2^{2k-3}$ subsets of incomparable branches; for each subset
  $\mB'\subseteq\mB$, try all vectors $\mathbf i \in [k-1]^{\mB'}$ with
  $\sum_{B\in\mB'} i_B \le k-1$, look up the solution using $i_B$
  vertices in branch $B$, and sum up the total number of internal
  edges. Actually, trying all vectors $\mathbf i \in [k-1]^{\mB'}$ takes $k^{O(k)}$ time, but we can speed up this step to $\poly(k)$ time using dynamic programming. Since one
  of the guesses $\mB'$ will be $\mB^*$, the best solution will save at
  $\ge\frac12\ell'(r)-\delta(1+\e_1)(k-1)\minkut$ edges. The total running time for this case is $O(k^2 \cdot f(k) \cdot g(n) + 2^{2k}\cdot\poly(k))$.

  \emph{Case II: external edges $\ge\frac12\ell'$.} Again, we guess the
  set $\mB^*\subset\mB$ of incomparable branches containing minimal
  anchors $A^*$. For a branch $B \in \mB^*$, let $a_B:=(a \in B : B
  \setminus a \sse \desc(a))$ be the ``highest'' node in $B$, that is an
  ancestor of every other node in $B$. For each branch, we can replace
  all nodes in $\opt'(r)$ that are descendants of $a_B$ with just $a_B$;
  doing can only increase the number of external edges. The new solution
  has all nodes contained in the set
  \[ \children\bigg(\anc\bigg(\bigcup_{B\in\mB^*}\{a_B\}\bigg)\bigg), \]
  which is a set of incomparable nodes. Therefore, we can construct the
  graph of Lemma~\ref{lem:compute-anchors} and use the \pvc-based
  algorithm with this node set instead. This gives a solution with $\ge
  \frac12\ell'(r)-\delta(1+\e_1)(k-1)\minkut$ saved edges. The total running time for this case is $O(2^{2k}\cdot (n^2 + f(k) \cdot g(n)))$.
\end{proof}

\subsection{Combining Things Together}

Putting things together, we conclude with Theorem~\ref{thm:laminar}.  We
refer the reader to Algorithm~\ref{alg:laminar} for the pseudocode of
the entire algorithm.

\begin{proof}[Proof (Theorem~\ref{thm:laminar}).]
  Let the original graph be $G=(V,E,w).$ We compute a $(1+\e_1)$-mincut
  tree $\mT=(V_{\mT},E_{\mT},w_{\mT})$ with mapping $\phi:V\to V_{\mT}$ in time $O(n^3)$, following Theorem~\ref{thm:mincutTreeExistence}.
  Then, by running the two algorithms in Lemma~\ref{lem:incomparable} and
  Lemma~\ref{lem:chains}, we compute a solution with $s \le k-1$
  vertices with $\saved$ value at least
  \begin{align*}
    & \max\left\{
      \frac14\e_3(k-1)\minkut - \delta(1+\e_1)(k-1)\minkut,\
      \frac12\ell'(r) - \delta(1+\e_1)(k-1)\minkut \right\} \\
  = & \max\left\{ \frac14\e_3(k-1)\minkut, \ \frac12\ell'(r) \right\} - \delta(1+\e_1)(k-1)\minkut
  \end{align*}
  for each root $r \in V_{\mT}$ (see
  Algorithm~\ref{alg:laminarRooted}). Using $\max\{p, q\} \geq
  (4p+2q)/6$ and 
  $\ell'(r) \ge \ell^*(r) - \e_3(k-1)\minkut$
  we get a solution with $\saved$ value at least
  \begin{align*}
    & \frac16 \left( 4 \cdot \frac14\e_3(k-1)\minkut + 2 \cdot \frac12 \left[ \ell^*(r) - \e_3(k-1)\minkut \right] \right) - \delta(1+\e_1)(k-1)\minkut \\
    \ge & \frac16\ell^*(r) - 2\delta(k-1)\minkut,
  \end{align*}
  using that $\eps_1 \leq 1$. In particular, the best solution
  $v_1,\ldots,v_s \in V_{\mT}$ over all $r$ satisfies
  \[ \saved(v_1,\ldots,v_s) \ge \frac16 \ell^* - 2\delta(k-1)\minkut ,\]
  where $\ell^*(r)$ was replaced by $\ell^*$.

  Let $v_1,\ldots,v_s \in V_{\mT}$ be our solution with
  $\saved(v_1,\ldots,v_s) \ge \frac16 \ell^* - 2\delta(k-1)\minkut$. Let
  $S_1,\ldots,S_s \subset V$ be the corresponding subsets in $V$, i.e.,
  $S_i := \phi^{-1}(\subtree(v_i))$.  Then, add the complement set
  $S_{s+1}:=V \setminus \bigcup_{i\in[s]}S_i$ to the solution, so that
  the sets $S_i$ partition $V$, and
        \[w(E(S_1,\ldots,S_{s+1})) \le s(1+\e_1)\minkut - \left(\frac16 \ell^* -
    2\delta(k-1)\minkut\right). \] Then, extend the solution to a
        $k$-partition using Algorithm~\ref{alg:complete}. We now claim that every additional
        cut that Algorithm~\ref{alg:complete} makes is a $(1+\e_1)$-mincut.
        To see this, observe that $S_1^*, \ldots, S_{k-1}^*$ are all $(1+\e_1)$-mincuts and one
        of them, say $S_j^*$, has to intersect some $S_i$.  Then, the cut $(S_i \cap S_j^*, S_i \setminus S_j^*)$
        is a $(1+\e_1)$-mincut in $S_i$. We can repeat this argument as long as we have $< k$ components $S_i$.
        
        At the end, we have a
  solution $S_1', \ldots, S_k'$ satisfying
  \begin{align*}
          w(E(S_1', \ldots, S_k')) & \le w(E(S_1,\ldots,S_s)) +
    (k-1-s)(1+\e_1)\minkut
    \\
    & \le (k-1)(1+\e_1)\minkut- \left(\frac16 \ell^*  -
      2\delta(k-1)\minkut\right)
  \end{align*}
  Let $S^*_1,\ldots,S^*_k$ be the optimal partition in $\solns_{\e_1}$
  satisfying $\phi(r) \in S^*_k$, and let $\ell^*$ be the maximum of
  $\saved(v_1^*,\ldots,v_{k-1}^*)$ over incomparable
  $v_1^*,\ldots,v_{k-1}^*$. Our solution has approximation ratio
  \begin{align*}
          \frac{w(E(S_1,\ldots,S_k))}{w(E(S_1^*,\ldots,S_k^*))}  & \le
    \frac{(k-1)(1+\e_1)\minkut - \frac16\ell^* +
      2\delta(k-1)\minkut}{(k-1)\minkut - \ell^*} \\ 
    & = \frac{(k-1)(1+\e_1)\minkut - \frac16\ell^* }{(k-1)\minkut -
      \ell^*} + \frac{ 2\delta(k-1)\minkut}{(k-1)\minkut - \ell^*} \\ 
    & \le 2(1+\e_1) - \frac16 + 4\delta,
  \end{align*}
  with the worst case achieved at $\ell^*=\frac12(k-1)\minkut$, which is
  the highest $\ell^*$ can be. Setting $\e_2:=1/6 - 2\e_1-4\delta$
  concludes the proof.
  
  As for running time, we run the algorithms in Lemma~\ref{lem:incomparable} and Lemma~\ref{lem:chains} sequentially, and the final running time is $\tilde 2^{O(k)}f(k)(\tilde O(n^4) + g(n))$. (The $\tilde O(n^4)$ comes from the case when $k\le 4$, in which we solve the problem exactly in $\tilde O(n^4)$ time.)
\end{proof}


\newcommand{\Wext}{\mathsf{Wdeg}}

\section{An FPT-AS for \pvclong}
\label{sec:partial-vc}

Recall the \pvclong (\pvc) problem: the input is a graph $G = (V,E)$
with edge and vertex weights, and an integer $k$. For a set $S$, define
$E_S$ to be the set of edges with at least one endpoint in $S$. The goal
of the problem is to find a set $S$ with size $|S| = k$, minimizing the
weight $w(E_S) + w(S)$, i.e., the weight of all edges hitting $S$ plus
the weight of all vertices in $S$. Our main theorem is the following.
\begin{theorem}[\pvclong]
  \label{thm:pvc}
  There is a randomized algorithm for \pvc on weighted graphs that, for
  any $\delta \in (0,1)$, runs in  $O(2^{k^6/\delta^3} (m +
  k^8/\delta^3)\,n \log n)$ time and outputs a
  $(1+\delta)$-approximation to \pvc with probability $1 - 1/\poly(n)$.
\end{theorem}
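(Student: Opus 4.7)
The starting identity is
\[ \text{cost}(S) := w(S) + w(E_S) \;=\; \sum_{v\in S} d(v) \;-\; w(E(S,S)), \]
where $d(v) := w(v) + w(\partial v)$. Evaluated at the optimum $S^*$, this gives $\sum_{v\in S^*} d(v) = \Opt + w(E(S^*,S^*)) \in [\Opt, 2\Opt]$. Hence picking the $k$ vertices of smallest $d$-value is already a $2$-approximation of value $V_0 \in [\Opt, 2\Opt]$. To beat the factor $2$ one must locate $k$ vertices whose induced subgraph carries large internal edge mass, since that is precisely the slack between $\sum_{v\in S} d(v)$ and $\text{cost}(S)$. Morally this is a min-cost densest-$k$-subgraph instance on the candidate low-degree vertices, so I would follow the randomized color-coding philosophy of Marx's FPT-AS for the maximization variant.

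The plan has four ingredients. (i) Binary-search a target value $\hat C$ for $\Opt$ on a geometric grid of size $O((\log n)/\delta)$. (ii) Apply Markov's inequality to $\sum_{v\in S^*} d(v) \le 2\hat C$: all but at most $\delta k$ vertices of $S^*$ lie in the ``light'' set $L := \{v : d(v) \le 2\hat C/(\delta k)\}$, and the remaining heavy ones still satisfy $d(v) \le 2\hat C$. (iii) Randomly color the vertices with $c = \poly(k/\delta)$ colors, repeated $\poly(n)$ times, so that in some trial $S^*$ becomes rainbow (all $k$ distinct colors) with probability $\ge 1 - 1/\poly(n)$. (iv) For each successful coloring, enumerate a \emph{configuration} of $S^*$: the $k$ colors it uses, their split into heavy/light, and an appropriately quantized profile of the $\binom{k}{2}$ pairwise internal-edge weights of $S^*$ across the selected color classes. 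The quantization must be fine enough that rounding errors aggregate to at most $\delta\Opt$, yet coarse enough to keep the configuration count at $2^{\poly(k/\delta)}$; balancing these is what ultimately sets the exponent $k^6/\delta^3$.

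For a fixed configuration I would decouple the selection across the $k$ color slots: for each slot, independently pick the vertex from its color class that best matches the prescribed $d$-contribution and the prescribed internal-edge weights to the other (committed) slots, reducing each slot to a local weighted-matching/greedy subproblem in $\tilde O(m + k^{O(1)}/\delta^{O(1)})$ time. The main obstacle I anticipate is controlling additive error propagation across slots: the per-slot subproblem is only approximately solvable, to within the quantization granularity, and one must argue that errors aggregate \emph{additively} across the $k$ slots and $\binom{k}{2}$ pairs rather than compounding, so the total approximation loss remains within $\delta\Opt$. A secondary but more routine point is the success analysis of the color-coding (via a union bound over the polynomially many independent trials). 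Once these are in place, the overall running time is $O((\log n)/\delta) \cdot \poly(n) \cdot 2^{k^6/\delta^3} \cdot (m + k^8/\delta^3)$, which absorbs into the stated bound.
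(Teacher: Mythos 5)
The key identity you start from is the same one the paper uses, and the high-level color-coding philosophy is right, but your per-slot decoupling has a fundamental flaw that the paper's construction is specifically designed to avoid. You propose to guess a quantized $\binom{k}{2}$-profile of the pairwise internal-edge weights of $S^*$ and then, slot by slot, ``independently pick the vertex from its color class that best matches the prescribed $d$-contribution and the prescribed internal-edge weights to the other (committed) slots.'' This cannot work: an internal edge weight $w(v_i,v_j)$ is a property of the \emph{pair} $(v_i,v_j)$, not of $v_i$ alone. When you choose $v_i$, you cannot check whether it achieves the prescribed weights $W_{ij}$ to slots $j$ not yet committed, because those slots do not yet contain a vertex. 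Committing sequentially does not help either: the ``right'' $v_i$ depends on the future choices of $v_j$ for $j>i$, so a greedy per-slot rule has no approximation guarantee, and enumerating all completions is exponential in $n$, not FPT. This is a missing idea, not a loose end about error aggregation.

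The paper circumvents exactly this pairwise-interaction obstacle by first reducing to bounded integer weights (via guessing the vertex $v^*\in S^*$ of largest extended weighted degree, folding small edges into vertex weights, attaching vertex weights as edges to a dummy $p$, and discretizing), and then using a \emph{two}-color scheme rather than rainbow $k$-coloring: color red/blue i.i.d., and hope that all of $S^*$ is red and all of $N(S^*)\setminus S^*$ is blue. The crucial structural consequence is that $S^*$ is then a disjoint union of maximal red components with \emph{no edges between them}, so the cost of any union of red components is \emph{additive} over components. That additivity is what permits a clean knapsack-style DP over component ``types'' $(s,c)$ with total size $k$ and minimum total cost, in $O(m+k^2\tau)$ time; no pairwise interactions ever need to be guessed. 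When $\Opt$ is large (at least $\tau=Mk^2/\delta$), the paper does exactly what you observe at the start — pick the $k$ vertices of smallest weighted degree — and shows the additive error $M\binom k2$ is already within a $(1+\delta)$ factor, so color-coding is only needed in the small-$\Opt$ regime, where the number of edges incident to $S^*$ is at most $\tau$ and the coloring succeeds with probability $2^{-(\tau+k)}$. Your heavy/light Markov split and geometric grid over $\Opt$ are unnecessary detours once this two-case structure is in place. If you want to pursue a rainbow-coloring route, you would still need some mechanism that makes the objective decompose additively over classes, which is precisely what the paper's red-component trick supplies and what your profile-guessing step does not.
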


We first extend a result of Marx~\cite{Marx07} to give a
$(1+\delta)$-approximation algorithm for the case where $G$ has edge
weights being integers in $\{1, \ldots, M\}$ and no vertex weights, and
then show how to reduce the general case to this special case, losing
only another $(1+\delta)$-factor.

\subsection{Graphs with Bounded Weights}

\begin{lemma}
  \label{lem:pvc-simple}
  Let $\delta \leq 1$. There is a randomized algorithm for the \pvc
  problem on simple graphs with edge weights in $\{1, \ldots, M\}$ (and
  no vertex weights) that runs in $O(m+Mk^4/\delta)$ time, and outputs
a  $(1+\delta)$-approximation with probability at least
  $2^{-(Mk^2/\delta)}$.
\end{lemma}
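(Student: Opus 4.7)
The plan is to handle the lemma via a dichotomy on the size of the optimum $\tau^* := w(E_{S^*})$, with threshold $Mk^2/\delta$. In the \emph{large} case $\tau^* \ge Mk^2/\delta$, I would run a deterministic greedy algorithm: compute weighted degrees $d(v)$ in $O(m)$ time and output the $k$ vertices $S$ of smallest weighted degree. From the identity $w(E_S) = \sum_{v \in S} d(v) - w(E(S,S))$ together with $\sum_{v \in S} d(v) \le \sum_{v \in S^*} d(v) = \tau^* + w(E(S^*,S^*))$ and the bound $w(E(S^*,S^*)) \le M \binom{k}{2} \le Mk^2/2$ (coming from the fact that $G$ is simple with edge weights in $\{1,\ldots,M\}$), one obtains $w(E_S) \le (1+\delta/2)\tau^*$. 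This case is fully deterministic and contributes the $O(m)$ term to the runtime.

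In the complementary \emph{small} case $\tau^* < Mk^2/\delta$, the integer-weight hypothesis gives $|E_{S^*}| \le \tau^* < Mk^2/\delta$, so only a bounded number of edges actually touch $S^*$. Following Marx's color-coding idea from the maximization version, I would independently and uniformly \emph{orient} every edge $e \in E$ toward one of its two endpoints; let $\phi(e)$ denote the chosen head. Call a run \emph{good} if $\phi(e) \in S^*$ for every $e \in E_{S^*}$. Since the orientations are independent fair coin flips, $\Pr[\text{good}] \ge 2^{-|E_{S^*}|} \ge 2^{-Mk^2/\delta}$, which matches the target probability. Conditioned on a good run, every $v \in S^*$ is the head of each of its incident $E_{S^*}$-edges, so the orientation carries enough information to identify $S^*$ (up to the $(1+\delta)$ slack) by a downstream search procedure taking $O(Mk^4/\delta)$ time on the vertex pool of "compatible" heads. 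The final algorithm runs both branches and returns the cheaper solution.

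The main obstacle is precisely this recovery step in the small-$\tau^*$ case. A good orientation only asserts that each vertex of $S^*$ is compatible with the orientations of its incident $E_{S^*}$-edges, but many non-$S^*$ vertices will be compatible with their own random orientations too, so a naive enumeration over $k$-subsets of compatible vertices blows up the runtime. I expect to need an additional structural reduction in the spirit of Marx's technique for the maximization version---e.g., bucketing edges by weight into $O(\log M)$ geometric classes and arguing that only the heaviest "in-buckets" at each candidate vertex matter, or using the orientation to bootstrap an iterative branching procedure whose total work depends polynomially only on $M$, $k$, and $1/\delta$. Only with such a trimming can the claimed runtime $O(m + Mk^4/\delta)$ and success probability $2^{-Mk^2/\delta}$ be simultaneously attained; the subsequent reduction of general (non-integer, vertex-weighted) \pvc to this bounded-weight simple-graph case in the full theorem is then standard rounding/scaling and independent repetitions for amplification.
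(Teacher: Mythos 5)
Your \emph{large case} (optimum $\ge Mk^2/\delta$) matches the paper exactly: compute weighted degrees, take the $k$ smallest, and bound the loss by $M\binom{k}{2} \le (\delta/2)\cdot\mathrm{OPT}$. That part is fine and deterministic.

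Your \emph{small case} diverges, and you have correctly sensed that it does not close. Randomly orienting each edge toward one endpoint, and declaring the run good when every edge of $E_{S^*}$ points into $S^*$, does give the right probability bound, but the conditioning buys you almost nothing structurally: edges with both endpoints outside $S^*$ are completely unconstrained, so a good orientation does not distinguish vertices of $S^*$ from arbitrary vertices that, by luck, absorb all of their incident edges. There is no component structure or localization to exploit, and as you note, naive enumeration over $k$-subsets of ``all-heads'' vertices is hopeless. The ``bucketing by weight'' or ``iterative branching'' fixes you sketch are not supplied and are not the missing ingredient.

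The paper instead colors \emph{vertices} red/blue uniformly at random, and the good event is: all of $S^*$ is red \emph{and} all of $N(S^*)\setminus S^*$ is blue. Since $|S^*| = k$ and $|N(S^*)\setminus S^*| \le |E_{S^*}| \le \tau$ (each edge weight is $\ge 1$), this event has probability $\ge 2^{-(\tau+k)}$. The crucial payoff is that, conditioned on the good event, deleting blue vertices makes $S^*$ a union of connected ``red components.'' One can then enumerate red components in linear time, record each one's (size, cost) pair, and run a size-$k$/min-cost knapsack DP over these types to recover an optimal set; no approximation loss is incurred in this branch. This component structure is exactly what your edge-orientation event fails to provide, and it is the key missing idea in your proposal.
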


\begin{proof}
  This is a simple extension of a result for the
  maximization case given by Marx~\cite{Marx07}. We give two algorithms: one for the case when the
  optimal value is smaller than $\tau := Mk^2/\delta$ (which returns the
  correct solution in time, but with probability $2^{-(Mk^2/\delta)}$),
  and another for the case of the optimal value being at least $\tau$
  (which deterministically returns a $(1+\delta)$-approximation in
  linear time). We run both and return the better of the two solutions.

  First, the case when the optimal value is at least $\tau$. Let the
  \emph{weighted degree} of a node $v$, denoted $w(\partial v)$ be
  defined as $\sum_{e: v \in e} w(e)$. Observe that for
  any set $S$ with $|S| \leq k$,
  \[ 0 \leq \sum_{v \in S} w(\partial v) - w(E_S) \leq M\cdot
  \binom{k}{2}. \] Hence, if $S^*$ is the optimal solution and
  $w(E_{S^*}) \geq \tau$, then picking the set of $k$ vertices with the
  least weighted degrees is a $(1+\delta)$-approximation.

  Now for the case when the optimal value is at most $\tau$. In this case,
  the optimal set $S^*$ can have at most $\tau$ edges incident to it, since
  each edge must have weight at least $1$.  Consider the color-coding
  scheme where we independently and uniformly colors the vertices of $G$
  with two colors (red and blue). With probability $2^{-(\tau+k)}$, all the
  vertices in $S^*$ are colored red, and all the vertices in $N(S^*)
  \setminus S^*$ are colored blue. Consider the ``red components'' in
  the graph obtained by deleting the blue vertices. Then $S^*$ is the
  union of one or more of these red components. To find it, define the
  ``size'' of a red component $C$ as the number of vertices in it, and
  the ``cost'' as the total weight of edges in $G$ that are incident to
  it (i.e., cost $= \sum_{e \in E: e \cap C \neq \emptyset} w(e)$.) 

  Now we can use dynamic programming to find a collection of red
  components with total size equal to $k$ and minimum total cost: this
  gives us $S$ (or some other solution of equal cost). Indeed, if we
  define the ``type'' of each component to be the tuple $(s,c)$ where $s
  \in [1\ldots k]$ is the size (we can drop components of size greater
  than $k$) and $c \in [1\ldots \tau]$ is the cost (we can drop all
  components of greater cost). Let $T(s,c)$ be the number of copies of
  type $(s,c)$, capped at $k$. Assume the types are numbered $\tau_1,
  \tau_2, \ldots, \tau_{k\tau}$. Now if $C(i,j)$ is the minimum cost we can
  have with components of type $\leq \tau_i = (s,c)$ whose total size is
  $j$, then
  \[ C(i,j) = \min_{0 \leq \ell \leq T(s,c)} C(i-1, j - \ell s) + \ell
  c. \] Finally, we return the component achieving $C(k\tau,k)$. This can
  all be done in $O(m + k^2\tau)$ time.
\end{proof}

Repeating the algorithm $O(2^{\tau + k} \log n) = O(2^{Mk^2/\delta + k} \log n)$ times and outputting
the best set found in these repetitions gives an algorithm that finds a
$(1+\delta)$-approximation with probability $1 - 1/\poly(n)$.

\subsection{Solving The General Case}

We now reduce the general \pvc problem, where we have no bounds on the
edge weights (and we have vertex weights), to the special case from the
previous section. 

The idea is simple: given a graph $G = (V,E)$ with edge and vertex
weights, we construct a collection of $|V|$ simple graphs $\{ H_v \}_{v
  \in V}$, each defined on the vertex set $V$ plus a couple new nodes,
and having $O(|V|+|E|)$ edges, with each edge-weight $w'(e)$ being an
integer in $\{1, \ldots, M\}$ and $M = O(k/\delta)^2$, and with no vertex
weights. We find a $(1+\delta/2)$-approximate \pvc solution on each $H_v$,
and then output the set $S$ which has the smallest weight (in $G$) among
these. We show how to ensure that $S \sse V$ and that it is a
$(1+\delta)$-approximation of the optimal solution in $G$.

\begin{proof}[Proof of Theorem~\ref{thm:pvc}]
  Let $S^*$ be an optimal solution on $G$. Define the \emph{extended
    weighted degree} of a vertex $v$, denoted by $\Wext(v)$, to be its
  vertex weight plus the weight of all edges adjacent to it.  I.e.,
  $\Wext(v) := w(v) + w(\partial v)$.
        
  Firstly, assume we know a vertex $v^* \in S^*$ with the largest
  $\Wext(v^*)$; we just enumerate over all vertices to find this vertex.
  We now proceed to construct the graph $H_{v^*}$.  Let $L =
  \Wext(v^*)$, and delete all vertices $u$ with $\Wext(u) > L$. Note
  that (a)~any solution containing $v^*$ has total weight at least $L$,
  and (b)~each remaining edge and vertex has weight $\leq L$.

  Assume that $G$ is simple, since we can combine parallel edges
  together by summing their weights.  Create two new vertices $p, q$,
  and add an edge of weight $L k^2$ between them; this ensures that
  neither of these vertices is ever chosen in any near-optimal
  solution.

  Let $\delta' > 0$ be a parameter to be fixed later; think of $\delta'
  \approx \delta$. For each edge $e = (u,v)$ in the edge set $E$ that
  has weight $w(e) < L\delta'/k^2$, remove this edge and add its weight
  $w(e)$ to the weight of both its endpoints $u,v$. Finally, when there
  are no more edges with $w(e) < L\delta'/k^2$, for each vertex $u$ in
  $V$, create a new edge $\{u,p\}$ with weight being equal to the
  current vertex weight $w(u)$, and zero out the vertex weight.  Let the
  new edge set be denoted by $E'$. We claim that for any set $S
  \subseteq V$ of size $\le k$, 
  \[ \left( \sum_{e \in E': e \cap S} w(e)\right) - \left( \sum_{e \in
      E: e \cap S} w(e) + \sum_{v \in S} w(v) \right) \leq \delta' L. \]
  Indeed, the only change comes because of edges with weight $w(e) <
  L\delta'/k^2$ and with both endpoints within $S$---these edges
  contributed once earlier, but replacing them by the two edges means we
  now count them twice. Since there are at most $\binom{k}{2}$ such
  edges, they can add at most $\delta' L$.

  At this point, all edges in the original edge set $E$ have weights in
  $[L \delta'/k^2, Lk^2]$; the only edges potentially having weights $<
  L \delta'/k^2$ are those between vertices and the new vertex $p$. For
  any such edge with weight $< L \delta'/k$, we delete the edge. This
  again changes the optimal solution by at most an additive $L \delta'$,
  and ensure all edges in the new graph have weights in $[L \delta'/k^2,
  Lk^2]$.  Note that since the optimal solution has value at least $L$
  by our guess, these additive changes of $L \delta'$ to the optimal
  solution mean a multiplicative change of only $(1+\delta')$.

  Finally, discretize the edge weights by rounding each edge weight to
  the closest integer multiple of $L \delta'^2/k^2$. Since each edge
  weight $\geq L \delta'/k^2$, each edge weight incurs a further
  multiplicative error at most $1+\delta'$. Note that $M =
  k^4/\delta'^2$. Now use Lemma~\ref{lem:pvc-simple} to get a
  $(1+\delta')$-approximation for \pvc on this instance with high
  probability.   Setting $\delta' = O(\delta)$ ensures that this solution
  is within a factor $(1+\delta)$ of that in $G$.
\end{proof}



\section{Conclusion and Open Problems}
\label{sec:conclusion}
Putting the sections together, we conclude with a proof of our main theorem.

\begin{proof}[Proof of Theorem~\ref{thm:kcut-main}]
  Fix some $\delta \in (0,1/24)$. By Theorem~\ref{thm:pvc}, there is a
  $(1+\delta)$-approximation algorithm for $\PartialVC k$ running in
  time $O(2^{k^6/\delta^3} (m + k^8/\delta^3)\,n \log n) = 2^{O(k^6)}
  n^4$ time. Plugging in $f(k) := 2^{O(k^6)}$ and $g(n) := n^4$ into
  Theorem~\ref{thm:laminar}, we get a $(2-\e_2)$-approximation algorithm
  to $\Laminarcut k{\e_1}$ in time $2^{O(k)}f(k)(n^3 + g(n)) =
  2^{O(k^6)} n^4$, for a fixed $\e_1 \in (0,1/6-4\delta)$. Plugging in
  $f(k):=2^{O(k^6)}$ and $g(n):=n^4$ into Theorem~\ref{thm:reduction1}
  gives a $(2-\e_3)$-approximation for $\kcut$ in time $2^{O(k^2 \log k)} \cdot
  f(k) \cdot (n^4 \log^3 n + g(n)) = 2^{O(k^6)} n^4 \log^3 n$.

  Finally, for our approximation factor. Theorem~\ref{thm:laminar} sets
  $\e_2:=1/6 - 2\e_1 - 4\delta$ for any small enough $\delta$. We can
  take $\e_1$ and $\e_2$ to be equal, so that $\e_1 = \e_2 = 1/18 -
  \nicefrac43\cdot\delta$. Finally, setting
  $\e_4=\e_5=\min(\e_1,\e_2)/3$ and $\e_3:=\e_4^2$ in
  Theorem~\ref{thm:reduction1} gives $\e_3 = 1/54^2 - \delta'$ for some
  arbitrarily small $\delta'>0$. In other words, our approximation
  factor is $2 - 1/54^2 + \delta'$, or $1.9997$ for an appropriately
  small $\delta'$.
\end{proof}

Our result
combines ideas from approximation algorithms and FPT algorithms and
shows that considering both settings simultaneously can help bypass
lower bounds in each individual setting, namely the $W[1]$-hardness of
an exact FPT algorithm and the SSE-hardness of a polynomial-time
$(2-\e)$-approximation. While our improvement is quantitatively modest,
we hope it will prove qualitatively significant.  Indeed, we hope these
and other ideas will help resolve whether an $(1+\eps)$-approximation
algorithm exists in FPT time, and to show a matching lower and upper
bound.

\paragraph{Acknowledgments.} We thank Marek Cygan for generously giving
his time to many valuable discussions.

{\small \bibliographystyle{alpha}
\bibliography{refs}}

\appendix

\section{Pseudocode for \Laminarcut{$k$}{$\eps_1$}}
\label{sec:pseudocode-laminar}

\begin{algorithm}
\caption{SubtreePartialVC$(G,\mT,A,s,\delta)$}
\label{alg:subtreePVC}
\begin{algorithmic}

\If {$|A| < s$}
\State \Return None
\EndIf

\For {$a \in A$}
\State $C_a \gets V(a) \cup \displaystyle\bigcup\limits_{a' \in \desc(a)}V(a')$
\EndFor \Comment{\textbf{Assert}: $C_a$ are all disjoint}

\

\State $\mathcal C \gets \{ C_a : a \in A\}$

\State $H \gets \text{Contract}(G, \mathcal C)$ \Comment{For each $C_a \in \mathcal C$, contract all vertices in $C_a$ into a single vertex in $H$}

\

\For {$i \in [k-1]$}
  \State $P_{i} \gets \text{PartialVC}(H,i)$   \Comment{$P_{i} \in V(H)^i$}
  \State $\mS_i \gets \text{Expand}(H,P_i)$ \Comment { \parbox[t]{.5\linewidth}{ Map each $v \in P_i$ to the set of vertices in $V$ which contract to $v$ in $H$, and call the result $\mS_i \in \left(2^V\right)^i$ } }
\EndFor
\State \Return $\{\mS_{i} : i \in [s]\}$

\end{algorithmic}
\end{algorithm}

\begin{algorithm}
\caption{SingleBranch$(G,\mT,B,k,\delta)$}
\label{alg:singleBranch}
\begin{algorithmic}
\For{$a \in B$}
\State $\Record(\text{SubtreePartialVC}(G, \mT, \children \left((\{a\} \cup \anc(a)) \cap B \right), k-1, \delta))$
\EndFor
\State Return the best recorded solution $\{v_1, \ldots, v_{k-1}\} \in V_{\mT}$.
\end{algorithmic}
\end{algorithm}

\begin{algorithm}
\caption{Laminar$(G=(V,E,w),\mT,k,\eps_1,\delta)$}
\label{alg:laminar}
\begin{algorithmic}

\State $\mT=(V_{\mT},E_{\mT},w_{\mT}) \gets \text{MincutTree}(G)$.
\For{$r \in V_{\mT}$}
\State Root $\mT$ at $r$.
\State $\Record(\text{LaminarRooted}(G,\mT,r,k,\e_1,\delta))$
\EndFor
\State Return the best recorded $k$-partition.

\end{algorithmic}
\end{algorithm}

\begin{algorithm}
\caption{LaminarRooted$(G=(V,E,w),\mT,r,k,\delta_1,\delta)$}
\label{alg:laminarRooted}
\begin{algorithmic}
\For {$a \in V(\mT)$}
  \State $\{S_{a,i} : i \in [k-1]\} \gets \text{SubtreePartialVC}(G, \mT, \children(a), k-1, \delta)$ \Comment{$S_{a,i} \in \left(2^V\right)^i$}
\EndFor

\

\State $A \gets \emptyset$ \Comment{$A \subset V(\mT) \times [k]$ is the set of \textit{anchors}}
\For {$a \in V(\mT)$ in topological order from leaf to root}
  \State $\e_3 \gets \frac{1-\delta}4-2\e_1$ \Comment {The optimal value of $\e_3$}
  \State $I_a \gets \{i \in [k-1] : \text{Value}(P_{a,i}) \ge \e_3(1-\delta)(i-1)\minkut \}$
  \If {$I_a \ne \emptyset$ \textbf{and} $\nexists (a',i) \in A : a' \in \desc(a)$} \Comment{Only take \textit{minimal} anchors}
    \State $A \gets A \cup \{(a, \max I_a)\}$
  \EndIf
\EndFor

\

\If {$|A| \ge k-1$} \Comment{\textbf{Case (K)}: Knapsack}
  \State $A' \gets \text{Knapsack}(A)$ \Comment{The Knapsack algorithm as described in Lemma~\ref{lem:incomparable}}
  \State $\mS \gets \displaystyle\bigcup\limits_{(a,i) \in A} \{S_{a,i}\}$  \Comment{The partition for Case (K), to be computed. \textbf{Assert}: $|S| \le k-1$}
  \State $\Record(\text{Complete}(G,k,\mS))$
\Else
  \State $\mB \gets \text{Branches}(A)$ \Comment{$\mB \subset \left( 2^{V(\mT)} \right)^r$ for some $k-1 \le r \le 2k-3$}

  \

  \For {$B \in \mB$}  \Comment {\textbf{Case (B1)}: Compute branches independently}
    \State $\{P_{B,i} : i \in [k-1]\} \gets \text{SingleBranch}(G, \mT, B, k-1, \delta)$ \Comment{$P_{B,i} \in V^i$}
  \EndFor
        \State $(\mB^*,\mathbf i^*) \gets \argmin\limits_{ \substack {\mB' \subset \mathcal B \text{ incomparable},\\ \mathbf i \in [k-1]^{\mB'} : \ \sum_{B} i_B\ =\ k-1} } \ \displaystyle\sum\limits_{B \in \mB'} w(E(P_{B,i_B}))$ \Comment{Computed by brute force}
  \State $\mS_1 \gets \bigcup_{B \in \mB^*} \{P_{B, i_B}\}$ \Comment{The partition in Case (B1)}
  \State $\Record(\text{Complete}(G,k,\mS_1))$

  \

  \For {$B \in \mB$} \Comment{\textbf{Case (B2)}: Guess the branches with the anchors}
    \State $a_B \gets (a \in B : B \setminus a \subset \desc(a))$ \Comment{$a_B$ is the common ancestor of branch $B$}
  \EndFor
  \For {$\mB' \subset \mB$ s.t.\ $\nexists B_1,B_2\in\mB' : B_1 \subset \desc(B_2)$} \Comment{Subsets whose branches are incomparable}
    \State $A_{\mB'} \gets \children\left(\bigcup_{B \in \mB'} \left( \{a_B\} \cup \anc(a_B) \right) \right)$
    \State $\mS_{2,\mB'} \gets \text{SubtreePartialVC}(G,\mT,A_{\mB'},k-1,\delta)$ \Comment{The partition for $\mB'$ in Case (B2)}
    \State $\Record(\text{Complete}(G,k,\mS_{2,\mB'}))$
  \EndFor
\EndIf

\

\State Return the best recorded $k$-partition.

\end{algorithmic}
\end{algorithm}


\newpage

\section{Missing Proofs}
\label{sec:missing-proofs}

\begin{lemma} \label{lemma:knapsack}
Consider the knapsack instance of $k-1$ items $i \in [k-1]$ where item $i$ has size $s_i \in [2,k-1]$ and value $s_i-1$. There is an algorithm achieving value $\ge (k-1)/4$ for $k \ge 5$, running in $O(k)$ time.
\end{lemma}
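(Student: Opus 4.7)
The plan is a simple 2-case greedy, relying on the fact that every item is already at least half-value-per-size: since $s_i \ge 2$, we have $s_i - 1 \ge s_i/2$. So any feasible packing of total size $T$ has value $\ge T/2$, and it therefore suffices to pack items whose total size is at least $(k-1)/2$.

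\textbf{Case 1: some item $i$ has $s_i \ge (k-1)/2$.} Then output $\{i\}$ by itself. The value is $s_i - 1 \ge (k-1)/2 - 1 = (k-3)/2$, and a one-line calculation shows $(k-3)/2 \ge (k-1)/4$ precisely when $k \ge 5$, which is our assumption. Finding such an $i$ (or certifying that none exists) takes a single pass, $O(k)$ time.

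\textbf{Case 2: every item satisfies $s_i < (k-1)/2$.} Scan the items in arbitrary order, maintaining a running packed total $T$, and add the current item whenever it still fits in the bag of capacity $k-1$. I claim the resulting $T$ satisfies $T \ge (k-1)/2$. Indeed, while $T < (k-1)/2$ the remaining capacity exceeds $(k-1)/2$, which is strictly larger than every $s_i$, so every unpacked item is still addable and we never stop in this regime. The process therefore terminates either (a) because we packed all $k-1$ items, which is impossible since then $T \ge 2(k-1) > k-1$; or (b) because some item $j$ fails to fit, giving $T > (k-1) - s_j > (k-1)/2$. In either reachable case, $T \ge (k-1)/2$, so the value is at least $T/2 \ge (k-1)/4$.

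The two cases together give the claimed approximation, and the algorithm is a one-pass check followed by a one-pass greedy, hence $O(k)$. I don't expect any real obstacle: the only thing to watch is the threshold $k \ge 5$, which is exactly what is needed to convert $(k-3)/2$ into $(k-1)/4$ in Case~1; for $k \le 4$ the lemma is simply not claimed, and the \kcut body anyway dispatches small $k$ with the exact $\widetilde O(n^4)$ algorithm.
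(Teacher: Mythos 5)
Your proof is correct, and it takes a genuinely cleaner route than the paper's. The paper sorts items heaviest-first and splits on whether the greedy's packed total is above or below $k-1-\sqrt{k}$; in the low-total case it then notes that the blocking item (and hence every selected item) has size $>\sqrt{k}$, yielding a value bound of $(1-1/\sqrt{k})(k-1)/2$, and in the high-total case the bound $(k-1-\sqrt{k})/2$ technically dips below $(k-1)/4$ at $k=5$ and is rescued only by integrality of the value. Your approach sidesteps all of that: you split on whether any single item already has size $\ge (k-1)/2$ (one item suffices; the bound $(k-3)/2 \ge (k-1)/4$ is exactly the $k\ge 5$ hypothesis), and otherwise you observe that any arbitrary-order first-fit greedy must reach total size $\ge (k-1)/2$ because while the packed total is below $(k-1)/2$ every item still fits and adding all items would overflow $\sum s_i \ge 2(k-1)$. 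Combined with the value-to-size ratio $s_i-1 \ge s_i/2$, which both proofs use, this gives $(k-1)/4$ directly. Your version needs no sorting, no $\sqrt{k}$ threshold, and no integrality footnote; the paper's version is a slightly sharper bound in both branches but buys nothing here since the target is only $(k-1)/4$. Both run in $O(k)$.
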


\begin{proof}
Consider the greedy
    knapsack solution where we always choose the heaviest item, if still
    possible. Let $A \in [k-1]$ be our solution. If our total size
    $\sum_{i\in A}s_i$ is at least $k - 1 - \sqrt k$, then our value is
    at least $\sum_{i\in A}(s_i-1) \ge \sum_{i\in A}s_i/2 \ge (k-1-\sqrt
    k)/2$.  Otherwise, since we could not fit the next item of size at
    least $\sqrt k$ into our solution, all of our items have size at
    least $\sqrt k$. Furthermore, our total solution size is at least
    $(k-1)/2$, so $\sum_{i \in A}(s_i-1) \ge \sum_{i\in A}(1-1/\sqrt
    k)s_i \ge (1-1/\sqrt k)(k-1)/2$. When $k \ge 5$, the
    value is $\ge (1-1/\sqrt 5)(k-1)/2 \ge (k-1)/4$.
\end{proof}


\end{document}